\newcommand{\id}{\ensuremath{\mathds{1}}}
\renewcommand{\vec}[1]{\boldsymbol{#1}}
\newtheorem{thm}{Theorem}
\newtheorem{cor}[thm]{Corollary}
\newtheorem{obs}[thm]{Observation}
\newtheorem{lem}[thm]{Lemma}
\newtheorem{prem}[thm]{Condition}
\newtheorem{defi}[thm]{Definition}
\newenvironment{manuallemma}[1]{%
  \manuallemmainner
}{\endmanuallemmainner}
\newenvironment{manualtheorem}[1]{%
  \manualtheoreminner
}{\endmanualtheoreminner}
\newenvironment{manualobservation}[1]{%
  \manualobservationinner
}{\endmanualobservationinner}
\begin{document}
\nonfrenchspacing

\title{Transformations of Stabilizer States in Quantum Networks}

\author{Matthias Englbrecht}
\affiliation{Institute for Theoretical Physics, University of Innsbruck, Technikerstraße 21A, 6020 Innsbruck, Austria}

\author{Tristan Kraft}
\affiliation{Institute for Theoretical Physics, University of Innsbruck, Technikerstraße 21A, 6020 Innsbruck, Austria}

\author{Barbara Kraus}
\affiliation{Institute for Theoretical Physics, University of Innsbruck, Technikerstraße 21A, 6020 Innsbruck, Austria}


\begin{abstract}
Stabilizer states and graph states find application in quantum error correction, measurement-based quantum computation and various other concepts in quantum information theory. In this work, we study party-local Clifford (PLC) transformations among stabilizer states. These transformations arise as a physically motivated extension of local operations in quantum networks with access to bipartite entanglement between some of the nodes of the network. First, we show that PLC transformations among graph states are equivalent to a generalization of the well-known local complementation, which describes local Clifford transformations among graph states. Then, we introduce a mathematical framework to study PLC equivalence of stabilizer states, relating it to the classification of tuples of bilinear forms. This framework allows us to study decompositions of stabilizer states into tensor products of indecomposable ones, that is, decompositions into states from the entanglement generating set (EGS). While the EGS is finite up to $3$ parties [Bravyi et al., J. Math. Phys. {\bf 47}, 062106~(2006)], we show that for $4$ and more parties it is an infinite set, even when considering party-local unitary transformations. Moreover, we explicitly compute the EGS for $4$ parties up to $10$ qubits. Finally, we generalize the framework to qudit stabilizer states in prime dimensions not equal to $2$, which allows us to show that the decomposition of qudit stabilizer states into states from the EGS is unique.
\end{abstract}

\maketitle

\section{Introduction}
The pursuit of building large-scale quantum networks to implement a quantum internet~\cite{Ki08,We18} has recently attracted much interest from the experimental and the theoretical perspective~\cite{Ci97,Mu16,Pe13}. In particular, several important building blocks of quantum networks have been demonstrated experimentally~\cite{Du10,Re15}. Such networks open interesting possibilities for advanced quantum information processing tasks, which include distributed quantum computing~\cite{Ci99}, quantum metrology~\cite{Pr18}, and, most prominently, long-distance quantum communication~\cite{Du01}. Beyond future applications, the strong correlations in quantum networks also open new possibilities for fundamental tests of quantum mechanics (cf. Ref.~\cite{Ta21}). As technology progresses it becomes more and more relevant to understand quantum networks from the theoretical point of view. The fundamental building block of these networks are entangled quantum systems. Entanglement itself is quite well understood in the bipartite case, but understanding multipartite entanglement, despite many efforts, still remains to be an extremely challenging task~\cite{Gu09,Ho09,Be06,Sa18}. In particular, our understanding of network entanglement is still quite limited, even in the case of a few qubits only, which is due to its complicated structure. Indeed, recent results demonstrate that its characterization requires new mathematical tools to be explored~\cite{Kr21,Na20,Ha22}.

In this regard, so-called \emph{stabilizer states} constitute an interesting subclass of multipartite entangled pure states. These states are defined as the unique simultaneous eigenstate of a maximal set of commuting Pauli observables, the so-called \emph{stabilizer}~\cite{He04,He06}. The stabilizer formalism was originally introduced by Gottesman in the realm of quantum error correction codes~\cite{Go97}. Nowadays, stabilizer states are known to be important to many aspects of quantum information processing such as, e.g., measurement based quantum computing~\cite{Ra01} or the classical simulation of quantum circuits~\cite{Go98}. \emph{Graph states}~\cite{He04,He06} admit a simple description in terms of mathematical graphs~\footnote{In the context of quantum error correction graph states were first considered in Ref.~\cite{Schl01}.}. It is known that any stabilizer state is equivalent to a graph state under a certain class of local operations, the so-called local \emph{Clifford operations}, which map the set of stabilizer states onto itself. Due to their relevance, as outlined above, gaining a deeper understanding of entanglement in stabilizer states is an important goal.

In this paper we study stabilizer states in quantum networks. Creating bipartite entanglement in quantum networks is considerably less demanding than the distribution of large entangled states. In particular, well connected nodes can share a large amount of bipartite entanglement at basically no cost. This bipartite entanglement can be used, for instance, to implement non-local transformations of stabilizer states via gate teleportation~\cite{Go99}. Here, Clifford operations turn out to be particularly important as they can be implemented deterministically. Thus, well connected nodes having access to the additional bipartite entanglement can be viewed as a single party. We call the operations that can be applied on those parties \emph{party-local Clifford} (PLC) operations. In case each party only holds a single qubit it is known that two stabilizer states are equivalent under local Cliffords if and only if their corresponding graph states are connected by a series of \emph{local complementations}~\cite{Vn04}.

If all qubits of a stabilizer state are grouped into two parties the bipartite stabilizer state can be converted into multiple independent copies of the maximally entangled state $\ket{\phi^+}$, and locally separable states $\ket{0}$ using PLC transformations~\cite{Fa04}. For tripartite stabilizer states it was shown in Ref.~\cite{Br06} that any state can be converted to multiple independent copies of the Greenberger-Horne-Zeilinger (GHZ) state $\ket{GHZ}$ between the three parties, the Bell state $\ket{\phi^+}$ between pairs of parties, and locally separable states $\ket{0}$, see Figure~\ref{fig:fig1}. The fact that these decompositions are unique, even under more general party local unitary (PLU) transformations, motivated the authors of Ref.~\cite{Br06} to call these sets \emph{entanglement generating sets}. These results were later generalized to tripartite qu{\it d}it graph states in Ref.~\cite{Gr11}.

\begin{figure}
	\centering
	\includegraphics[width=0.7\columnwidth]{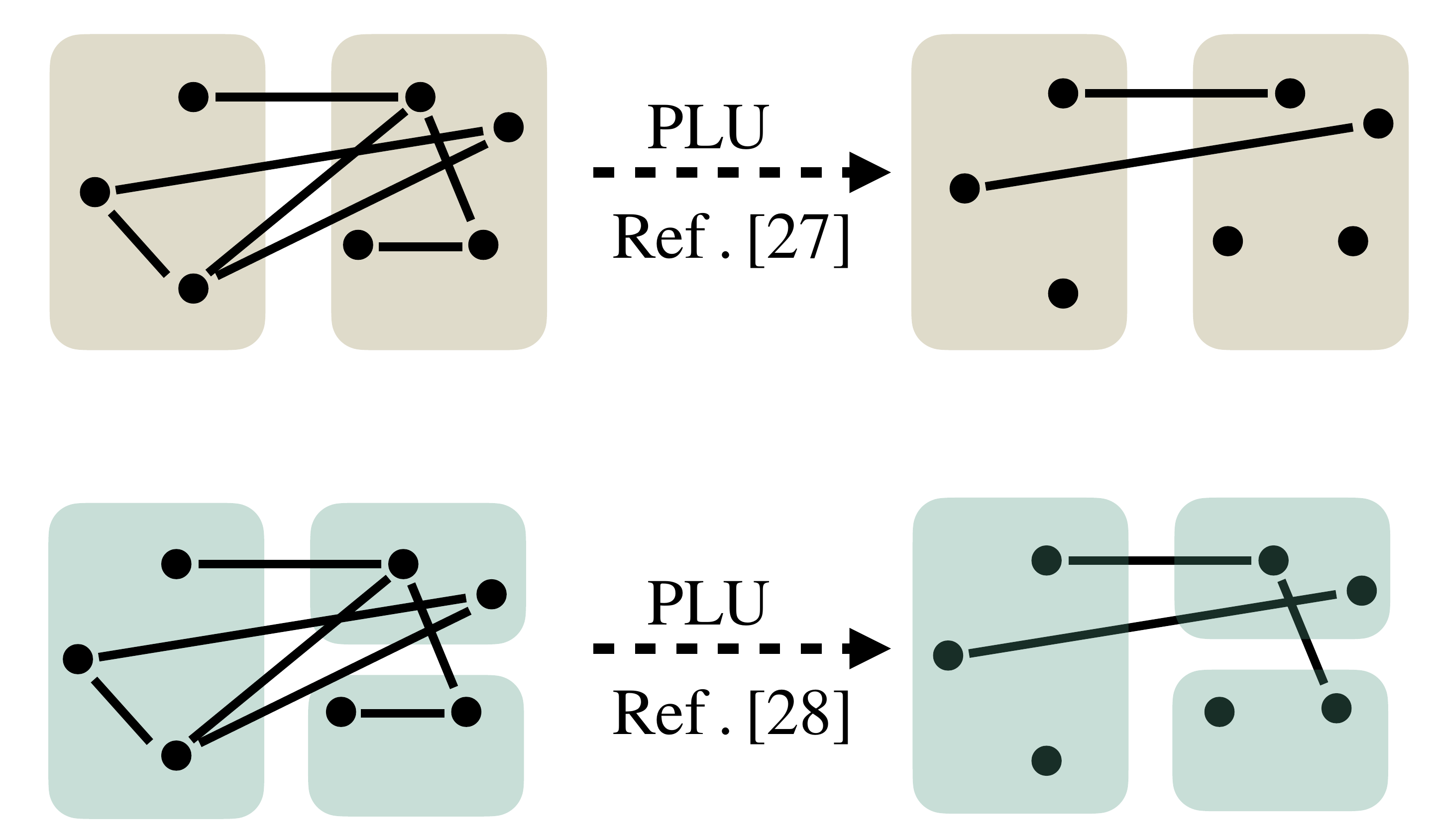}
	\caption{Top: A stabilizer state is distributed over two parties. Under party local unitary transformations the state can be converted to multiple independent copies of the Bell state, and locally separable states. Bottom: The same state, when grouped into three parties, can be decomposed into a GHZ state, a Bell state and locally separable states.\label{fig:fig1}}
\end{figure}

Furthermore, Ref.~\cite{Br06} provides criteria for the PLC equivalence of stabilizer states. In Ref.~\cite{Wi06} homological invariants of stabilizer states under local Clifford operations are studied, and were linked to the GHZ extraction condition of Ref.~\cite{Br06}. The extraction of GHZ states from random stabilizer states was investigated in Refs.~\cite{Le06,Wi08,Ne20}. However, a systematic approach to study the extraction of arbitrary stabilizer states in the multipartite scenario was missing. Putting forward such an approach that allows us to gain a deeper insight into the structure of entanglement in multipartite stabilizer states is precisely the aim of this work.

In this paper we study PLC transformations and state extraction of stabilizer states. First, we introduce notations and definitions relevant in the context of PLC transformations of stabilizer states in Section~\ref{sec_preliminaries}. Then we study how PLC transformations between graph states alter the corresponding graphs in Section~\ref{sec_gen_loc}, generalizing the concept of local complementation. As any stabilizer state is LC equivalent to a graph state this characterizes the action of any PLC operations on an arbitrary stabilizer state. In Section~\ref{sec_invariant_poly} we put forward a complete characterization of PLC equivalence classes based on PLC invariant polynomials which seems computationally not very practical. Then we recall previous works on PLC transformations of stabilizer states (Section \ref{sec_previous_results}). We mainly focus on the results of Ref.~\cite{Br06} which solves the classification of PLC classes for the $3$-partite case and introduces the notion of the entanglement generating set. In Section~\ref{sec_plc_trafo_stab}, we introduce one of our main results, namely, a new mathematical formalism, that we call the \emph{commutation matrix formalism}, that allows us to systematically study PLC transformations and state extraction of stabilizer states shared between an arbitrary number of parties. In this formalism we associate a tuple of alternating bilinear forms to every stabilizer state. We show that the classification of PLC classes is equivalent to the classification of commutation matrices up to simultaneous congruence. This type of problem was studied in linear algebra and we demonstrate how these results can be utilized in the commutation matrix formalism in Sections~\ref{sec_egs} and~\ref{sec_nec_suf_deco}. In particular this leads to necessary and sufficient conditions for a state to be contained in the EGS. In Section~\ref{sec_egs4}, using the commutation matrix formalism, we derive another main result by showing that the EGS$_M$ is an infinite set for $M\geq 4$, even with respect to party-local unitary transformations (PLU). Moreover, the new formalism allows us to compute the EGS$_4$ up to $10$ qubits. Going beyond the EGS$_4$ we provide evidence that the structure of the EGS becomes rapidly more complex. In Section~\ref{sec_qudit_systems} we show that the commutation matrix formalism extends to qudit stabilizer states of prime dimension. As the commutation matrices in this case are defined over the field $\mathbb{Z}_d$ for $d\geq 3$, the results of Ref.~\cite{Se88} imply that every qudit stabilizer states admits a unique decomposition into a tensor product of states in the EGS. Whether or not the uniqueness of the decomposition extends to all qubit stabilizer states is unclear. Finally, in case one considers more general states or operations it happens that the decomposition is not unique anymore, which we discuss in Section~\ref{sec:beyond}.

\section{Party-local operations\label{sec_party-local_operations}}
\subsection{Preliminaries\label{sec_preliminaries}}
Let us introduce some notation and recall some important definitions relevant in the context of PLC transformations of stabilizer states. We denote by $X,Y,Z$ the single-qubit Pauli operators
\begin{equation}
X=\begin{pmatrix}
0&1\\1&0
\end{pmatrix},\ \ Y=\begin{pmatrix}
0&-i\\i&0
\end{pmatrix},\ \ Z=\begin{pmatrix}
1&0\\0&-1
\end{pmatrix},
\end{equation}
and by $\mathcal{P}_n$ the $n$-qubit Pauli group, i.e., the group generated by $n$-fold tensor products of single-qubit Pauli operators. A stabilizer $\mathcal{S}$ is an abelian subgroup of the Pauli group $\mathcal{P}_n$, which does not contain $-\mathds{1}$. A maximal abelian subgroup of $\mathcal{P}_n$ is generated by $n$ independent commuting Pauli operators. The operators of a stabilizer forming a maximal abelian subgroup admit a unique, common $+1$ eigenstate, which is called a stabilizer state. Examples of stabilizer states are the Bell state $\ket{\phi^+}=1/\sqrt{2}(\ket{00}+\ket{11})$ and the $n$-qubit GHZ state $\ket{GHZ}=1/\sqrt{n}(\ket{0\ldots0}+\ket{1\ldots1})$.

It follows from the definition of stabilizer states that the density matrix $\ketbra{\psi}{\psi}$ of a $n$-qubit stabilizer state $\ket{\psi}$ with stabilizer $\mathcal{S}$ can we written as
\begin{equation}\label{eq:stabstate}
    \rho=\ketbra{\psi}{\psi}=\frac{1}{2^{n}}\sum_{s\in\mathcal{S}}s.
\end{equation}
Using this relation together with the fact that Pauli operators are traceless, one finds that the reduced state on a subset of qubits $\alpha$ is given by 
\begin{equation}\label{eq:reducedstate}
    \rho_\alpha=\frac{1}{2^{|\alpha|}}\sum_{s\in \mathcal{S}_\alpha}s,
\end{equation}
where $\mathcal{S}_\alpha$ contains those elements of $\mathcal{S}$ which act trivially on all qubits not in $\alpha$. Here, and in the following, $|\alpha|$ denotes the cardinality of the set $\alpha$. The rank of the reduced state $\varrho_{\alpha}$ depends on how many qubits in $\alpha$ are entangled with qubits outside of $\alpha$, e.g., if all qubits are entangled $\mathcal{S}_{\alpha}$ only contains the identity operator and $\varrho_\alpha$ has full rank.

The unitary normalizer (modulo $U(1)$) of the Pauli group $\mathcal{P}_n$ is called the $n$-qubit Clifford group $\mathcal{C}_n$, i.e., $\mathcal{C}_n=\{U\in U(2^n)|UPU^\dagger\in \mathcal{P}_n\forall P\in\mathcal{P}_n\}\backslash U(1)$. For any $n$, the Clifford group $\mathcal{C}_n$ is generated by single and two qubit gates, e.g.,
\begin{equation}
\begin{split}
    H&=\frac{1}{\sqrt{2}}\begin{pmatrix}
1&1\\1&-1
\end{pmatrix},\ P=\begin{pmatrix}
1&0\\0&i
\end{pmatrix},\\
CZ&=\begin{pmatrix}
1&0\\0&0
\end{pmatrix}\otimes \mathds{1}+\begin{pmatrix}
0&0\\0&1
\end{pmatrix}\otimes Z.
\end{split}
\end{equation}
A subgroup of the Clifford group $\mathcal{C}_n$ is the local Clifford group $\mathcal{C}_n^L$, which consists of those elements of $\mathcal{C}_n$ which can be written as a tensor product of elements of $\mathcal{C}_1$. 

Every stabilizer state is local Clifford (LC) equivalent to a graph state~\cite{Vn04}. A graph state $\ket{G}$ is the unique stabilizer state associated to a simple (no loops, undirected) graph $G=(V,E)$ with vertices $V=\{1,2,\ldots,n\}$ and edges $E\subset \{\{i,j\}|i,j\in V\}$. Its stabilizer $\mathcal{S}_G=\left<g_1,\ldots, g_{n}\right>$ is generated by the \emph{canonical generators}
\begin{equation}\label{eq:generators}
g_i=X_i\bigotimes_{j\in N_i}Z_j,
\end{equation}
where $N_i=\{k\in V|\{k,i\}\in E\}$ is the neighbourhood of vertex $i$ in $G$. A local Clifford transformation which maps a graph state to another graph state corresponds to a sequence of local complementations on the level of the graphs \cite{Vn04}. A local complementation with respect to (w.r.t.) the vertex $j\in E$ removes all edges between neighbours of vertex $j$ and adds all edges between neighbours of vertex $j$, which where previously not present in the graph. That is, it removes the edges in $\{\{k,l\}\in E|k,l\in N_j\}$ and adds all edges in $\{\{k,l\}\not \in E|k,l\in N_j\}$ to the graph $G$. This corresponds to the application of the operator $\exp(-i\pi/4X)$ on qubit $j$, and $\exp(i\pi/4Z)$ on each of its neighbours. Let us mention here that counting the number of local Clifford equivalent graph states is \#P-Complete~\cite{Da20}. Note that the graphical description of local Clifford transformations between graph states has also been studied for qudit systems~\cite{St17,Ri18}. A particularly useful way of representing an $n$-qubit graph state $\ket{G}$ is by its \emph{adjacency matrix} $\Gamma_G\in M_{n\times n}$, defined by
\begin{equation}
(\Gamma_G)_{i,j}=\begin{cases}
1, & \text{if $i$ and $j$ are connected} \\
0, & \, \text{otherwise.}
\end{cases}
\end{equation}
For a graph state $\ket{G}$ any element of its stabilizer $\mathcal{S}$ is a local symmetry. However, the state $\ket{G}$ can have more local symmetries than those in $\mathcal{S}$ (see Ref.~\cite{En20}). For example, consider a graph state $\ket{G}$ where qubit $1$ is only connected to qubit $2$. Then qubit $1$ is called a \emph{leaf}, qubit $2$ is called its \emph{parent} and for any $\alpha\in \mathbb{C}$ the operator
\begin{equation}
    e^{i\alpha X}\otimes e^{-i\alpha Z}\otimes \mathds{1}
\end{equation}
is a local symmetry of $\ket{G}$ (see Ref.~\cite{Zh09}). In Ref.~\cite{En20} some of the authors of the current work studied, among other things, all local symmetries of stabilizer states. While preparing the current work, we became aware of Ref.~\cite{Bo91} (and Ref.~\cite{Bo93}), which studies equivalence of graphs under local complementation. In this context, Ref.~\cite{Bo91} also derives the local Clifford symmetries of graph states.

In this work we are interested in a superset of local Clifford transformations, so called party-local Clifford transformations (PLC). To define these transformations, consider a partition $P(M,n)$ of the set $[n]:=\{1,\ldots n\}$ into $M$ subsets. In the following, we refer to the elements of $P(M,n)$ as parties. We call operators of the form
\begin{equation}
\bigotimes_{\alpha\in P(M,n)} C_{\alpha}\ \text{ with } \ C_{\alpha}\in\mathcal{C}_{|\alpha|},\label{eq_plc}
\end{equation}
party-local Clifford operators and we denote the subgroup generated by these operators by $\mathcal{C}^{P(M,n)}$. Note that $\mathcal{C}^{P(M,n)}\subset \mathcal{C}_n$ and equality only holds for $M=1$. We say that an $n$-qubit stabilizer state is $M$-partite if its qubits are distributed among $M$ parties. In the following we denote by $\text{Stab}(P(M,n))$ the set of $n$-qubit stabilizer states that are distributed according to the partition $P(M,n)$.

As we will discuss in Section~\ref{sec_preliminaries}, the Pauli group can be represented as a vector space over a finite field. We denote by $\mathbb{Z}_d$ the finite field of order $d$ for any prime number $d$. Over this field addition and multiplication are carried out modulo $d$.

\subsection{Motivation\label{sec_motivation}}
In this section we explain our motivation for studying PLC transformation of stabilizer states in quantum networks. We argue that PLC transformations are a physically motivated extension of local operations.

As experimental techniques progress, the distribution of bipartite entanglement between well connected nodes in a quantum network will no longer be an expensive resource. For instance, if two nodes are spatially close they might be considered as well connected. Nevertheless,  two nodes can be not spatially close and well connected if they can establish a large amount of entanglement over long distances, e.g., via a satellite~\cite{Ze18}. One can use this bipartite entanglement to implement non-local gates. This setting effectively breaks the network down into several parties, within which non-local operations are possible.

\begin{figure}
	\centering
	\includegraphics[width=0.9\columnwidth]{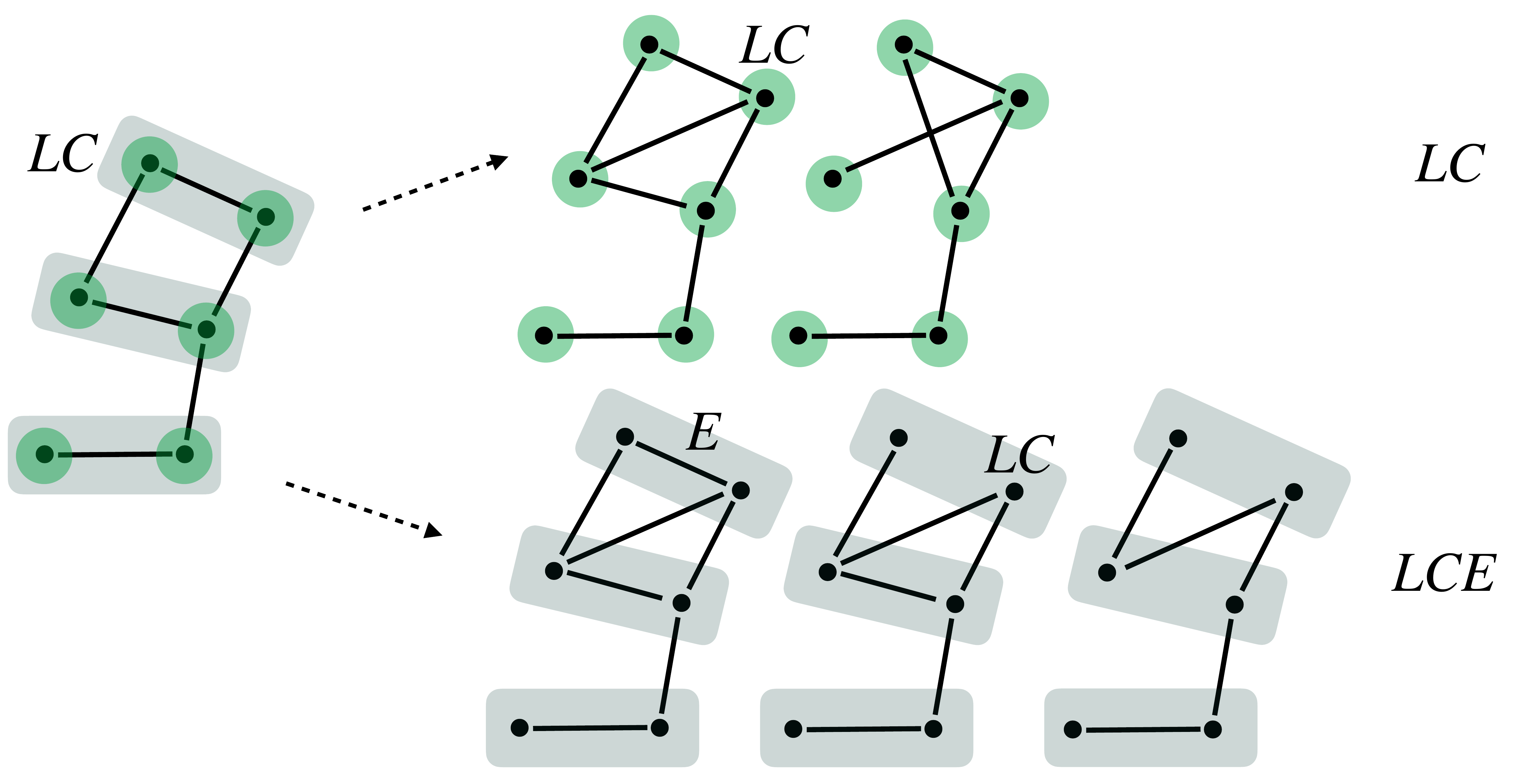}
	\caption{Local complementation (LC) on the parties highlighted in green compared to local complementation and edge removal/addition (LCE) on the parties highlighted in grey.\label{fig_fig_2}}
\end{figure}

Clifford operations play a special role in such a setting as they can be implemented deterministically via gate teleportation (see Refs.~\cite{Go99,Ci01}). Indeed, to apply a non-local gate $U\in\mathcal{C}_2$ to two qubits held by spatially separated parties, we can distribute the Choi state $U\otimes \mathds{1}\ket{\phi^+}^{\otimes 2}$ among the parties. After performing a Bell measurement in each of the parties (each involving one qubit of the Choi state and the one originally held by the party), the state of the two unmeasured qubits is
\begin{equation}
UP\varrho P^\dagger U^\dagger=UPU^\dagger(U\varrho U^\dagger)U P^\dagger U^\dagger,
\end{equation}
where $P\in\mathcal{P}_n$ depends on the measurement result and $\rho$ denotes the initial state of the two qubits originally held by the parties. As Clifford operators map the Pauli group onto itself, local Pauli corrections suffice to invert $UPU^\dagger$. As the Clifford group is generated by single and two-qubit gates (see Section~\ref{sec_preliminaries}), the above well-known construction allows to implement any non-local Clifford gate. For other non-local gates, which are not Clifford gates, the correction $UPU^\dagger$ will not be local in general and thus a deterministic local correction is not possible.

In this work we only consider PLC transformations among special types of states, namely stabilizer states. This is, on the one hand, due to the outstanding importance of these states for quantum information protocols. On the other hand, Clifford operations map stabilizer states to stabilizer states and thus these states are a natural candidate to study in the setting we consider. Additionally, the experimental requirements to implement party-local Clifford gates on stabilizer states are actually significantly less than what we described above. For stabilizer states, the distribution of Bell states, local $Z$ basis measurements and the application of Clifford gates, which act locally on the nodes of the network, suffice to implement any party-local Clifford gate. Indeed, it is easy to see that after distributing the Bell state between two parties both parties can apply a $CZ$ between the qubits of the stabilizer state and the Bell state, perform local complementation on the qubits of the Bell state and then measure them in the $Z$ basis to implement a non-local $CZ$ gate up to local Pauli corrections. 

Motivated by the above, it is the goal of this work to approach a complete characterization of PLC equivalence classes of stabilizer states.

\subsection{Generalized local complementation\label{sec_gen_loc}}
In this section, we study how party-local Clifford transformations from graph states to graph states alter the corresponding graphs. We mentioned in Section \ref{sec_preliminaries} that local Clifford transformations which map a graph state to another graph state correspond to a sequence of local complementations on the level of the graphs \cite{Vn04}. We analyze how these results generalize to party-local Clifford transformations in the following theorem.
\begin{thm}\label{thm_gen_loc}
	Let $\ket{G}$ and $\ket{G'}$ be $n$-qubit graph states distributed among the parties $P(M,n)$. Then $\ket{G}$ is PLC equivalent to $\ket{G'}$ if and only if the graphs $G$ and $G'$ are related via a sequence of local complementations and edge additions/removals within parties (LCE).
\end{thm}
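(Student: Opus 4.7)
The ``if'' direction follows by directly realising each elementary move as a PLC. A local complementation at vertex $j$ is implemented by the single-qubit Clifford $e^{-i\pi X_j/4}\otimes\bigotimes_{k\in N_j}e^{i\pi Z_k/4}$ recalled in Section~\ref{sec_preliminaries}, which is a PLC for any partition since every factor acts on a single qubit. The toggle of an edge $\{j,k\}$ with $j,k$ in the same party $\alpha$ is implemented by the gate $CZ_{jk}\in\mathcal{C}_{|\alpha|}$, using $CZ_{jk}\ket{G}=\ket{G\triangle\{j,k\}}$. Composing these implements any LCE sequence at the state level.

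For the ``only if'' direction my plan is to decompose an arbitrary PLC into elementary generators and track its action on graph states step by step. Since $\mathcal{C}_k=\langle H,P,CZ\rangle$ for every $k$, the group $\mathcal{C}^{P(M,n)}$ is generated by single-qubit Cliffords together with $CZ$ gates between qubits of the same party; hence any $U\in\mathcal{C}^{P(M,n)}$ with $U\ket{G}=\ket{G'}$ admits a decomposition $U=g_N\cdots g_1$ into such generators. Define intermediate stabilizer states $\ket{\psi_i}:=g_i\cdots g_1\ket{G}$; by Van den Nest's theorem each $\ket{\psi_i}$ is LC equivalent to some graph state $\ket{H_i}=L_i\ket{\psi_i}$, and I may take $L_0=L_N=\mathds{1}$ with $H_0=G$, $H_N=G'$. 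The effective move between successive graph states is $\tilde g_i=L_i g_i L_{i-1}^{-1}$. If $g_i$ is a single-qubit Clifford then $\tilde g_i$ is a local Clifford mapping one graph state to another, and Van den Nest's theorem implies that $H_{i-1}$ and $H_i$ are related by a sequence of local complementations. If $g_i=CZ_{jk}$ with $j,k$ in some party $\alpha$, the task is to show that $\tilde g_i$ is realised by local complementations together with the within-$\alpha$ edge toggle at $\{j,k\}$.

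The second case is the main obstacle. To handle it I would exploit the non-uniqueness of the Van den Nest representatives to choose $L_i$ and $L_{i-1}$ agreeing on all qubits outside $\{j,k\}$, so that $\tilde g_i$ reduces to the identity off $\{j,k\}$ and to a two-qubit Clifford $T_{jk}$ on $j$ and $k$. In the binary/symplectic picture the stabilizer of $\ket{H_{i-1}}$ has the form $(I\mid\Gamma_{H_{i-1}})$ and the action of $\tilde g_i$, after the basis change needed to re-express the stabilizer of $\ket{H_i}$ in canonical form, modifies only the $j,k$-rows and $j,k$-columns of $\Gamma_{H_{i-1}}$. A finite case analysis over the two-qubit Cliffords that preserve the graph-state manifold then shows that the induced change in $\Gamma$ decomposes into (i) toggling the edge $\{j,k\}$, which lies within $\alpha$ and is thus LCE, and (ii) modifications of the neighbourhoods of $j$ and $k$ reproducible by local complementations at $j$, $k$, or their common neighbours. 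Iterating this argument along the generator decomposition of $U$ yields the required LCE sequence transforming $G$ into $G'$.
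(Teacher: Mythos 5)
Your overall strategy is the same as the paper's: the ``if'' direction is immediate, and for ``only if'' you reduce a PLC to single-qubit Cliffords and within-party $CZ$'s and must then control how a two-qubit Clifford moves a graph state. However, there is a genuine gap at the crucial step. You assert that the Van den Nest representatives $L_{i-1},L_i$ can be chosen to agree on all qubits outside $\{j,k\}$, so that the effective move $\tilde g_i$ is a two-qubit Clifford supported on $\{j,k\}$ alone. This is not justified and is generically false: bringing $CZ_{jk}L_{i-1}^{-1}\ket{H_{i-1}}$ back to graph form requires local-complementation byproducts of the form $e^{\pm i\pi Z/4}$ on the \emph{neighbours} of $j$ and $k$, which cannot be absorbed into a correction supported on $\{j,k\}$. (Your own step (ii) betrays the tension: a local complementation at $j$ toggles edges \emph{among} $N_j$, i.e.\ outside the $j,k$ rows and columns, so it cannot reproduce a change you have just confined to those rows and columns.) The correct statement — the paper's Lemma~\ref{lem_gen_loc_formal} — is that $C\otimes\id\ket{G}=L_1(L_2\otimes L_3)\ket{G}$ where $L_2\otimes L_3$ is an LCE on $j,k$ and $L_1$ is a local Clifford on \emph{all} $n$ qubits; the paper then commutes each such $L_1$ to the end of the circuit (modifying the remaining party-local gates but not their locality) and invokes Van den Nest's theorem once at the very end, rather than maintaining a graph state with locally confined corrections after every generator.

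The second issue is that your ``finite case analysis over the two-qubit Cliffords'' is precisely where all the work lies, and you have not carried it out. The paper does it explicitly: $\mathcal{C}_2$ decomposes into $11520/24^2=20$ right cosets of $\mathcal{C}_2^L$, and Figure~\ref{tablea} exhibits, for a representative of each coset, an explicit word in the LCE generators ($CZ_{12}$ and the local-complementation operators at qubits $1$ and $2$) producing it up to local Cliffords. Without this enumeration (or an equivalent argument) the claim that every two-qubit Clifford acts as LCE up to local Clifford corrections remains unproved, so the ``only if'' direction is not established. If you repair the first point by allowing the correction $L_1$ to act on all qubits and supply the coset analysis, your argument becomes essentially the paper's proof.
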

Let us compare Theorem \ref{thm_gen_loc} to the results of \cite{Vn04} for local Clifford operations. We see that the only additional power party-local Clifford transformations have over local Clifford transformations is the possibility of adding and removing edges within parties. However, note that adding and removing edges within parties and local complementations are non-commuting operations and thus LCE generates larger orbits. The proof of Theorem \ref{thm_gen_loc} makes use of Lemma~\ref{lem_gen_loc_formal}, which is stated and proven in Appendix~\ref{app:PLC} . Here, we state only an informal version of it, as this is sufficient to understand the reason for the validity of Theorem~\ref{thm_gen_loc}.

\begin{lem}[informal version]\label{lem_gen_loc}
    The action of any two-qubit Clifford gate on a graph state is equivalent to the action of a suitable sequence of LCE up to local Clifford corrections.
\end{lem}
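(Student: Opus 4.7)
The plan is to reduce the lemma to a finite generating set of the two-qubit Clifford group $\mathcal{C}_2$ and then verify the claim for each generator. A convenient choice is $\{H_i, P_i : i\in\{1,2\}\} \cup \{CZ_{12}\}$, where $1,2$ are the two qubits on which the gate acts; since this gate is meant to be applied as part of a PLC transformation, the two qubits lie in the same party, so edge toggles on $\{1,2\}$ are legitimate LCE moves.

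For the single-qubit generators $H_i$ and $P_i$ the statement holds trivially: these are themselves local Clifford operations and play the role of the "local Clifford correction" in the lemma, with an empty LCE sequence. For the generator $CZ_{12}$, a direct calculation using the canonical stabilizer generators of Eq.~\eqref{eq:generators} shows that $CZ_{12}\ket{G} = \ket{G'}$, where $G'$ is obtained from $G$ by toggling the edge $\{1,2\}$. This single edge addition or removal is a legitimate LCE step and no local Clifford correction is required.

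For a general $U\in\mathcal{C}_2$ I would decompose $U$ into a product of these generators and argue by induction on the word length. Assuming the inductive hypothesis $U\ket{G}=L\ket{G_U}$ with $L$ a local Clifford and $G_U$ related to $G$ by LCE, applying the next generator $g$ gives $gL\ket{G_U}$; the single-qubit factors of $L$ acting on qubits outside $\{1,2\}$ commute past $g$ trivially, while the factors on qubits $1,2$ combine with $g$ into another two-qubit Clifford whose action on $\ket{G_U}$ is resolved by the base cases above, possibly combined with local complementations at vertices $1$, $2$, or their neighbors, all of which are always allowed by LCE. The main obstacle is that $CZ_{12}$ does not commute with most single-qubit Cliffords, so the bookkeeping must carefully invoke standard Clifford identities to keep the residual corrections single-qubit at every step; since $\mathcal{C}_2$ is finite, this iterative reduction terminates. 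The precise accounting is carried out in the formal version, Lemma~\ref{lem_gen_loc_formal} of Appendix~\ref{app:PLC}.
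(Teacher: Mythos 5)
There is a genuine gap in the inductive step, and it sits exactly where the real content of the lemma lies. Your base cases ($H_i$, $P_i$ trivially; $CZ_{12}$ toggling the edge $\{1,2\}$) are correct but easy. The problem arises the moment the induction has produced $U\ket{G}=L\,\ket{G_U}$ and the next generator is $CZ_{12}$: you must now resolve $CZ_{12}\,(L_{12}\otimes L_{\mathrm{rest}})\ket{G_U} = \bigl(CZ_{12}L_{12}\bigr)\otimes L_{\mathrm{rest}}\ket{G_U}$, where $L_{12}$ is an \emph{arbitrary} element of $\mathcal{C}_2^L$. The operator $CZ_{12}L_{12}$ is a general two-qubit Clifford, not one of your generators, and $L_{12}\ket{G_U}$ is not a graph state, so neither base case applies; saying this is ``resolved by the base cases above'' is circular, since resolving the action of an arbitrary two-qubit Clifford on a graph state is precisely the statement being proven. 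Note also that the double coset $\mathcal{C}_2^L\, CZ_{12}\, \mathcal{C}_2^L$ does not exhaust $\mathcal{C}_2$ (e.g.\ SWAP-type and iSWAP-type gates require words with more than one $CZ_{12}$), so this problematic step cannot be avoided. What is actually needed, and what you have not supplied, is a closure property: for every $A\otimes B\in\mathcal{C}_2^L$, the operator $CZ_{12}(A\otimes B)$ applied to a graph state equals a local Clifford times an LCE sequence applied to that graph state. That finite verification \emph{is} the proof; deferring it to ``standard Clifford identities'' and to the formal version leaves the lemma unproven.

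The paper avoids your induction entirely by working with cosets: since $|\mathcal{C}_2|=11520$ and $|\mathcal{C}_2^L|=24^2$, the group $\mathcal{C}_2$ splits into exactly $20$ right cosets of $\mathcal{C}_2^L$, and the lemma is equivalent to the claim that the operators implementing LCE on vertices $1$ and $2$ (the gate $CZ_{12}$ and the local-complementation operators $e^{-i\pi X/4}\otimes\mathds{1}$, $e^{-i\pi X/4}\otimes e^{i\pi Z/4}$, etc., whose form depends on whether $1$ and $2$ are currently adjacent) generate a representative of each of the $20$ cosets. This is then checked by an explicit table of generating sequences, with the connectivity of qubits $1$ and $2$ tracked through each sequence. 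If you want to keep your generator-by-generator structure, you would still have to carry out an equivalent finite check (all of $\mathcal{C}_2^L$ composed with $CZ_{12}$, modulo symmetries), so the coset formulation is both cleaner and closer to the minimal amount of computation required.
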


Using Lemma~\ref{lem_gen_loc} and the fact that $\mathcal{C}_n$ is generated by single and two-qubit Clifford operators (see Section \ref{sec_preliminaries}) we can then show Theorem \ref{thm_gen_loc}. The proof can be found in Appendix \ref{app:PLC}.

\subsection{Classification of PLC classes via invariant polynomials\label{sec_invariant_poly}}
The characterization of equivalence classes of states under certain operations is a common task in quantum information theory. In the context of stochastic local operations assisted by classical communication (SLOCC), invariant polynomials proved to be a fruitful approach \cite{Go13}. Motivated by this, we study the use of invariant polynomials in the classification of PLC classes of stabilizer states. 

Consider a stabilizer state $\ket{\psi}\in \text{Stab}(P(M,n))$. We want to identify all states in the PLC class of $\ket{\psi}$. To this end, let us choose a set of generators $g_1^\psi ,\ldots, g_n^\psi $ for the stabilizer $\mathcal{S}$ of $\ket{\psi}$ and define the polynomial 
\begin{equation}
I_{\ket{\psi}}(\ket{\phi})=\sum_{C\in \mathcal{C}_n^{P(M,n)}}\prod_{j=1}^{n}\braket*{\phi}{C^\dagger g_j^\psi C|\phi}^2\label{eq_invariant_poly}.
\end{equation}
Observe that this polynomial is obviously PLC invariant, i.e., for all $C\in\mathcal{C}^{P(M,n)}$ it holds that $I_{\ket{\psi}}(C\ket{\phi})=I_{\ket{\psi}}(\ket{\phi})$, and that $I_{\ket{\psi}}=I_{C\ket{\psi}}$. Moreover, for any stabilizer state $\ket{\phi}$ this polynomial is nonzero if and only if $\ket{\phi}$ is PLC equivalent to $\ket{\psi}$, as we show in the following. Using the fact that any stabilizer state can be written in terms of its stabilizer as in Equation~\eqref{eq:stabstate}, and that every $P\in \mathcal{P}_n$ not equal to the identity is traceless, a term in the polynomial in Equation~\eqref{eq_invariant_poly} is nonzero if and only if $\{C^\dagger g_i^\psi C\}_{i=1}^n$ generate the stabilizer of $\ket{\phi}$ for some $C\in\mathcal{C}^{P(M,n)}$. Hence, the polynomial is nonzero if and only if $\ket{\psi}$ and $\ket{\phi}$  are PLC equivalent. We will see in Section \ref{sec_previous_results} that it follows from the results of Ref.~\cite{Br06} that the polynomial $I_{\ket{\psi}}$ only depends on the commutation relations of the generators $g_1^\psi ,\ldots, g_n^\psi$ on the individual parties.

While each PLC equivalence class can be characterized by a polynomial as in Equation \eqref{eq_invariant_poly}, this characterization is not very practical. To our knowledge, the only way to compute the value of such a polynomial (with the possible exception of a small number of parties) for a given stabilizer state is to explicitly perform the sum over all terms. If that is the case, computing the polynomial is not more efficient than directly checking PLC equivalence by computing the full PLC orbit.

Let us remark that invariants have been used previously to study PLC equivalence of stabilizer states. Ref.~\cite{Vn05} introduces a finite set of invariants that completely characterize local Clifford equivalence of stabilizer states. That is, two states are local Clifford equivalent if and only if they agree in these invariants. Ref.~\cite{Wi06} takes a similar approach, defining so called homological invariants of stabilizer states, which apply to arbitrary party sizes, and even qudit stabilizer states. These are used to rederive the results on the extraction of GHZ states from Ref.~\cite{Br06} (see Section \ref{sec_previous_results}).

\subsection{Previous results\label{sec_previous_results}}
In this section we summarize the main results obtained in the context of PLC transformations of stabilizer states, focusing on those that we use in the following.

The classification of $3$-partite PLC equivalence classes of stabilizer states was completely solved in Ref.~\cite{Br06}. There, it is shown that every $3$-partite stabilizer state is PLC equivalent to a tensor product of a unique number of copies of the states $\ket{0}$, $\ket{\phi^+}$ and $\ket{GHZ}$, see Figure~\ref{fig:fig1} for an example. That is, two stabilizer states are PLC equivalent if and only if their unique decompositions coincide. In the following we make use of some of the techniques developed in Ref.~\cite{Br06}. Therefore, let us briefly introduce some of the notation that was being used and discuss the content of the paper in more detail. First, recall that the Pauli group $\mathcal{P}_n$ modulo phases is isomorphic to the vector space $\mathbb{Z}_2^{2n}$. To see that, one identifies the single-qubit Pauli operators and the identity operator with elements of $\mathbb{Z}_2^2$ as
\begin{equation}
\sigma_{0,0}=\mathds{1},\ \sigma_{1,0}=X,\ \sigma_{1,1}=Y,\ \sigma_{0,1}=Z. 
\end{equation}
Then, one can define a map $\sigma:\mathbb{Z}_2^{2n}\rightarrow \mathcal{P}_n$ via
\begin{equation}
\sigma((a_1,b_1,\ldots,a_n,b_n))=\sigma_{a_1,b_1}\otimes\ldots\otimes\sigma_{a_n,b_n}.
\end{equation}
From this relation one sees that the multiplication of Pauli operators corresponds to the addition of their corresponding vectors in $\mathbb{Z}_2^{2n}$, i.e.,
\begin{equation}\label{eq:stab_mult}
    \sigma(\vec{f})\sigma(\vec{g})\propto\sigma(\vec{f}+\vec{g})\ \ \ \vec{f},\vec{g}\in\mathbb{Z}_2^{2n}.
\end{equation}
The commutation relations of the Pauli operators define a symplectic bilinear form $\omega:\mathbb{Z}_2^{2n}\rightarrow \mathbb{Z}_2$ by
\begin{equation}
\sigma(\vec{f})\sigma(\vec{g})=e^{i\pi \omega(\vec{f},\vec{g})}\sigma(\vec{g})\sigma(\vec{f}).
\end{equation}
Indeed, writing $\vec{f}=(a_1,b_1\ldots,a_n,b_n)$ and $\vec{g}=(a_1',b_1',\ldots,a_n',b_n')$ the above definition implies that
\begin{align}
\begin{split}&\omega(\vec{f},\vec{g})\\&=(a_1,\ldots,a_n,b_1,\ldots b_n)\begin{pmatrix}
0&\mathds{1}\\\mathds{1}&0
\end{pmatrix}(a_1',\ldots,a_n',b_1',\ldots b_n')^T.
\end{split}
\end{align}
Thus, the bilinear form $\omega$ satisfies $\omega(\vec{f},\vec{f})=0$, and is non-degenerate, i.e., $\omega(\vec{f},\vec{g})=0$ for all $\vec{g}$ implies that $\vec{f}=0$. We conclude that $\mathbb{Z}^{2n}_2$ together with $\omega$ is a symplectic vector space. A brief introduction to symplectic vector spaces can be found in Appendix \ref{sec_sym}. 

Every stabilizer $\mathcal{S}\subset\mathcal{P}_n$ can be written as $\mathcal{S}=\{\epsilon(\vec{f})\sigma(\vec{f})|\vec{f}\in V^{\mathcal{S}}\}$ where $ V^{\mathcal{S}}$ is a maximally isotropic subspace of $\mathbb{Z}^{2n}_2$ and $\epsilon: V^{\mathcal{S}}\rightarrow \{\pm 1\}$ is a suitable sign choice. A maximally isotropic subspace is a subspace which is equal to its orthogonal complement defined via the symplectic form $\omega$ (for details see Appendix \ref{sec_sym}). A set of generators of $\mathcal{S}$ corresponds to a basis of $V^{\mathcal{S}}$ via the map $\sigma$. Choosing a different set of generators of $\mathcal{S}$ is equivalent to a basis change in $V^{\mathcal{S}}$ via Equation~\eqref{eq:stab_mult} and it is described by an invertible matrix. Note that subspaces which are isotropic but not maximally isotropic, i.e., which are contained in their orthogonal complement but not necessarily equal to it, correspond to stabilizer codes.

Any Clifford operator $C\in\mathcal{C}_n$ defines a linear map $u:\mathbb{Z}_2^{2n}\rightarrow\mathbb{Z}_2^{2n}$ via
\begin{equation}
C\sigma(\vec{f})C^\dagger\propto \sigma(u(\vec{f})).\label{eq_eq4}
\end{equation}
As $C$ is unitary, $u$ is invertible and preserves the symplectic form $\omega$, i.e., it is an isometry. In fact, one can show that any isometry on $\mathbb{Z}_2^{2n}$ corresponds to a $C\in\mathcal{C}_n$ via the above relation. This correspondence is unique up to phases. 

Consider a set of parties $P(M,n)$. For every $\vec{f}\in\mathbb{Z}_2^{2n}$ and $\alpha\in P(M,n)$ we denote by $\vec{f}_\alpha\in \mathbb{Z}_2^{2|\alpha|}$ the restriction of $\vec{f}$ to party $\alpha$. For every party $\alpha\in P(M,n)$, we consider two subspaces, the local subspace
\begin{equation}
 V^{\mathcal{S}}_\alpha=\{\vec{f}\in  V^{\mathcal{S}}|\vec{f}_\beta=0\ \forall \beta\in M,\beta\neq \alpha\}
 \label{eq:local}
\end{equation}
and the colocal subspace
\begin{equation}
V_{ \hat{\alpha}}^{\mathcal{S}}=\{\vec{f}\in V^{\mathcal{S}}|\vec{f}_\alpha=0\}.
\label{eq:colocal}
\end{equation}
That is, while $ V^{\mathcal{S}}_\alpha$ contains those vectors of $V^{\mathcal{S}}$ that vanish on every party but $\alpha$, the subspace $ V^{\mathcal{S}}_\alpha$ contains those vectors that vanish on party $\alpha$. Note that the dimension of $V_\alpha^\mathcal{S}$ is related to rank of the reduced state via ${\rm rk}(\rho_\alpha)=2^{|\alpha|-{\rm dim}(V_\alpha^\mathcal{S})}$ (see Equation~\eqref{eq:reducedstate}). In this context, it is helpful to define the support of a vector $\vec{f}=(a_1,b_1,\ldots,a_n,b_n)\in\mathbb{Z}_2^{2n}$ as 
\begin{equation}
\text{supp}(\vec{f})=\{i\in[n]|a_i=1\lor b_i=1\}.
\end{equation}
Similarly, the support of a Pauli operator $P\propto \sigma(\vec{f})$ is given $\text{supp}(P)=\text{supp}(\vec{f})$, i.e., the support of $P$ is where it acts non-trivially.

With all the definitions in place let us recall the findings of Ref.~\cite{Br06} that we will use here. There are two types of results in Ref.~\cite{Br06}. On the one hand, the authors provide necessary and sufficient conditions for PLC equivalence of stabilizer states. On the other hand, they use those conditions to find the exact number of $M$-qubit, $M$-partite GHZ states one can extract from an $M$-partite stabilizer state. Using this result, as well as the conditions for PLC equivalence, they provide a complete solution to the problem of PLC equivalence for $3$-partite stabilizer states. We start by summarizing the necessary and sufficient conditions for PLC equivalence derived in Ref.~\cite{Br06}.

It is shown in Ref.~\cite{Br06} that the PLC equivalence of two stabilizer states solely depends on how elements of their stabilizers commute on the individual parties, as stated in the following theorem.
\begin{thm}[\cite{Br06}]\label{thm_PLC_equiv}
	Let $\ket{\psi},\ket{\phi}\in \text{Stab}(P(M,n))$. Then, $\ket{\psi}$ and $\ket{\phi}$ are PLC equivalent if and only if there exists a basis $\vec{g}_1,\ldots, \vec{g}_n$ of $V^{{\mathcal{S}}_\psi}$ and a basis $\vec{f}_1,\ldots, \vec{f}_n$ of $V^{{\mathcal{S}}_\phi}$ such that
	\begin{equation}
	\omega((\vec{f}_i)_\alpha ,(\vec{f}_j)_\alpha)=\omega((\vec{g}_i)_\alpha ,(\vec{g}_j)_\alpha)
	\end{equation}
	for all $\alpha \in P(M,n)$ and $i,j\in[n]$.
\end{thm}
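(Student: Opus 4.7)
The plan is to pass between party-local Clifford operators and block-diagonal symplectic maps on $\mathbb{Z}_2^{2n}$ via the correspondence of Equation~\eqref{eq_eq4}. The forward direction is straightforward: a PLC operator $C = \bigotimes_\alpha C_\alpha$ with $C\ket{\psi} = \ket{\phi}$ induces a symplectic automorphism $u_\alpha$ of $\mathbb{Z}_2^{2|\alpha|}$ for each party, and the block-diagonal map $u = \bigoplus_\alpha u_\alpha$ preserves the symplectic form on each party separately. Since $C$ conjugates $\mathcal{S}_\psi$ onto $\mathcal{S}_\phi$, $u$ sends $V^{\mathcal{S}_\psi}$ bijectively to $V^{\mathcal{S}_\phi}$. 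I would then take any basis $\vec{g}_1,\ldots,\vec{g}_n$ of $V^{\mathcal{S}_\psi}$ and set $\vec{f}_i := u(\vec{g}_i)$; the stated equality of per-party symplectic forms is immediate from each $u_\alpha$ being symplectic.

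For the converse, I would build the PLC operator party by party from the given bases. On each party $\alpha$, the target is a symplectic automorphism $u_\alpha$ of $\mathbb{Z}_2^{2|\alpha|}$ that sends $(\vec{g}_i)_\alpha \mapsto (\vec{f}_i)_\alpha$ for all $i$. Once each $u_\alpha$ is available, the corresponding Clifford $C_\alpha$ yields $C := \bigotimes_\alpha C_\alpha$ satisfying $C\sigma(\vec{g}_i) C^\dagger \propto \sigma(\vec{f}_i)$, and hence $C$ maps $\mathcal{S}_\psi$ onto $\mathcal{S}_\phi$ up to signs; residual sign mismatches are then repaired by a tensor product of single-qubit Pauli operators, which is itself a PLC operation. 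Existence of each $u_\alpha$ would follow from Witt's extension theorem for symplectic spaces (Appendix~\ref{sec_sym}), provided the partial assignment $(\vec{g}_i)_\alpha \mapsto (\vec{f}_i)_\alpha$ is a well-defined $\omega$-preserving linear map on the span of the $(\vec{g}_i)_\alpha$.

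The main obstacle is precisely this well-definedness. The equality of per-party Gram matrices only yields that, whenever $\sum_i c_i(\vec{g}_i)_\alpha = 0$, the vector $\sum_i c_i(\vec{f}_i)_\alpha$ is $\omega$-orthogonal to the span of the $(\vec{f}_j)_\alpha$, so it lies in the radical of that span rather than being necessarily zero. To handle this I would first massage the bases so that the linear dependencies after restriction to party $\alpha$ agree on both sides. The kernel of the restriction map $\pi_\alpha\colon V^{\mathcal{S}_\psi}\to \mathbb{Z}_2^{2|\alpha|}$ is exactly the colocal subspace $V^{\mathcal{S}_\psi}_{\hat\alpha}$, and equal Gram matrices force $\dim V^{\mathcal{S}_\psi}_{\hat\alpha} = \dim V^{\mathcal{S}_\phi}_{\hat\alpha}$, with an analogous statement for the finer structure coming from radicals. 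Exploiting the freedom of basis change inside these colocal subspaces, one can arrange that $\sum_i c_i(\vec{g}_i)_\alpha = 0$ holds if and only if $\sum_i c_i(\vec{f}_i)_\alpha = 0$. Carrying out this synchronisation compatibly for every party is the genuinely delicate step; once it is done, Witt's theorem delivers the $u_\alpha$ and the construction of the PLC operator above completes the proof.
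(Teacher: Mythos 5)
Your forward direction and your overall strategy for the converse (restrict the generator correspondence to each party, invoke Witt's extension theorem, lift the resulting isometries back to Clifford operators, and fix residual signs with local Paulis) coincide with the proof the paper gives in Appendix~\ref{sec_proof_plcequiv_qudits}. You have also correctly located the one nontrivial point: the assignment $(\vec{g}_i)_\alpha \mapsto (\vec{f}_i)_\alpha$ need not be well defined, since equal per-party Gram matrices only force $\sum_i c_i(\vec{f}_i)_\alpha$ into the radical of $\text{span}\{(\vec{f}_j)_\alpha\}$ whenever $\sum_i c_i(\vec{g}_i)_\alpha=0$. The gap is that you do not close this point: you assert that one can ``synchronise'' the linear dependencies by basis changes inside the colocal subspaces, simultaneously for all parties, and that assertion is exactly the content that needs proving. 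It is also doubtful in the form stated: a basis change is a single invertible $Q$ acting on all of $V^{\mathcal{S}_\psi}$, so arranging $\ker\pi_\alpha$ to match for one party can disturb the arrangement for another, and equality of the dimensions of the colocal subspaces alone does not provide a common refinement across parties.

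The resolution used in the paper requires no synchronisation and rests on maximal isotropy. Suppose $\sum_i c_i(\vec{g}_i)_\alpha=0$ and let $\vec{w}\in\mathbb{Z}_2^{2n}$ be the vector equal to $\sum_i c_i(\vec{f}_i)_\alpha$ on party $\alpha$ and zero elsewhere. The Gram-matrix equality gives $\omega(\vec{w},\vec{h})=\omega(\vec{w}_\alpha,\vec{h}_\alpha)=0$ for every $\vec{h}\in V^{\mathcal{S}_\phi}$, so $\vec{w}\in (V^{\mathcal{S}_\phi})^\perp=V^{\mathcal{S}_\phi}$, i.e., $\vec{w}$ lies in the local subspace $V^{\mathcal{S}_\phi}_\alpha$. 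If the states have full local ranks this subspace is trivial, so $\vec{w}=0$ and your map is well defined (the symmetric argument with the inverse correspondence gives injectivity on each party); the general case reduces to this one by first extracting the $\ket{0}$ states associated with the local subspaces, which is itself a PLC operation by Witt's Lemma (Lemma~\ref{lem_witt_lemma}). This is precisely condition $(ii)$ of Lemma~\ref{lem_condition_clifford} together with the argument of Lemma~\ref{lem_plc_equiv_qudit_full}. Without this step, or an actual proof of your synchronisation claim, the converse direction of your argument is incomplete.
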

That is, two stabilizer states are PLC equivalent iff there exist sets of generators for $\mathcal{S}_\psi$ and $\mathcal{S}_\phi$ whose respective commutation relations on all parties agree. In Ref.~\cite{Br06}, this theorem is only stated for stabilizer states of maximal local rank, i.e., states for which $\text{dim}(V_\alpha^\mathcal{S})=0$ for all $\alpha\in P(M,n)$. However, the authors mention that it extends to general stabilizer states. In Appendix \ref{sec_proof_plcequiv_qudits} we provide a proof for the general statement in the case of qudit systems of prime dimension, as we use it in Section~\ref{sec_qudit_systems} where we study qudit systems of prime dimension.

Moreover, Ref.~\cite{Br06} provides a theorem which gives necessary and sufficient conditions for when a stabilizer state $\ket{\phi}$ can be extracted from a stabilizer state $\ket{\psi}$ via PLC. We say a state $\ket{\phi}\in \text{Stab}(P(M,m))$ can be extracted from a state $\ket{\psi}\in \text{Stab}(P(M,n))$ with $n\ge m$ if there exists a PLC transformation $C\in\mathcal{C}^{P(M,n)}$ and a $(n-m)$-qubit stabilizer state $\ket{\phi'}$ such that $C\ket{\psi}=\ket{\phi}\otimes\ket{\phi'}$. In this case, we say $\ket{\psi}$ decomposes into $\ket{\phi}$ and $\ket{\phi'}$.
\begin{thm}[\cite{Br06}]\label{thm_PLC_extract}
	Let $\ket{\psi}\in \text{Stab}(P(M,n))$ and $\ket{\phi}\in \text{Stab}(P(M,m))$ with $n\ge m$. The state $\ket{\phi}$ can be extracted from $\ket{\psi}$ if and only if there exists a linear injective map $T: V^{\mathcal{S}_\phi}\rightarrow V^{\mathcal{S}_\psi}$ such that
	
	\begin{enumerate}
		\item[(i)] $\omega(T(\vec{f})_\alpha,T(\vec{g})_\alpha)=\omega(\vec{f}_{\alpha},\vec{g}_{\alpha})\ \forall \vec{f},\vec{g}\in V^{\mathcal{S}_\phi}\ \alpha\in P(M,n)$
		\item[(ii)] $[T(V^{\mathcal{S}_\phi})]_{\hat{\alpha}}=T(V^{\mathcal{S}_\phi}_{\hat{\alpha}})$ for all $\alpha\in P(M,n)$.
	\end{enumerate}
\end{thm}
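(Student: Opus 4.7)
The plan is to recast the operational statement in the symplectic vector space formalism of Section~\ref{sec_previous_results}, where a party-local Clifford $C\in\mathcal{C}^{P(M,n)}$ induces a party-block-diagonal symplectic isomorphism $u_C$ of $\mathbb{Z}_2^{2n}$ via Equation~\eqref{eq_eq4}. Let $\iota_\phi:V^{\mathcal{S}_\phi}\hookrightarrow\mathbb{Z}_2^{2n}$ denote the embedding that pads zeros on the $n-m$ qubits not used by $\ket{\phi}$, and similarly $\iota_{\phi'}$ for the complementary qubits. For the ``only if'' direction, assume $C\ket{\psi}=\ket{\phi}\otimes\ket{\phi'}$, so that $u_C(V^{\mathcal{S}_\psi})=\iota_\phi(V^{\mathcal{S}_\phi})\oplus\iota_{\phi'}(V^{\mathcal{S}_{\phi'}})$, and define $T:=u_C^{-1}\circ\iota_\phi$. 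Linearity and injectivity are immediate, and $\mathrm{Im}(T)\subset V^{\mathcal{S}_\psi}$. Condition (i) holds because each party block of $u_C^{-1}$ is symplectic while $\iota_\phi$ trivially preserves per-party restrictions. Condition (ii) holds because $\iota_\phi(\vec{f})$ vanishes on party $\alpha$ iff $\vec{f}_\alpha=\vec{0}$ (the padded qubits are zero), and the party-local $u_C^{-1}$ preserves the property of vanishing on $\alpha$.

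For the converse, given $T$ satisfying (i) and (ii), I would construct a party-local symplectic automorphism of $\mathbb{Z}_2^{2n}$ party-by-party using Witt's extension theorem. For each $\alpha\in P(M,n)$, define $\rho_\alpha,\iota_\alpha:V^{\mathcal{S}_\phi}\to\mathbb{Z}_2^{2|\alpha|}$ by $\rho_\alpha(\vec{f})=T(\vec{f})_\alpha$ and $\iota_\alpha(\vec{f})=\iota_\phi(\vec{f})_\alpha$. Using injectivity of $T$, condition (ii) implies $\ker(\rho_\alpha)=\ker(\iota_\alpha)=V^{\mathcal{S}_\phi}_{\hat\alpha}$, so both maps descend to injections on the quotient $V^{\mathcal{S}_\phi}/V^{\mathcal{S}_\phi}_{\hat\alpha}$, and condition (i) says the pulled-back symplectic forms on this quotient coincide. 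Composing gives an isometry $\mathrm{Im}(\rho_\alpha)\to\mathrm{Im}(\iota_\alpha)$ for the restriction of $\omega$, and Witt's extension theorem for the non-degenerate symplectic space $(\mathbb{Z}_2^{2|\alpha|},\omega)$ lifts it to a symplectic automorphism $u_\alpha$. The direct sum $u:=\bigoplus_\alpha u_\alpha$ is party-local, hence realized by some $C\in\mathcal{C}^{P(M,n)}$, and by construction $u\bigl(T(V^{\mathcal{S}_\phi})\bigr)=\iota_\phi(V^{\mathcal{S}_\phi})\subset u(V^{\mathcal{S}_\psi})$.

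The hardest step will be arguing that $u(V^{\mathcal{S}_\psi})$ actually splits as $\iota_\phi(V^{\mathcal{S}_\phi})\oplus V'$ with $V'$ supported only on the complementary qubits $Q_{\phi'}$, which is required for $C\ket{\psi}$ to factorize as $\ket{\phi}\otimes\ket{\phi'}$. The key observation is that any $\vec{w}\in u(V^{\mathcal{S}_\psi})$ is $\omega$-orthogonal to every element of $\iota_\phi(V^{\mathcal{S}_\phi})$ by maximal isotropy; since the latter is supported only on the $\ket{\phi}$ qubits, this forces $\vec{w}_{Q_\phi}\in (V^{\mathcal{S}_\phi})^\perp=V^{\mathcal{S}_\phi}$. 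One can therefore add an appropriate element of $\iota_\phi(V^{\mathcal{S}_\phi})\subset u(V^{\mathcal{S}_\psi})$ to $\vec{w}$ to eliminate its support on the $\ket{\phi}$ qubits without leaving $u(V^{\mathcal{S}_\psi})$. Performing this basis change on a complementary set of generators produces $V'$, which by dimension count is maximally isotropic on the remaining $2(n-m)$ coordinates and hence the stabilizer space of some $\ket{\phi'}$; a final local Pauli correction fixes any residual signs, completing the extraction.
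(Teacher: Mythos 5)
Your proof is correct. A point of context first: the paper does not actually prove this theorem --- it is quoted from Ref.~\cite{Br06} --- so there is no in-paper proof to compare against line by line. The closest internal analogue is Lemma~\ref{lem_condition_clifford} in Appendix~\ref{sec_proof_plcequiv_qudits}, which establishes (for equal-dimensional subspaces) precisely the equivalence you use in the converse direction, by the same two moves you make: conditions (i) and (ii) together with injectivity of $T$ show that the per-party maps $T(\vec{f})_\alpha\mapsto\iota_\phi(\vec{f})_\alpha$ are well defined, invertible and $\omega$-preserving, and Witt's Lemma (Lemma~\ref{lem_witt_lemma}) extends each to a symplectic automorphism of $\mathbb{Z}_2^{2|\alpha|}$ realizable by a Clifford on that party. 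What your argument adds beyond that lemma --- and what the extraction statement genuinely requires --- is the final splitting step, showing that once $u(V^{\mathcal{S}_\psi})\supseteq\iota_\phi(V^{\mathcal{S}_\phi})$ the remaining generators can be cleaned off the $\ket{\phi}$ qubits. Your handling of it is sound: maximal isotropy of $u(V^{\mathcal{S}_\psi})$ forces $\vec{w}_{Q_\phi}\in(V^{\mathcal{S}_\phi})^{\perp}=V^{\mathcal{S}_\phi}$ for every $\vec{w}$ in the stabilizer subspace, so subtracting $\iota_\phi(\vec{w}_{Q_\phi})$ stays inside $u(V^{\mathcal{S}_\psi})$ while killing the support on $Q_\phi$, and the dimension count makes the complement maximally isotropic on the remaining coordinates. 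The one step you wave at --- the ``final local Pauli correction'' --- is fine but deserves a sentence: the residual sign mismatch between the constructed stabilizer and $\mathcal{S}_\phi$ is a character of $V^{\mathcal{S}_\phi}\cong\mathbb{Z}_2^m$, and non-degeneracy of $\omega$ on $\mathbb{Z}_2^{2m}$ guarantees it is implemented by conjugation with a Pauli operator, which is automatically party-local.
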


Condition $(i)$ in Theorem~\ref{thm_PLC_extract} guarantees that the map $T$ preserves commutation relations on every party. Condition $(ii)$ enforces that the party support of $f$ and $T(f)$ is the same for all $\vec{f}\in V^{\mathcal{S}_\phi}$. This condition is necessary as PLC transformations cannot change the party support of Pauli operators. Note that Theorem~\ref{thm_PLC_equiv} considers an extremal case of Theorem~\ref{thm_PLC_extract} where $n=m$. One can show that condition $(ii)$ in Theorem~\ref{thm_PLC_extract} becomes obsolete in this case.

A simple case of extractability occurs if there exists a party $\alpha\in P(M,n)$ such that $\text{dim}(V^{\mathcal{S}}_\alpha)=d_\alpha>0$. It follows directly from Theorem~\ref{thm_PLC_extract} that then $\ket{\psi}$ is PLC equivalent to a state $\ket{\phi}\otimes \ket{0}^{\otimes d_\alpha}$. Note that an analogous reasoning also applies to the qudit case (see Appendix~\ref{sec_proof_plcequiv_qudits}).

Using, in particular, Theorem~\ref{thm_PLC_extract} one can determine the maximum number of $M$-qubit GHZ states shared among all parties that can be extracted from a state $\ket{\psi}\in \text{Stab}(P(M,n))$ via PLC. This number is given by the minimum number of generators of $\mathcal{S}_\psi$ which are truly $M$-partite. To be more precise, let us consider the subspace 
\begin{equation}
V_{loc}=\text{span}\bigcup_{\alpha\in P(M,n)}V^S_{\hat{\alpha}}.
\end{equation}
It is shown in Ref.~\cite{Br06} that the maximum number of $M$-qubit GHZ states distributed among all $M$ parties that can be extracted from $\ket{\psi}$ via PLC is given by $\Delta=n-\text{dim}(V_{loc})$.

In the $3$-partite case, this result in combination with Theorem~\ref{thm_PLC_extract} leads to a complete characterization of all PLC equivalence classes of stabilizer states of arbitrary qubit number. As we have mentioned before, it is shown in Ref.~\cite{Br06}, that every $3$-partite stabilizer state decomposes into a tensor product of $k_1$ copies of the $\ket{0}$ state, $k_2$ copies of the $\ket{\phi^+}$ state, and $k_3$ copies of the $\ket{GHZ}$ state under PLC. The number of copies of each of those states is unique and can be computed efficiently. Two stabilizer states are PLC equivalent if and only if their decompositions coincide, i.e., if the numbers of copies in the decompositions of each of the states coincides. This statement still holds if one considers party-local unitary operations (PLU) instead of PLC operations. Conversely, any tensor product of the states $\ket{0},\ket{\phi^+}$ and $\ket{GHZ}$ uniquely characterizes a PLC class (of three parties), and also a PLU class (Ref.~\cite{Br06}).

The authors of Ref.~\cite{Br06} therefore suggest to call $\{\ket{0},\ket{\phi^+},\ket{GHZ}\}$ the entanglement generating set (EGS) for $3$-partite stabilizer states. In this work, we use the EGS w.r.t. PLC operations to characterize PLC equivalence classes for an arbitrary number of parties. The entanglement generating set (EGS) for $M$-partite stabilizer states under PLC is a minimal set EGS$_{M}$ of indecomposable $M$-partite stabilizer states such that every $M$-partite stabilizer state decomposes under PLC into a tensor product of the states contained in EGS$_{M}$.
It follows directly from its definition that the EGS is unique up to PLC transformations of the individual states it contains. In particular, it contains one representative of every PLC class, which consists of indecomposable stabilizer states. Note that there are stabilizer states which are LU equivalent but not LC equivalent (see Ref.~\cite{Ji10}). Therefore, the EGS, as we define it, can be overcomplete regarding PLU transformations. Moreover, the EGS$_M$ is defined for a fixed number of parties $M$ but an unbounded number of qubits. Thus, it is possible that it contains infinitely many states. 

The results of Ref.~\cite{Br06} imply that $3$-partite stabilizer states decompose uniquely into states from the EGS$_3$. Note that the results of~\cite{Br06} were also partially extended to $2$- and $3$-partite qudit systems, with $d$ such that its prime decomposition only contains primes of power $0$ or $1$ in Ref.~\cite{Gr11}. More precisely, the authors show that any $2$- and $3$-partite qudit stabilizer state, with $d$ as defined above, decomposes into a tensor product of the states $\ket{0}$, $\ket{\phi_d^+}=\frac{1}{\sqrt{d}}\sum_{j}\ket{jj}$ and $\ket{GHZ^d}=\frac{1}{\sqrt{d}}\sum_j \ket{jjj}$.

In the light of the above summary let us briefly outline the rest of the paper. In the following sections we introduce the commutation matrix formalism as a powerful tool to study PLC equivalence of stabilizer states. This allows us to use results from the classification theory of bilinear forms to study the structure of the EGS$_M$ for qubits and qudits. Particularly, we derive necessary and sufficient conditions for when a state is decomposable. While we do not know whether or not qubit stabilizer state decompose uniquely into states from the EGS, we prove that for qudit stabilizer states of prime dimension the decomposition is indeed unique.

\section{PLC transformations of stabilizer states and the commutation matrix formalism\label{sec_plc_trafo_stab}}

As we saw in the previous section, whether or not two stabilizer states are PLC equivalent solely depends on the local commutation relations of operators in their stabilizers (Theorem~\ref{thm_PLC_equiv} shown in~\cite{Br06}). In this section, we introduce a new formalism to exploit these findings in studying the PLC transformations of stabilizer states.

Let $\ket{\psi}\in \text{Stab}(P(M,n))$ with stabilizer $\mathcal{S}$ and let $\{\vec{f}_1,\ldots, \vec{f}_n\}$ be a basis of its corresponding maximally isotropic subspace $V^{\mathcal{S}}$. As we are only interested in the subspace $V^{\mathcal{S}}$, we define the linear map $R: V^{\mathcal{S}}\rightarrow \mathbb{Z}^n_2$ via $R\vec{f}_i=\vec{e}_i$ for all $i\in[n]$, where $\{\vec{e}_i\}$ denotes the standard basis of $\mathbb{Z}_2^n$~\footnote{This representation of the maximally isotropic subspace $V^\mathcal{S}$ is equivalent to the representation of any element $g\in \mathcal{S}$ as $g=g_1^{x_1}g_2^{x_2}\cdots g_n^{x_n}$, where $\vec{x} \in \mathbb{Z}_2^n$.}. The map $R$ is invertible and therefore describes an isomorphism. For every party $\alpha\in P(M,n)$ we define a bilinear form $C_\alpha:\mathbb{Z}_2^n\times \mathbb{Z}_2^n \rightarrow \mathbb{Z}_2$ via
\begin{equation}
C_\alpha(\vec{k},\vec{l})=\omega((R^{-1}\vec{k})_\alpha, (R^{-1}\vec{l})_\alpha)\label{eq_com_mat}.
\end{equation}
As $\omega$ is a symplectic form, the bilinear forms $(C_\alpha)_{\alpha}$ are \emph{alternating}, that is, $C_\alpha(\vec{k},\vec{l})=-C_\alpha(\vec{l},\vec{k})$, and $C_\alpha(\vec{k},\vec{k})=0$ for all $\vec{k},\vec{l}\in \mathbb{Z}_2^n$. As we operate over the field $\mathbb{Z}_2$ we also have that $C_\alpha(\vec{k},\vec{l})=C_\alpha(\vec{l},\vec{k})$. Note that they are not symplectic in general as they can be degenerate, i.e., if $C_\alpha(\vec{k},\vec{l})=0$ for all $\vec{l}\in\mathbb{Z}_2^n$, then this does not imply that $\vec{k}=0$.

In slight abuse of notation we use $C_\alpha$ also to denote the matrix representation of the respective bilinear form w.r.t. to the basis $\{\vec{e}_i\}_{i\in[n]}$. Let us now make the following definition.
\begin{defi}\label{def_com_mat}
	Let $\ket{\psi}$ be an $M$-partite stabilizer state. We call the matrices $(C_\alpha )_{\alpha\in P(M,n)}$, describing the bilinear forms defined in Equation~\eqref{eq_com_mat} w.r.t. to the basis choice $\{\vec{f}_i\}$ of $ V^{\mathcal{S}}$, commutation matrices.
\end{defi}
A commutation matrix $C_{\alpha}$ contains all information about the commutation relations of the Pauli operators $\sigma(\vec{f}_j)$ on the party $\alpha$. More precisely, $(C_\alpha)_{ij}=0$ iff $\sigma(\vec{f}_i)$ and $\sigma(\vec{f}_j)$ commute on party $\alpha$, i.e., $[\sigma(\vec{f}_i)_\alpha,\sigma(\vec{f}_j)_\alpha]=0$. Moreover, by replacing the field $\mathbb{Z}_2$ with a finite field $\mathbb{Z}_d$ for $d\geq 3$, commutation matrices can also be defined for qudit systems of prime dimension, which we discuss in Section~\ref{sec_qudit_systems}.

Let us discuss some properties of commutation matrices. Commutation matrices are alternating as they describe alternating forms. Moreover they satisfy
\begin{equation}
\sum_{\alpha\in P(M,n)}C_\alpha=0,\label{eq_sum}
\end{equation}
where addition is performed modulo $2$. This condition is a direct consequence of the fact that all generators globally commute, i.e., that $ V^{\mathcal{S}}$ is isotropic. Indeed, as $[\sigma(\vec{f}_i),\sigma(\vec{f}_j)]=0$ for all $i,j$, the operators $\sigma(\vec{f}_i)$ and $\sigma(\vec{f}_j)$ anti-commute on an even number of parties. Therefore $\sum_\alpha(C_\alpha)_{ij}=0$ for all $i,j$. Moreover, for two parties $\alpha_i,\alpha_j\in P(M,n)$ one finds that
\begin{equation}
    C_{\alpha_1}+C_{\alpha_2}=C_{\alpha_1\cup \alpha_2}.
\end{equation}

As noted in the definition, the commutation matrices depend on the chosen basis $\{\vec{f}_1,\ldots, \vec{f}_n\}$. A basis change in $V^{\mathcal{S}}$, which is equivalent to choosing a different set of generators for $\mathcal{S}$, is described by an invertible matrix $Q\in M_{n\times n}(\mathbb{Z}_2)$ on $\mathbb{Z}_2^n$. As commutation matrices represent bilinear forms, the commutation matrices with respect to the new basis are given by
\begin{equation}
    (Q^TC_\alpha Q)_{\alpha\in M}.
\end{equation}
In order to illustrate the formalism, let us present some examples of commutation matrices that are relevant in the following. For the state $\ket{\phi^+}$ distributed among the parties $P(2,2)=\{\{1\},\{2\}\}$ the commutation matrices w.r.t the canonical generators $g_1=X_1Z_2$ and $g_2=Z_1X_2$, defined by Equation~\eqref{eq:generators}, are
\begin{equation}\label{eq_com_mat_bell}
C_{\{1\}}=\begin{pmatrix}
0&1\\1&0
\end{pmatrix}\,\text{and}\,C_{\{2\}}=\begin{pmatrix}
0&1\\1&0
\end{pmatrix}.
\end{equation}
For the state $\ket{GHZ}$ distributed among the parties $P(3,3)=\{\{1\},\{2\},\{3\}\}$ the commutation matrices w.r.t the canonical generators $g_1=X_1Z_2$, $g_2=Z_1X_2Z_3$ and $g_3=Z_2X_3$ are
\begin{align}\label{eq_com_mat_ghz}
C_{\{1\}}&=\begin{pmatrix}
0&1&0\\1&0&0\\0&0&0
\end{pmatrix},C_{\{2\}}=\begin{pmatrix}
0&1&0\\1&0&1\\0&1&0
\end{pmatrix},\notag \\
C_{\{3\}}&=\begin{pmatrix}
0&0&0\\0&0&1\\0&1&0
\end{pmatrix}.
\end{align}
It can be easily verified that they satisfy all the requirements listed above, i.e., they are alternating, and sum up to zero.

If we instead choose the generators $g_1,g_2,g_2g_3$, the associated tuple of commutation matrices is 
\begin{align}
\tilde{C}_{\{1\}}&=\begin{pmatrix}
0&1&1\\1&0&0\\1&0&0
\end{pmatrix},\tilde{C}_{\{2\}}=\begin{pmatrix}
0&1&1\\1&0&1\\1&1&0
\end{pmatrix},\notag \\
\tilde{C}_{\{3\}}&=\begin{pmatrix}
0&0&0\\0&0&1\\0&1&0
\end{pmatrix}.
\end{align}
Again, it is straightforward to verify that these commutation matrices result from the original ones by a congruence transformation with the invertible matrix
\begin{equation}
    Q=\begin{pmatrix}
    1&0&0\\
    0&1&0\\
    0&1&1\\
    \end{pmatrix}.
\end{equation}
In fact, the states that we have just discussed are graph states, as the generators of their stabilizers correspond to the canonical generators in Equation~\ref{eq:generators}. For these states there exists a simple way to compute a tuple of commutation matrices from the adjacency matrix $\Gamma_G$. One finds that if $\Gamma_G(i)$ denotes the matrix that agrees with $\Gamma_G$ in row and column $i$ and is zero everywhere else, then
\begin{equation}
C_\alpha=\sum_{i\in\alpha} \Gamma_G(i).
\end{equation}
However, not every tuple of alternating matrices which satisfies Equation~\eqref{eq_sum} corresponds to a stabilizer state, i.e., a maximally isotropic subspace. As we show in the following Theorem, the ranks of the matrices in tuples corresponding to stabilizer states are constrained.

\begin{thm}\label{thm_rank_condition}
    Let $(C_\alpha)_{\alpha\in P(M,n)}$, $C_\alpha\in M_{n\times n}(\mathbb{Z}_2)$ be a tuple of alternating matrices such that $\sum_{\alpha} C_\alpha=0$ for some partition $P(M,n)$. Then there exists a $\ket{\psi}\in \text{Stab}(P(M,n))$ such that $(C_\alpha)_{\alpha}$ are the corresponding commutation matrices if and only if
	\begin{equation}
	2\,{\rm rk}([C_\alpha]_{\alpha})=\sum_{\alpha} {\rm rk}(C_\alpha)\label{eq_rank_condition}.
	\end{equation}
	Here, $[C_\alpha]_{\alpha}$ denotes the $n\times (nm)$ matrix obtained by concatenating all matrices $C_\alpha$, and the rank is taken over the field $\mathbb{Z}_2$.
\end{thm}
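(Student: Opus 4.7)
The plan is to recast everything as linear algebra about the $n\times 2n$ matrix $M$ whose rows are a basis $\vec f_1,\ldots,\vec f_n$ of $V^{\mathcal S}$, split into party blocks $M=[M_\alpha]_\alpha$ with $M_\alpha\in M_{n\times 2|\alpha|}(\mathbb Z_2)$. Writing $J_\alpha$ for the $2|\alpha|\times 2|\alpha|$ symplectic matrix on party $\alpha$, one has $C_\alpha=M_\alpha J_\alpha M_\alpha^T$, and the central geometric objects are the projections $U_\alpha:=\pi_\alpha(V^{\mathcal S})=\mathrm{rowspan}(M_\alpha)\subseteq W_\alpha:=\mathbb Z_2^{2|\alpha|}$.

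For the forward ($\Rightarrow$) direction I would first compute $\mathrm{rk}([C_\alpha]_\alpha)$ using the factorisation $[C_\alpha]_\alpha=(M\tilde J)\cdot\mathrm{blkdiag}(M_\alpha^T)_\alpha$ with $\tilde J:=\bigoplus_\alpha J_\alpha$. The left factor has full row rank $n$, and by maximality of $V^{\mathcal S}$ its kernel in $W=\mathbb Z_2^{2n}$ equals $\tilde J^{-1}(V^{\mathcal S})^{\perp_{\mathrm{std}}}=(V^{\mathcal S})^{\perp_\omega}=V^{\mathcal S}$; the image of the right factor is $\bigoplus_\alpha U_\alpha$, which manifestly contains $V^{\mathcal S}$. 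A rank-nullity count for the composition then gives the clean identity $\mathrm{rk}([C_\alpha]_\alpha)=\sum_\alpha\dim U_\alpha-n$. The remaining geometric input is that each $U_\alpha$ is \emph{coisotropic} in $(W_\alpha,\omega_\alpha)$: given $w\in U_\alpha^{\perp_\omega}$, the vector $\widetilde w\in W$ obtained by placing $w$ in the $\alpha$-block and zero elsewhere satisfies $\omega(\widetilde w,v)=\omega_\alpha(w,v_\alpha)=0$ for all $v\in V^{\mathcal S}$, so $\widetilde w\in V^{\perp_\omega}=V^{\mathcal S}$ and hence $w=\widetilde w_\alpha\in U_\alpha$. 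Coisotropy yields $\mathrm{rk}(C_\alpha)=\mathrm{rk}(\omega_\alpha|_{U_\alpha})=2\dim U_\alpha-2|\alpha|$; summing over $\alpha$ and substituting produces exactly $\sum_\alpha\mathrm{rk}(C_\alpha)=2\sum_\alpha\dim U_\alpha-2n=2\,\mathrm{rk}([C_\alpha]_\alpha)$.

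For the converse ($\Leftarrow$) one reverses the construction. Given $(C_\alpha)$ satisfying the rank identity together with the natural bound $\mathrm{rk}(C_\alpha)\le 2|\alpha|$ imposed by the party sizes, pick for every $\alpha$ a coisotropic subspace $U_\alpha\subset W_\alpha$ of dimension $|\alpha|+\tfrac12\mathrm{rk}(C_\alpha)$ together with a linear map $\phi_\alpha\colon\mathbb Z_2^n\to U_\alpha$ realising $\phi_\alpha^*\omega_\alpha=C_\alpha$, and assemble $\vec f_i:=(\phi_\alpha(\vec e_i))_\alpha\in W$. The subspace $V:=\mathrm{span}\{\vec f_i\}$ is automatically isotropic in $(W,\omega)$ thanks to $\sum_\alpha C_\alpha=0$. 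The main obstacle---and the step that genuinely consumes the \emph{equality} (rather than a mere inequality) in the rank condition---is linear independence of the $\vec f_i$, equivalently $\bigcap_\alpha\ker\phi_\alpha=0$. Here the dimension budget $\sum_\alpha\dim U_\alpha=n+\mathrm{rk}([C_\alpha]_\alpha)$, the mirror of the forward identity, is precisely what is needed: one picks a basis of $\mathbb Z_2^n$ adapted to the common radical $\bigcap_\alpha\ker C_\alpha$ and uses the still-available radical directions $U_\alpha\cap U_\alpha^{\perp_\omega}$ inside each coisotropic $U_\alpha$ to embed this common radical injectively into $W$, rendering the assembled map injective. A final dimension count shows that $V$ is maximally isotropic in $(W,\omega)$, so the associated stabilizer state has the prescribed commutation matrices.
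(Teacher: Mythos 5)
Your proof is correct, and in the forward direction it takes a genuinely different route from the paper's. The paper argues operationally: it identifies $n-{\rm rk}([C_\alpha]_\alpha)$ with the number of locally extractable $\ket{0}$ qubits, identifies ${\rm rk}(C_\alpha)/2$ with the number of entangled qubits in party $\alpha$ via ${\rm rk}(\rho_\alpha)=2^{{\rm rk}(C_\alpha)/2}$, and then counts qubits. You instead work purely with the symplectic geometry of $V^{\mathcal S}$: the factorisation $[C_\alpha]_\alpha=(M\tilde J)\,\mathrm{blkdiag}(M_\alpha^T)_\alpha$ together with $\ker(M\tilde J)=(V^{\mathcal S})^{\perp_\omega}=V^{\mathcal S}$ yields ${\rm rk}([C_\alpha]_\alpha)=\sum_\alpha\dim U_\alpha-n$, and the coisotropy of each local projection $U_\alpha=\pi_\alpha(V^{\mathcal S})$ yields ${\rm rk}(C_\alpha)=2\dim U_\alpha-2|\alpha|$; summing gives the identity. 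This is tighter than the paper's argument (which is somewhat informal about why each hyperbolic block of $C_\alpha$ contributes a factor of $2$ to ${\rm rk}(\rho_\alpha)$) and isolates the one structural fact that matters, namely that maximal isotropy of $V^{\mathcal S}$ forces every local projection to be coisotropic. Your converse is, underneath the abstract packaging, the paper's explicit algorithm: the Dickson normal form of $C_\alpha$ fixes $\phi_\alpha$ on a complement of $\ker C_\alpha$ (hyperbolic pairs go to hyperbolic pairs), and the leftover radical directions $U_\alpha\cap U_\alpha^{\perp_\omega}$, whose total dimension $\sum_\alpha\bigl(|\alpha|-\tfrac12{\rm rk}(C_\alpha)\bigr)=n-{\rm rk}([C_\alpha]_\alpha)$ exactly matches $\dim\bigcap_\alpha\ker C_\alpha$, absorb the common radical injectively --- this mirrors the paper's step of appending $Z_j$ generators on fresh qubits for the rank-deficient part. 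Two small remarks. First, you are right to insist on the extra hypothesis ${\rm rk}(C_\alpha)\le 2|\alpha|$: if the partition is held fixed it is genuinely needed (for $n=4$ with parties of sizes $1$ and $3$, taking $C_1=C_2$ alternating of rank $4$ satisfies the rank identity but cannot be realised on that partition, only on the $(2,2)$ partition as two Bell pairs), whereas the paper silently re-derives the party sizes as ${\rm rk}(C_\alpha)/2$ plus arbitrarily placed $\ket{0}$'s. Second, you should add the standard final step of choosing signs on the maximally isotropic subspace $V$ so that the generated group does not contain $-\id$; the paper handles this by replacing $g_j$ with $ig_j$ where necessary.
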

To explain this condition let us note that the rank of a commutation matrix is related to the rank of the reduced stabilizer state as $2^{\text{rk}(C_\alpha)/2}=\text{rk}(\rho_\alpha)$. Therefore, $\text{rk}(C_\alpha)/2$ coincides with the number of qubits in party $\alpha$ that are effectively entangled to another party (see discussion in Appendix~\ref{app:comm}). Observe that ${\rm rk}([C_\alpha]_{\alpha})={\rm rk}([QC_\alpha Q^T]_{\alpha})$ for any invertible matrix $Q$. Therefore, $n-{\rm rk}([C_\alpha]_{\alpha})$ is the maximum number of generators that locally commute with all other generators of $\mathcal{S}$. As $\mathcal{S}$ is a maximally abelian subgroup it follows that those generators can be chosen to have support only on one party, and thus correspond to extractable qubits in the state $\ket{0}$. Thus, the rank condition in Equation~\eqref{eq_rank_condition} states the following. The sum over the number of qubits in each party which are effectively entangled to another one outside the party has to be equal to the difference between $n$ and the total number of qubits in the state $\ket{0}$ that can be extract from $\ket{\psi}$. The latter number is precisely given by $n-{\rm rk}([C_\alpha]_{\alpha})$.

For general alternating matrices $(C_\alpha)_{\alpha\in P(M,n)}$ which sum up to zero, Equation~\eqref{eq_rank_condition} becomes an inequality
\begin{equation}
2\, {\rm rk}([C_\alpha]_\alpha)\le \sum_{i=1}^M {\rm rk}(C_i).\label{eq_rank_inequ}
\end{equation}
Commutation matrices are precisely those sets of matrices that saturate this bound. For the details of the proof of this inequality and for the proof of Theorem~\ref{thm_rank_condition} see Appendix ~\ref{app:comm} and the discussion below. Note that one can show that tuples of alternating matrices which do not satisfy Equation~\eqref{eq_rank_condition} are associated to stabilizer codes. Thus, one can also use this formalism to study PLC equivalence of stabilizer codes. However, in this case the condition for PLC equivalence is more complicated. Addressing this task is beyond the scope of this paper.

Given a tuple of commutation matrices one can construct a corresponding stabilizer state, as we show in the following. The idea is to go through all parties and transform the respective commutation matrix to a form where it is easy to choose a correct form for the generators. Finally, the transformations are reversed. More precisely, one proceeds as follows. Let us first consider first the case ${\rm rk}([C_\alpha]_{\alpha})=n$ and then deal with the case ${\rm rk}([C_\alpha]_{\alpha})<n$.
\begin{itemize}
    \item[--] Initialize the operators $g_1,\ldots, g_n$ to $\mathds{1}$.
    \item[--] For each party $\alpha$ proceed as follows:
    \begin{itemize}
        \item[$\rightarrow$] Assign ${\rm rk}(C_\alpha)/2$ qubits to party $\alpha$. For this iteration of the loop, these qubits are labeled from $1$ to ${\rm rk}(C_\alpha)/2$.
        \item[$\rightarrow$] Find an invertible matrix $P\in M_{n\times n}(\mathbb{Z}_2)$ such that $P C_\alpha P^T$ is of the form described in Theorem~\ref{thm_dickson_gen}, i.e., a direct sum of the blocks
    \begin{equation*}
    \begin{pmatrix}
    0
    \end{pmatrix},\text{ and }\label{eq_blocks}
    \begin{pmatrix}
    0&1\\
    1&0
    \end{pmatrix}.
    \end{equation*}
    \item[$\rightarrow$] Let $\tilde{g}_1,\ldots ,\tilde{g}_n$ be this new set of generators (obtained as products of the old generators). For all $j\in[n-1]$, if $(P C_\alpha P^T)_{j,j+1}=1$, set $(\tilde{g}_j)_\alpha=X_{(j+1)/2}$ and $(\tilde{g}_{j+1})_\alpha=Z_{(j+1)/2}$. \item[$\rightarrow$] Then undo the change of generators according to $P^{-1}$.
    \end{itemize}
    \item[--] After determining all operators $g_j$ check whether $g_j^2=+\mathds{1}$ or $-\mathds{1}$. In the latter case replace $g_j$ by $i g_j$.

\end{itemize}
First, observe that the rank condition in Eq. \eqref{eq_rank_condition} ensures that the algorithm defines generators acting on $n$ qubits. It is clear that implementing this algorithm for every party, the commutation relations of the operators $g_1,\ldots, g_n$ are described by $(C_\alpha)_{\alpha}$. Thus, the $g_1,\ldots, g_n$ commute. Moreover, they are independent as ${\rm rk}([C_\alpha]_{\alpha})=n$ and, due to the last step of the algorithm, $g_i^2=\mathds{1}$ for all $i\in [n]$. 
Therefore, any element of the generated subgroup is a product of powers zero or one of each generator. If any of these powers is one, the generated operator is not proportional to $\mathds{1}$ as the generators are independent. If all powers are zero, then the operator is equal to $\mathds{1}$ due to the last step of the algorithm. We conclude that $g_1,\ldots, g_n$ generate an $n$-qubit stabilizer corresponding to a stabilizer state.

In case ${\rm rk}([C_\alpha]_{\alpha})<n$, the construction has to be slightly modified. First, one finds an invertible matrix $R$ such that the last $w=n-\text{rk}([C_\alpha]_{\alpha})$ rows of $[RC_\alpha R^T]_{\alpha}$ are zero. Then, one can assign $\text{rk}(C_{\alpha})/2$ qubits to party $\alpha$ for every $\alpha \in P(M,n)$ and determine the generators $g_1,\ldots, g_{n-w}$ as above. The remaining $n-{\rm rk}([C_\alpha]_{\alpha})$ qubits, which are in the state $\ket{0}$, can be distributed arbitrarily among the parties. Let us number the qubits in the state such that those appear last. Then we define $g_{j}=Z_j$ for all $j\in\{n-w+1,\ldots, n\}$. Finally, we undo the change of generators according to $R^{-1}$. It is straightforward to verify that the operators obtained this way define a stabilizer corresponding to a stabilizer state. Again Eq.~\eqref{eq_rank_condition} ensures that the algorithm defines generators acting on $n$ qubits.

Let us discuss the relevance of commutation matrices in the context of PLC transformations. Theorem~\ref{thm_PLC_equiv} (\cite{Br06}) states that two stabilizer states are PLC equivalent if their stabilizers admit sets of generators with the same commutation relations on each party. In terms of commutation matrices this Theorem reads as follows.
\begin{thm}\label{thm_PLC_equiv_com}
	Let $\ket{\psi},\ket{\phi}\in \text{Stab}(P(M,n))$. Then $\ket{\psi}$ is PLC equivalent to $\ket{\phi}$ if and only if they admit the same tuple of commutation matrices.
\end{thm}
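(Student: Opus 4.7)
The plan is to recognize that Theorem~\ref{thm_PLC_equiv_com} is essentially a direct reformulation of Theorem~\ref{thm_PLC_equiv} in the language of commutation matrices, so the proof will amount to carefully translating between the two statements and ensuring the phrase ``admit the same tuple of commutation matrices'' is given an unambiguous meaning.

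First I would unpack the statement: by Definition~\ref{def_com_mat} a commutation matrix depends on the chosen basis of $V^{\mathcal{S}}$, and a basis change by an invertible $Q \in M_{n\times n}(\mathbb{Z}_2)$ transforms the tuple as $(C_\alpha)_\alpha \mapsto (Q^T C_\alpha Q)_\alpha$. The natural reading of ``admit the same tuple'' is therefore the following: there exist bases $\{\vec{f}_i\}$ of $V^{\mathcal{S}_\psi}$ and $\{\vec{g}_i\}$ of $V^{\mathcal{S}_\phi}$, defining maps $R_\psi$ and $R_\phi$ as in the construction preceding Equation~\eqref{eq_com_mat}, such that the induced tuples $(C^\psi_\alpha)_\alpha$ and $(C^\phi_\alpha)_\alpha$ are equal as matrices. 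Equivalently, any choice of tuples representing $\ket{\psi}$ and $\ket{\phi}$ are simultaneously congruent, i.e.\ differ by a common transformation $C^\phi_\alpha = Q^T C^\psi_\alpha Q$ for some invertible $Q$.

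For the forward direction, assume $\ket{\psi}$ and $\ket{\phi}$ are PLC equivalent. Theorem~\ref{thm_PLC_equiv} then furnishes bases $\vec{f}_1,\ldots,\vec{f}_n$ of $V^{\mathcal{S}_\psi}$ and $\vec{g}_1,\ldots,\vec{g}_n$ of $V^{\mathcal{S}_\phi}$ with $\omega((\vec{f}_i)_\alpha,(\vec{f}_j)_\alpha) = \omega((\vec{g}_i)_\alpha,(\vec{g}_j)_\alpha)$ for every $\alpha\in P(M,n)$ and all $i,j$. Taking precisely these bases to define the maps $R_\psi,R_\phi$ in Equation~\eqref{eq_com_mat}, the resulting commutation matrices satisfy $(C^\psi_\alpha)_{ij} = (C^\phi_\alpha)_{ij}$ entrywise for every party $\alpha$, so $\ket{\psi}$ and $\ket{\phi}$ admit the same tuple.

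For the converse, if a common tuple is admitted, then by the unpacking above there exist bases $\{\vec{f}_i\},\{\vec{g}_i\}$ of $V^{\mathcal{S}_\psi},V^{\mathcal{S}_\phi}$ for which the corresponding commutation matrices coincide. Unwinding Equation~\eqref{eq_com_mat}, this equality reads $\omega((\vec{f}_i)_\alpha,(\vec{f}_j)_\alpha) = \omega((\vec{g}_i)_\alpha,(\vec{g}_j)_\alpha)$ for all $i,j$ and $\alpha$, which is precisely the hypothesis of Theorem~\ref{thm_PLC_equiv}; applying that theorem yields PLC equivalence of $\ket{\psi}$ and $\ket{\phi}$. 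There is no substantial obstacle in either direction — the only care needed is the basis-dependence remark above, so that ``the same tuple'' is interpreted up to (simultaneous) congruence, matching the inherent freedom in choosing generators for the stabilizer.
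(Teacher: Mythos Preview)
Your proposal is correct and matches the paper's approach: the paper presents Theorem~\ref{thm_PLC_equiv_com} explicitly as a restatement of Theorem~\ref{thm_PLC_equiv} in terms of commutation matrices, and immediately afterwards makes the same remark you do, namely that the theorem is equivalent to saying any two tuples of commutation matrices for $\ket{\psi}$ and $\ket{\phi}$ are congruent via some invertible $Q$. There is nothing more to the argument than the translation you have written out.
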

As all sets of commutation matrices of a state are related via a congruence transformation, this theorem is equivalent to the following: $\ket{\psi}$ and $\ket{\phi}$ are PLC equivalent if any two tuples of commutation matrices $(C_\alpha^{\psi})_\alpha$, and $(C_\alpha^{\phi})_\alpha$ are congruent to each other. Consequently, a set of commutation matrices and every congruent set uniquely characterizes a PLC equivalence class. Thus, in order to characterize all PLC classes we have to find a maximal set of non-congruent tuples of commutation matrices.

\subsection{The EGS and commutation matrices\label{sec_egs}}

We argued in the previous section that the classification of PLC classes is equivalent to the classification of commutation matrices up to congruence. A common approach to classify tuples of matrices under certain operations is to first establish that these tuples decompose uniquely into indecomposable blocks (up to reordering and equivalence of the blocks). Then, the classification task is solved by presenting a complete set of indecomposable tuples. We show in this section that these two tasks are equivalent to the ones we want to solve for stabilizer states. Namely, we show that stabilizer states decompose uniquely into indecomposable states if and only if this is true for commutation matrices. Moreover, finding the EGS is equivalent to finding a maximal set of non-congruent indecomposable commutation matrices.

Let us start by making more precise the notion of decomposability of stabilizer states and their corresponding tuples of commutation matrices. To that end, consider the tensor product of two stabilizer states $\ket{\psi}\otimes\ket{\varphi}$,  and let $\{g_i\}_i$ be the generators of $\mathcal{S}_\psi$ and $\{h_i\}_i$ the generators of $\mathcal{S}_\varphi$ respectively. Such a state is clearly decomposable. Observe that the two sets of generators are mutually commuting, globally but also locally, as they are acting on different qubits. Thus, their commutation matrices are block-diagonal, where one block corresponds to the generators of $\ket{\psi}$ and the other to the generators of $\ket{\varphi}$, i.e.,
\begin{equation}
C_\alpha(\ket{\psi}\otimes\ket{\varphi})=C_\alpha(\ket{\psi})\oplus C_\alpha(\ket{\varphi}).
\end{equation}

To see that the converse is also true the following theorem establishes the connection between decomposability of stabilizer states and the block-diagonalizability of their commutation matrices.
\begin{thm}\label{thm_block_diag}
	Let $\ket{\psi}\in \text{Stab}(P(M,n))$ and let $(C_\alpha)_{\alpha\in P(M,n)}$ be a tuple of commutation matrices. Then $\ket{\psi}$ is decomposable if and only if there exists an invertible matrix $Q\in M_{n\times n}(\mathbb{Z}_2)$, and matrices $B_\alpha^1\in M_{n_1\times n_1}(\mathbb{Z}_2)$, $B_\alpha^2\in M_{n_2\times n_2}(\mathbb{Z}_2)$ such that $(QC_\alpha Q^T)_{\alpha\in M}=(B_\alpha^1\oplus B_\alpha^2)_{\alpha\in M}$. 
\end{thm}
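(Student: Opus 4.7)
The plan is to establish both directions separately, with the forward (``only if'') direction essentially immediate and the reverse direction reduced to the rank-condition criterion of Theorem~\ref{thm_rank_condition} combined with the PLC equivalence criterion of Theorem~\ref{thm_PLC_equiv_com}.

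For the ``only if'' direction, I would assume $\ket{\psi}$ is decomposable, so up to PLC it equals $\ket{\psi_1}\otimes\ket{\psi_2}$ with $n_1+n_2=n$. Taking as generators the union of generators for $\mathcal{S}_{\psi_1}$ and $\mathcal{S}_{\psi_2}$, the observation made just before the theorem gives commutation matrices of the block-diagonal form $C_\alpha(\ket{\psi_1})\oplus C_\alpha(\ket{\psi_2})$ on every party $\alpha$. Any tuple of commutation matrices associated with $\ket{\psi}$ itself is obtained from this one by a single common congruence transformation $Q$, producing exactly the form in the theorem statement.

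For the ``if'' direction, assume $(QC_\alpha Q^T)_{\alpha}=(B_\alpha^1\oplus B_\alpha^2)_{\alpha}$ for some invertible $Q$. The central step is to verify that each of the tuples $(B_\alpha^1)_{\alpha}$ and $(B_\alpha^2)_{\alpha}$ individually satisfies the rank condition~\eqref{eq_rank_condition}, so that Theorem~\ref{thm_rank_condition} yields stabilizer states realizing them. For block-diagonal matrices the ranks add, $\mathrm{rk}(C_\alpha)=\mathrm{rk}(B_\alpha^1)+\mathrm{rk}(B_\alpha^2)$, and after a suitable column permutation the concatenations also split, giving $\mathrm{rk}([C_\alpha]_\alpha)=\mathrm{rk}([B_\alpha^1]_\alpha)+\mathrm{rk}([B_\alpha^2]_\alpha)$. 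Applying the general inequality~\eqref{eq_rank_inequ} separately to each block and summing produces the same bound that~\eqref{eq_rank_condition} saturates for $(C_\alpha)_\alpha$; hence both block-level inequalities must be equalities as well. By Theorem~\ref{thm_rank_condition}, there exist stabilizer states $\ket{\psi_1}\in\text{Stab}(P(M,n_1))$ and $\ket{\psi_2}\in\text{Stab}(P(M,n_2))$ realizing the two block tuples (with the freedom in assigning the ``$\ket{0}$'' qubits to parties absorbed into the choice).

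To conclude, $\ket{\psi_1}\otimes\ket{\psi_2}$ has commutation matrices $(B_\alpha^1\oplus B_\alpha^2)_{\alpha}$, which are congruent to $(C_\alpha)_{\alpha}$ via $Q$. Theorem~\ref{thm_PLC_equiv_com} then shows that $\ket{\psi}$ is PLC equivalent to $\ket{\psi_1}\otimes\ket{\psi_2}$, so $\ket{\psi}$ is decomposable. The only delicate step is the rank-saturation argument for the two blocks; the rest is a direct invocation of the correspondence between valid commutation matrices and stabilizer states and of the congruence characterization of PLC equivalence already established above.
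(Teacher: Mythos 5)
Your proposal is correct and follows essentially the same route as the paper's proof: the forward direction via the block-diagonality of commutation matrices of tensor products plus congruence of all commutation-matrix tuples of a given state, and the reverse direction by showing that the additivity of ranks over the direct-sum blocks together with Inequality~\eqref{eq_rank_inequ} forces each block to saturate the rank condition~\eqref{eq_rank_condition}, so that Theorem~\ref{thm_rank_condition} produces the two factor states and Theorem~\ref{thm_PLC_equiv_com} gives PLC equivalence. No substantive differences.
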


\begin{proof}
	The proof of the only if part follows directly from Theorem \ref{thm_PLC_equiv} and the fact that all tuples of commutation matrices associated to same state are congruent.
	
	Next, we prove the if part. Suppose there exists a an invertible matrix $Q\in M_{n\times n}(\mathbb{Z}_2)$ such that $(QC_\alpha Q^T)_{\alpha\in P(M,n)}=(B_\alpha^1\oplus B_\alpha^2)_{\alpha}$. As $(B_\alpha^1\oplus B_\alpha^2)_{\alpha}$ is a tuple of commutation matrices, it satisfies Equation~\eqref{eq_rank_condition}. That is,
	\begin{equation}
	2\, {\rm rk}([B_\alpha^1+B_\alpha^2])=\sum_\alpha {\rm rk}(B_\alpha^1\oplus B_\alpha^2)
	\end{equation}
	or equivalently,
	\begin{equation}\label{eq:equation}
	2\,{\rm rk}([B_\alpha^1])+2\, {\rm rk}([B_\alpha^2])=\sum_\alpha {\rm rk}(B_\alpha^1)+\sum_\alpha {\rm rk}(B_\alpha^2).
	\end{equation}
	As $2\,{\rm rk}([B_\alpha^i])\le \sum_\alpha {\rm rk}(B_\alpha^i)$ (see Inequality~\eqref{eq_rank_inequ}) Equation~\eqref{eq:equation} can hold only if $2\,{\rm rk}([B_\alpha^i])= \sum_\alpha {\rm rk}(B_\alpha^i)$. Thus, $(B_\alpha^1)$ and $(B_\alpha^2)$ are tuples of commutation matrices. Let $\ket{\phi_1}$ and $\ket{\phi_2}$ be the corresponding $M$-partite stabilizer states. It follows from Theorem \ref{thm_PLC_equiv} that $\ket{\psi}$ is PLC equivalent to $\ket{\phi_1}\otimes \ket{\phi_2}$ and is therefore decomposable.
\end{proof}
This theorem establishes that decomposing stabilizer states into other stabilizer states is equivalent to decomposing tuples of commutation matrices into blocks. Therefore, to identify states in the EGS we have to identify indecomposable tuples of commutation matrices. The question whether or not stabilizer states decompose uniquely into indecomposable ones is equivalent to whether or not tuples of commutation matrices uniquely decompose into indecomposable blocks.

These questions were studied for general tuples of alternating matrices in linear algebra in the last century. We summarize the main results in the following. For a more detailed description we refer the reader to Appendix \ref{sec_clas_forms}. Let us stress that we are only interested in the classification of a special subset of these tuples, namely those which satisfy Equation \eqref{eq_rank_condition}. 

How much is known about the problem very much depends on the field one considers. For the field $\mathbb{Z}_2$, up to our knowledge, both questions have only been answered for tuples of length $1$ and $2$. The solution for a single matrix is given by Theorem~\ref{thm_dickson_gen} (see Appendix~\ref{sec_sym}), which establishes that the only indecomposable blocks are given by 
\begin{equation}
\begin{pmatrix}
0
\end{pmatrix},\text{ and }\begin{pmatrix}
0&1\\1&0
\end{pmatrix}.
\end{equation}
Theorem~\ref{thm_dickson_gen} also implies that the decomposition into these blocks is unique. This follows from the simple fact that a congruence transformation cannot change the rank of a matrix and therefore the number of blocks of each type is fixed. Comparing the above result to the commutation matrices of the state $\ket{\phi^+}$ in Equation~\eqref{eq_com_mat_bell} one sees that this implies that any bipartite stabilizer state decomposes uniquely into a tensor product of copies of the state $\ket{\phi^+}$ and the state $\ket{0}$.

For tuples of length $2$ Ref.~\cite{Sc76} establishes the uniqueness of the decomposition and states a complete set of indecomposable pairs of alternating forms. To obtain these results the author shows that the classification of pairs of alternating forms is equivalent to the classification of the better known matrix pencils. Due to Equation~\eqref{eq_sum}, these results directly apply to the characterization of the EGS for $3$-partite stabilizer states. In fact, they precisely reproduce the results of Ref.~\cite{Br06} regarding PLC equivalence classes. To see this, one starts from the general classification of Ref.~\cite{Sc76} and considers only those indecomposable tuples which satisfy the rank condition in Equation~\eqref{eq_rank_condition}. For a more detailed discussion we refer the reader to Appendix~\ref{sec_clas_forms}.

Up to our knowledge, the solution for tuples of length $\ge 3$ for the field $\mathbb{Z}_2$ is not known. However, for algebraically closed fields of characteristic $\neq 2$ it is shown in Ref.~\cite{Se05} that the classification of triples of alternating forms is a \emph{wild} problem, i.e., it is at least as hard as classifying pairs of matrices up to simultaneous similarity. The latter task is considered to be very challenging. This suggests that also in the case of the field $\mathbb{Z}_2$, classifying general triples of alternating forms is difficult. However, due to Equation~\eqref{eq_rank_condition} we are only interested in very special triples of matrices and thus a complete characterization of indecomposables tuples might be possible beyond pairs of matrices. 

For the field $\mathbb{Z}_d$ with $d\ge 3$ the situation is different. Here, Ref.~\cite{Se88} establishes the uniqueness of the decomposition for arbitrary tuples of alternating matrices (cf. Theorem 2 in Ref.~\cite{Se88}, see also Appendix~\ref{sec_clas_forms}). As mentioned earlier, the commutation matrix formalism generalizes to qudit stabilizer states for prime dimension. In this case, the commutation matrices are defined over the field $\mathbb{Z}_d$. Therefore, by establishing the equivalence between the decomposition of commutation matrices and the decomposition of stabilizer states (see Section~\ref{sec_qudit_systems}) we show via the results of Ref.~\cite{Se88} that qudit stabilizer states decompose uniquely into indecomposable ones under PLC (see Section~\ref{sec_qudit_systems}).

We comment in Appendix~\ref{sec_uniqueness} on our approach to show the uniqueness of the decomposition for qubit systems. We prove that stabilizer states without PLC symmetries decompose uniquely. However, these symmetries seem to be common, at least for four parties (see the states in Figure~\ref{fig:EGS8}) as many states in the EGS$_4$ up to $10$ qubits contain leafs, which yield additional symmetries.

\subsection{Necessary and sufficient conditions for decomposability\label{sec_nec_suf_deco}}
In this section we introduce useful tools for studying decompositions of stabilizer states. In particular, we focus on the application of the commutation matrix formalism. We discuss Fitting's Lemma (see, e.g., Ref.~\cite{Se88}), which provides necessary and sufficient conditions for decomposability of tuples of alternating matrices, and in particular of commutation matrices.

First, let us present a lemma which gives sufficient conditions for the extraction of the state $\ket*{GHZ_{M-1}}$ or $\ket*{GHZ_{M}}$ from an $M$-partite stabilizer state.
\begin{lem}\label{lem_extract_ghz}
	Let $P(M,n)=\{\alpha_1,\ldots ,\alpha_{M}\}$ be a partition and let $\ket{\psi}\in \text{Stab}(P(M,n))$ have full local ranks, i.e., $\text{dim}(V^\mathcal{S}_\alpha)=0$ for all $\alpha\in P(M,n)$. Then one can extract either $\ket*{GHZ_{M-1}}$ or $\ket{GHZ_{M}}$ from $\ket{\psi}$ if and only if there exist elements $f_1,\ldots,f_{M-2}$ in $\mathcal{S}$ such that $\text{supp}(f_j)=\alpha_1\cup \alpha_{j+1}$ and $(f_j)_{\alpha_1}=(f_k)_{\alpha_1}$ for all $j,k\in [M-2]$.
\end{lem}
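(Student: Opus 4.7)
My plan is to apply Theorem~\ref{thm_PLC_extract} in both directions, using the hypothesized $f_j$'s as the images of the $Z$-type spoke generators of the GHZ stabilizer. For the forward direction, suppose $\ket*{GHZ_k}$ with $k\in\{M-1,M\}$ can be extracted, so that $C\ket{\psi}=\ket*{GHZ_k}\otimes\ket{\phi'}$ for some PLC $C=\bigotimes_\alpha C_\alpha$ and ancilla $\ket{\phi'}$. The canonical generators $Z_{\alpha_1}Z_{\alpha_{j+1}}\otimes\mathds{1}_{\phi'}$ of the extracted stabilizer pull back through $C^{-1}$ to elements $f_j\in\mathcal{S}$. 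Party supports are preserved by PLC, so $\text{supp}(f_j)=\alpha_1\cup\alpha_{j+1}$, and the common $\alpha_1$-restriction is $(f_j)_{\alpha_1}=C_{\alpha_1}^\dagger(Z_{\alpha_1}\otimes\mathds{1})C_{\alpha_1}$, independent of $j$. Retaining $j=1,\ldots,M-2$ produces the claimed operators.

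For the converse, the key simplification is that each $f_j$ lies in the isotropic subspace $V^{\mathcal{S}}$, so $\omega(v,f_j)=0$ for every $v\in V^{\mathcal{S}}$. Because $\text{supp}(f_j)=\alpha_1\cup\alpha_{j+1}$, this decomposes as $\omega(v_{\alpha_1},g)+\omega(v_{\alpha_{j+1}},(f_j)_{\alpha_{j+1}})=0$, where $g:=(f_j)_{\alpha_1}$. Hence every local anticommutation condition that a prospective ``$X$-type'' generator $f_0$ must satisfy --- namely to anticommute with each $f_j$ on both $\alpha_1$ and $\alpha_{j+1}$ --- collapses to the single condition $c_1(f_0):=\omega((f_0)_{\alpha_1},g)=1$. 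Full local rank implies $\pi_{\alpha_1}(V^{\mathcal{S}})=\mathbb{Z}_2^{2|\alpha_1|}$ (by extending any element of $\pi_{\alpha_1}(V^{\mathcal{S}})^{\perp_\omega}$ by zero and invoking $V^{\mathcal{S}}_{\alpha_1}=0$), so such an $f_0$ exists and automatically has nontrivial support on each of $\alpha_1,\ldots,\alpha_{M-1}$.

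I then split into two cases according to whether $f_0$ can be chosen with $(f_0)_{\alpha_M}=0$. In Case~A some $f_0\in V^{\mathcal{S}}_{\hat{\alpha}_M}$ satisfies $c_1(f_0)=1$; I set $T(h_j):=f_j$ for $j=1,\ldots,M-2$ and $T(h_0):=f_0$, where $h_0=X^{\otimes(M-1)}$ and $h_j=Z_1Z_{j+1}$ are the canonical generators of $\ket*{GHZ_{M-1}}$, and verify the hypotheses of Theorem~\ref{thm_PLC_extract}, yielding extraction of $\ket*{GHZ_{M-1}}$. In Case~B, $c_1$ vanishes on all of $V^{\mathcal{S}}_{\hat{\alpha}_M}$ and hence factors through the isomorphism $V^{\mathcal{S}}/V^{\mathcal{S}}_{\hat{\alpha}_M}\cong\pi_{\alpha_M}(V^{\mathcal{S}})=\mathbb{Z}_2^{2|\alpha_M|}$, producing a nonzero $q$ with $c_1(v)=\omega(v_{\alpha_M},q)$ for all $v\in V^{\mathcal{S}}$. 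The vector $\tilde f$ equal to $g$ on $\alpha_1$, to $q$ on $\alpha_M$, and to $0$ elsewhere then satisfies $\omega(\tilde f,v)=c_1(v)+c_1(v)=0$ on $V^{\mathcal{S}}$, so $\tilde f\in V^{\mathcal{S}}$ by maximal isotropy. This $\tilde f$ supplies the missing spoke $f_{M-1}$ linking $\alpha_1$ to $\alpha_M$, and together with any $f_0$ (now automatically of full $M$-party support) it yields the map $T$ extracting $\ket*{GHZ_M}$.

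The main technical hurdle I anticipate is the full verification of condition~(ii) of Theorem~\ref{thm_PLC_extract}: the party-support pattern must match across \emph{every} element of the GHZ stabilizer, not merely the generators. This reduces to checking that each combination $\epsilon_0(f_0)_{\alpha_k}+\epsilon\,(f_{k-1})_{\alpha_k}$ (and analogously on $\alpha_1$) vanishes only when both coefficients vanish, which holds because $(f_0)_{\alpha_k}$ anticommutes with $(f_{k-1})_{\alpha_k}$ while the latter commutes with itself, so the two are linearly independent in $\mathbb{Z}_2^{2|\alpha_k|}$.
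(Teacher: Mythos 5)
Your proof is correct, but it takes a genuinely different route from the paper's. The paper first uses Witt's Lemma and a local Clifford to bring $\ket{\psi}$ into a graph-state normal form in which each $f_j$ acts on a single qubit per party and $f_1|_{\alpha_1}=Z$; it then argues combinatorially about the graph (qubits $2,\dots,M-1$ become leaves of vertex $1$, and connections of vertex $1$ inside $\alpha_2,\dots,\alpha_{M-1}$ are removed), splits on whether a set $W$ of vertices reproduces $N_1(\alpha_M)$, and in the negative case does not build the extraction map at all but instead shows $\mathcal{S}\neq\mathcal{S}_{loc}$ and invokes the GHZ-yield theorem of Ref.~\cite{Br06} to extract $\ket{GHZ_M}$. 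You instead stay entirely in the symplectic picture: maximal isotropy plus full local rank gives surjectivity of the projection $\pi_{\alpha_1}$, the isotropy of each $f_j$ collapses all local anticommutation requirements on a prospective $X$-type partner to the single functional $c_1(\cdot)=\omega((\cdot)_{\alpha_1},g)$, and the dichotomy becomes whether $c_1$ vanishes on $V^{\mathcal{S}}_{\hat\alpha_M}$; in the degenerate case your duality construction of the extra spoke $\tilde f$ supported on $\alpha_1\cup\alpha_M$ lets you apply Theorem~\ref{thm_PLC_extract} directly for $\ket{GHZ_M}$ as well. Your version is more self-contained (no graph normal form, no appeal to the $\Delta=n-\dim V_{loc}$ result) and also spells out the easy ``only if'' direction, which the paper leaves implicit; the paper's version is more concrete at the level of graphs. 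I checked the points that could have gone wrong --- surjectivity of $\pi_{\alpha_1}$ from $V^{\mathcal{S}}_{\alpha_1}=0$, membership of $\tilde f$ in $V^{\mathcal{S}}$ by maximal isotropy, injectivity of $T$, and condition~(ii) on all parties including the empty party $\alpha_M$ in the $\ket{GHZ_{M-1}}$ case --- and they all hold as you argue.
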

The proof of this lemma can be found in Appendix~\ref{sec_proof_lem_ghz}. It makes use of the fact that $\mathcal{S}$ is a maximally isotropic subspace. This fact in combination with the special operators $f_j$ it has to contain is enough to show that the conditions of Theorem~\ref{thm_PLC_extract} are satisfied and an extraction is possible~\footnote{Note that Ref.~\cite{Wi06} also contains conditions under which one can extract a GHZ state from a given stabilizer states, even between less than $M-1$ parties.}.

Next, let us discuss how one can analyze decompositions of tuples of commutations (and thereby decompositions of the corresponding stabilizer states). It is a well-known fact that to study decompositions of a tuples of alternating matrices into blocks one can equivalently study the ring of self-adjoint endomorphisms of the corresponding tuple. A self-adjoint endomorphism of a tuple of alternating matrices $(C_\alpha)_{\alpha\in P(M,n)}$, $C_\alpha\in M_{n\times n}(\mathbb{Z}_2)$ is a matrix $E\in  M_{n\times n}(\mathbb{Z}_2)$ such that 
\begin{equation}\label{eq_endo}
C_\alpha E=E^TC_\alpha
\end{equation}
holds for all $\alpha\in P(M,n)$. Self-adjoint endomorphisms enjoy many useful properties that we will frequently make use of in what follows. It directly follows from the defining Equation~\eqref{eq_endo} that the set of self-adjoint endomorphisms is closed under products and additions and therefore forms a ring. In particular, whenever $E$ is a self-adjoint endomorphism, so are its powers $E^m$. It is straightforward to verify that for any $m\in \mathbb{N}$ it holds that $\text{ran}(E^m)\supseteq \text{ran}(E^{m+1})$, and $\text{ker}(E^m)\subseteq \text{ker}(E^{m+1})$. For invertible endomorphisms the range coincides with the full vector space for any power of $E$, and for nilpotent endomorphisms one finds $k\in\mathbb{N}$ such that $E^k=0$. For finite dimensional vector spaces, and in particular over finite fields, one can always find an $l\in \mathbb{N}$ for which $\text{ran}(E^l)= \text{ran}(E^{l+1})$, and $\text{ker}(E^l)= \text{ker}(E^{l+1})$, that is $E^l$ acts as a bijection on its range. Over finite fields, $\text{ran}(E^l)$ is a finite set and, therefore, $E^l$ is a permutation on it. Therefore, there exists a power $p\geq l$ for which $E^p$ acts as the identity on its range and thus fulfills $(E^p)^2=E^p$, i.e., $E^p$ is \emph{idempotent}. The existence of idempotent elements that are neither invertible nor nilpotent plays a crucial role in deciding if a set of commutation matrices is decomposable. The following lemma, known as Fitting's lemma, explains how properties of self-adjoint endomorphisms relate to the decomposition of $(C_\alpha)_{\alpha\in P(M,n)}$ into blocks.

\begin{lem}[Fitting's lemma~\cite{Se88}]\label{lem_fitting}
	Let $F$ be any field, let $I\subset\mathbb{N}$ be a finite index set and let $(A_\alpha)_{\alpha\in I}$, $A_\alpha\in M_{n\times n}(F)$ be a tuple of alternating matrices. Then, $(A_\alpha)_{\alpha\in I}$ is indecomposable under congruence if and only if all self-adjoint endomorphisms of $(A_\alpha)_{\alpha\in I}$ are either nilpotent or invertible.
\end{lem}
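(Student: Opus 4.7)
The plan is to prove both directions of the equivalence by passing through the existence (or nonexistence) of nontrivial idempotent self-adjoint endomorphisms.

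For the easier direction, suppose $(A_\alpha)_{\alpha\in I}$ is decomposable, i.e., there is an invertible $Q$ with $QA_\alpha Q^T = B_\alpha^1 \oplus B_\alpha^2$ for all $\alpha$, with both blocks of positive dimension. I would consider the projector $\pi$ onto the first summand (in the decomposed basis), pulled back by $Q$. A short computation using $A_\alpha Q^{-1}=Q^T(B_\alpha^1\oplus B_\alpha^2)(Q^T)^{-1}Q^{-T}$ shows that $\pi$ satisfies $A_\alpha \pi = \pi^T A_\alpha$ for every $\alpha$, so $\pi$ is self-adjoint. Since $\pi$ is idempotent with $\pi \neq 0$ and $\pi \neq \mathrm{id}$, it is neither nilpotent nor invertible.

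The more substantial direction is the converse. Assume $E$ is a self-adjoint endomorphism that is neither nilpotent nor invertible. As discussed just before the lemma statement, over a finite-dimensional space one can find a power $p$ such that $\pi:=E^p$ is idempotent. Non-nilpotence of $E$ forces $\pi\neq 0$ (because $E^l$ acts as a bijection on $\mathrm{ran}(E^l)\neq 0$, so $\pi$ restricts to the identity there), and non-invertibility of $E$ forces $\ker(E)\neq 0$, hence $\ker(\pi)\neq 0$, so $\pi\neq\mathrm{id}$. In particular $V=\mathrm{ran}(\pi)\oplus\ker(\pi)$ is a nontrivial vector-space decomposition.

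The key step is to show that this splitting is compatible with every bilinear form $A_\alpha$, which is where self-adjointness enters. Rewriting $C_\alpha E = E^T C_\alpha$ as the statement $A_\alpha(Ex,y)=A_\alpha(x,Ey)$, and noting that $\pi=E^p$ inherits the same identity, we compute for $v\in\mathrm{ran}(\pi)$ and $w\in\ker(\pi)$
\begin{equation}
A_\alpha(v,w)=A_\alpha(\pi v,w)=A_\alpha(v,\pi w)=0,
\end{equation}
where the first equality uses $\pi v=v$ (since $v=\pi u$ gives $\pi v=\pi^2 u=\pi u=v$) and the last uses $\pi w=0$. Hence picking any basis adapted to $V=\mathrm{ran}(\pi)\oplus\ker(\pi)$ block-diagonalizes all $A_\alpha$ simultaneously, exhibiting $(A_\alpha)_{\alpha\in I}$ as decomposable.

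The only genuine obstacle is verifying that $\pi$ is both self-adjoint and nontrivially idempotent, and I expect the cleanest way to package this is the two-line argument above combined with the already-established existence of such a $\pi$ from the text preceding the lemma. The rest is then routine linear algebra over an arbitrary field $F$, so no characteristic-two subtleties intrude, and the argument works uniformly for the qubit and qudit settings in which the lemma will be applied.
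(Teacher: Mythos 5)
Your proof is correct and takes essentially the same route as the paper's: both directions hinge on producing a nontrivial idempotent self-adjoint endomorphism $\pi=E^{p}$ and reading off the simultaneous block decomposition along $V=\mathrm{ran}(\pi)\oplus\ker(\pi)$, which is exactly what the paper's proof does in slightly different notation (it conjugates $E^{p}$ to $\mathds{1}\oplus 0$ and notes that this forces all $RA_\alpha R^{T}$ to be block-diagonal), while you additionally write out the easy converse that the paper leaves implicit. One caveat: the existence of the idempotent power $E^{p}$ uses finiteness of $F$, so, contrary to your closing remark that the argument is routine over an arbitrary field, your proof --- like the paper's, which explicitly restricts to finite fields --- does not cover infinite $F$; there one would replace $E^{p}$ by the Fitting projection onto $\mathrm{ran}(E^{l})$ along $\ker(E^{l})$ for $l$ large enough, which is a polynomial in $E$ and hence still a self-adjoint endomorphism, after which your orthogonality computation goes through unchanged.
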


Let us include a proof of the lemma for finite fields as it illustrates its application in the context of commutation matrices.

\begin{proof}
	Let $(A_i)_i$ be a tuple of alternating $n\times n$ matrices. As $F$ is finite, for any endomorphism of $(A_i)_i$ there exists a $k\in\mathbb{N}$ such that $E^{2k}=E^k$ (see the discussion above Lemma~\ref{lem_fitting}). Thus, there exists an invertible matrix $R$ such that $E^k=R^{-1}(\mathds{1}\oplus 0)R$. It is straightforward to verify that $\mathds{1}\oplus 0$ is a self-adjoint endomorphism of $(RA_iR^T)$. We conclude that if  $\text{ker}(E^k)\neq F^n$ or $\text{R}(E^k)\neq F^n$, then $(RA_iR^T)_i$ is blockdiagonal.  
\end{proof}

Hence, they are simultaneously decomposable if and only if there exists an idempotent endomorphism which is neither invertible nor nilpotent~\footnote{Note that Fitting's lemma is usually stated for decompositions into blocks under isomorphisms, i.e., under transformations of the form $QC_\alpha R$, where $Q,R$ are invertible (see for instance Ref.~\cite{Ba}). However, for our purposes the version presented in Ref.~\cite{Se88} is the relevant one, as it is formulated for congruence.}

Lemma~\ref{lem_fitting} is particularly useful if one wants to numerically decide whether or not a stabilizer state is decomposable. As Equation~\eqref{eq_endo} is linear in the entries of $E$, it is efficient to compute a basis of the endomorphism ring. The endomorphism ring generated by this basis is finite as we work over the finite field $\mathbb{Z}_2$ and there are only finitely many matrices over $\mathbb{Z}_2$. Thus, if the ring is not too large one can check whether every element is either nilpotent or invertible, or equivalently, if there exists a idempotent element that is neither invertible nor nilpotent. As we will see later, this insight is very useful for numerically computing the EGS.

\subsection{The EGS$_M$ for $M\ge 4$\label{sec_egs4}}
We explained in the previous sections that to characterize the PLC classes of $M$-partite stabilizer states we can characterize the EGS$_{M}$ and establish that $M$-partite stabilizer states decompose uniquely into indecomposable ones. In this section we are concerned with the structure of the EGS$_M$ for $M\ge 4$. We show that the EGS$_4$ and therefore any EGS$_M$ for $M\ge 4$ is an infinite set. This is even the case if we consider PLU instead of PLC transformations.

Let us show that the EGS$_4$ is an infinite set w.r.t. PLC transformations. Consider the spiral graph states $\ket{G_n}$ defined for any $n\in\mathbb{N}$, $n\ge 4$, in Figure \ref{fig_fig1}. These states are distributed among the parties $P(4,n)=\{\alpha_1,\alpha_2,\alpha_3,\alpha_4\}$.

\begin{figure}[t!]
	\centering\includegraphics[width=0.8\columnwidth]{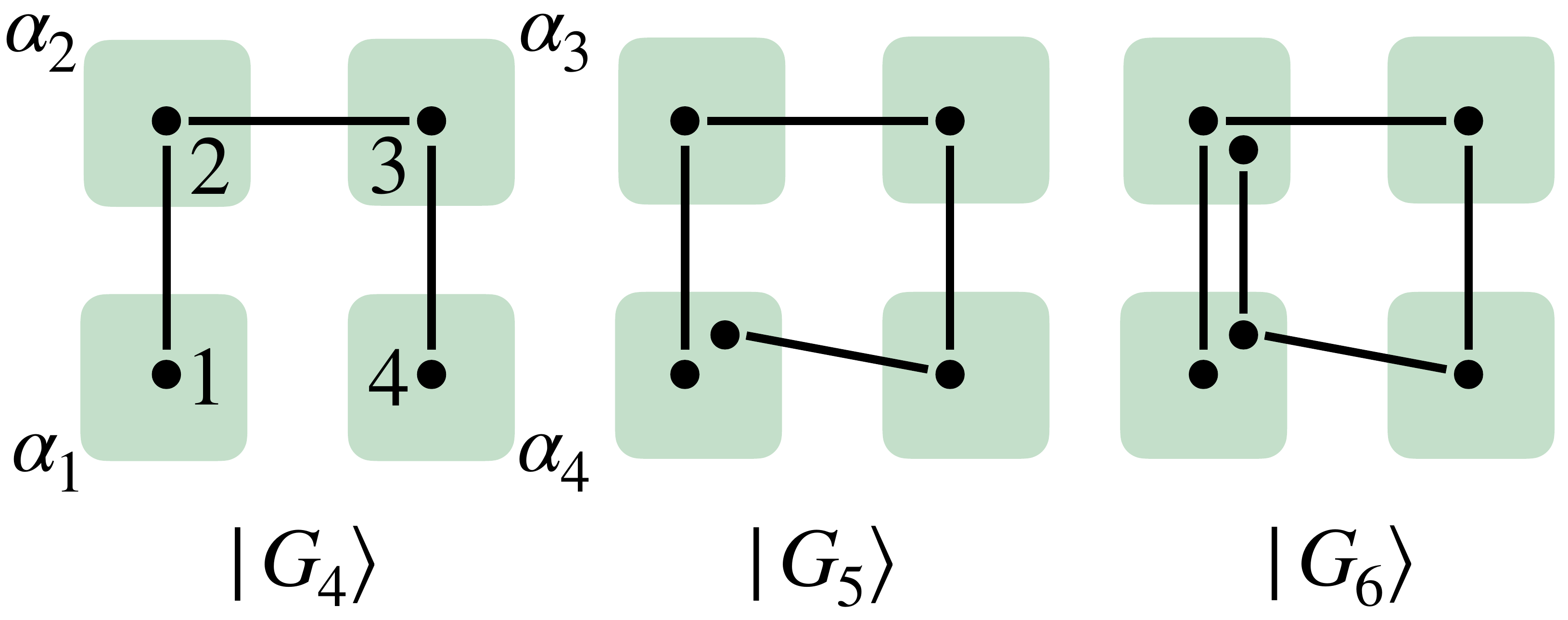}
	\caption{Sequence of the spiral graph states $\ket{G_n}$ and their distribution among four parties $P(4,n)=\{\alpha_1,\alpha_2,\alpha_3,\alpha_4\}$; all of these graph states are indecomposable.\label{fig_fig1}}
\end{figure}

\begin{thm}\label{thm_egs_infinite}
	For any $n\in\mathbb{N}$ the state $\ket{G_n}$ is indecomposable.
\end{thm}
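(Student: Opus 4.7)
The plan is to apply Fitting's lemma (Lemma~\ref{lem_fitting}) within the commutation matrix formalism. By Theorem~\ref{thm_block_diag}, indecomposability of $\ket{G_n}$ is equivalent to the statement that any tuple of commutation matrices $(C_{\alpha})_{\alpha \in P(4,n)}$ associated to $\ket{G_n}$ is indecomposable under simultaneous congruence. By Lemma~\ref{lem_fitting}, this in turn reduces to showing that every self-adjoint endomorphism $E \in M_{n\times n}(\mathbb{Z}_2)$ of the tuple, that is every $E$ satisfying $C_\alpha E = E^T C_\alpha$ for all four parties, is either nilpotent or invertible.

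The first concrete step is to write down the commutation matrices of $\ket{G_n}$ in the canonical generator basis, using the formula $C_\alpha = \sum_{i\in\alpha}\Gamma_G(i)$. Equivalently, $(C_\alpha)_{jk}=1$ if and only if $\{j,k\}$ is an edge of the spiral graph $G_n$ with exactly one endpoint in party $\alpha$. The endomorphism condition $C_\alpha E = E^T C_\alpha$ then becomes a system of linear equations in the entries $E_{jk}$, one per party. The idea is to exploit the spiral geometry: each interior vertex $v$ of $G_n$ has neighbours that lie in several different parties (because the spiral cycles through $\alpha_1,\alpha_2,\alpha_3,\alpha_4$ as the index grows), and hence produces equations that link $E_{jk}$'s belonging to different ``party-local'' blocks simultaneously. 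My expectation is that, labelling the vertices $1,\dots,n$ along the spiral, the combined constraints force $E$ to be upper triangular with a single scalar $\lambda\in\{0,1\}$ on the diagonal (plus at most a small collection of rigidly determined off-diagonal degrees of freedom that do not affect nilpotency/invertibility).

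Once $E$ has been reduced to such a form, the dichotomy is immediate: if $\lambda=1$ then $E$ is invertible over $\mathbb{Z}_2$, and if $\lambda=0$ then $E$ is strictly triangular and hence nilpotent, so Fitting's lemma yields indecomposability and therefore the theorem. The main obstacle I expect is the structural step: deriving the triangular form of $E$ cleanly from the four party-wise constraints at once. Different parties $\alpha$ contribute relations along different ``diagonals'' of $E$, and one has to trace carefully along the spiral to see that they combine to kill all the non-triangular entries. The most natural execution is probably an induction on $n$, using a small explicit base case (e.g.\ $n=4$ or $5$) and then arguing that removing the outermost vertex of the spiral, which appears as a leaf attached inside a single party, yields a smaller spiral $G_{n-1}$; the endomorphism equations involving the leaf and its parent should propagate the inductive hypothesis to the full graph.
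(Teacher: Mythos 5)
Your overall strategy---reduce to commutation matrices via Theorem~\ref{thm_block_diag} and then apply Fitting's Lemma~\ref{lem_fitting}---is legitimate in principle (it is exactly how the paper checks indecomposability \emph{numerically} for fixed instances), but as written your argument has a genuine gap: the entire content of the proof is the claim that every self-adjoint endomorphism $E$ of the spiral's tuple is forced into a triangular form with constant diagonal, and you present this only as an ``expectation.'' That structural claim is precisely where indecomposability lives (a triangular $E$ with a \emph{mixed} diagonal would yield, after taking powers, a nontrivial idempotent and hence a decomposition), and nothing in the proposal establishes it for general $n$. The proposed induction also has an unaddressed obstacle: deleting the outermost leaf changes which party the endpoint of the spiral sits in, and, more importantly, there is no a priori reason that the restriction of an $n\times n$ self-adjoint endomorphism of $(C_\alpha)_{\alpha}$ to the first $n-1$ generators is a self-adjoint endomorphism of the commutation matrices of $G_{n-1}$, so the inductive hypothesis does not obviously propagate. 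Until these points are filled in, the proof is a plan rather than a proof.

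For comparison, the paper avoids analysing the endomorphism ring of the four-matrix tuple altogether. It first observes (Lemma~\ref{lem_indecomp_com}) that if some tuple of \emph{linear combinations} of the $C_\alpha$ is indecomposable, then so is the original tuple; it then computes the pair $(C_{\alpha_1\cup\alpha_2},C_{\alpha_2\cup\alpha_3})$ in the canonical generator basis and recognizes it, for every $n$, as one of the indecomposable pairs of alternating forms in Scharlau's complete classification~\cite{Sc76} (the standard forms in Equations~\eqref{eq_standard_form4}--\eqref{eq_standard_form2}). This reduction to \emph{pairs}, where the classification is fully known, is what makes the argument go through uniformly in $n$; if you want to salvage your route, the analogous move would be to verify the nilpotent-or-invertible dichotomy for the endomorphisms of that two-matrix subtuple, which is a much more tractable system of equations than the four-party one.
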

We show this Theorem in Appendix~\ref{app:spiralprop}. The proof makes use of the fact that if a tuple of linear combinations of commutation matrices is indecomposable, then so is the original tuple. Using the complete classification of indecomposable pairs of alternating forms from Ref.~\cite{Sc76} it is straightforward to verify that if $(C_{\alpha_1}^{(n)},C_{\alpha_2}^{(n)},C_{\alpha_3}^{(n)},C_{\alpha_4}^{(n)})$ are the commutation matrices of the spiral graph state $\ket{G_n}$, then $(C_{\alpha_1}^{(n)}+C_{\alpha_2}^{(n)},C_{\alpha_2}^{(n)}+C_{\alpha_3}^{(n)})$ is indecomposable for any $n$. As the EGS$_4$ contains one representative of every PLC class which only consists of indecomposable states, $\ket{G_n}\in\text{EGS}_4$ for any $n$ and thus EGS$_4$ contains infinitely many states. Moreover, as $\text{EGS}_4\subset \text{EGS}_M$ for any $M\ge 4$, we have the following corollary.
\begin{cor}\label{corr_egs_infinite}
	The EGS$_M$ is infinite for any $M\geq 4$.
\end{cor}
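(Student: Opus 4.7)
The plan is to combine Theorem \ref{thm_egs_infinite} with a short embedding argument. Theorem \ref{thm_egs_infinite} already supplies an infinite sequence of indecomposable $4$-partite stabilizer states, namely the spirals $\ket{G_n}$ for $n \geq 4$, one for each qubit number $n$. Since any PLC operation is in particular a unitary on a fixed Hilbert space and therefore preserves the total number of qubits, states with different $n$ lie in distinct PLC equivalence classes. Because EGS$_4$ contains, by its very definition, one representative of every PLC class that consists of indecomposable $M$-partite states, each $\ket{G_n}$ (for $n \geq 4$) forces a distinct element into EGS$_4$. This already establishes the case $M = 4$ of the corollary.

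For $M > 4$, I would conclude via the inclusion $\mathrm{EGS}_4 \subset \mathrm{EGS}_M$. Given any $\ket{\psi} \in \mathrm{EGS}_4$ distributed over parties $\alpha_1,\ldots,\alpha_4$, view the same state as distributed over a partition $P(M,n)$ obtained by adjoining $M-4$ empty parties. Any nontrivial PLC decomposition of $\ket{\psi}$ with respect to the $M$-partite structure is automatically a PLC decomposition with respect to the original $4$-partite structure, since the extra parties hold no qubits and therefore contribute trivially to both tensor factors. Hence $\ket{\psi}$ remains indecomposable as an $M$-partite state, so a PLC-class representative of it appears in $\mathrm{EGS}_M$. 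Applying this embedding to each $\ket{G_n}$ and using again that PLC preserves qubit number, the $\ket{G_n}$ yield pairwise inequivalent elements of $\mathrm{EGS}_M$, showing $\mathrm{EGS}_M$ is infinite.

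The only subtlety I foresee is whether the paper's definition of an $M$-partite stabilizer state permits empty parties; this is purely a bookkeeping point and does not affect the substance of the argument. If the convention requires every party to host at least one qubit, the embedding should be replaced by an explicit $M$-partite extension, for example by attaching to the spiral $\ket{G_n}$ a single leaf qubit for each new party; using Fitting's lemma (Lemma \ref{lem_fitting}) one verifies that the self-adjoint endomorphism ring of the extended commutation-matrix tuple still contains no idempotent apart from $0$ and $\mathds{1}$, so indecomposability is preserved. Either way, the real content sits in Theorem \ref{thm_egs_infinite}; the corollary itself is a one-line consequence.
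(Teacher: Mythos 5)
Your proposal is correct and follows essentially the same route as the paper, which likewise deduces the corollary from Theorem \ref{thm_egs_infinite} together with the inclusion $\text{EGS}_4\subset\text{EGS}_M$ for $M\ge 4$. Your additional discussion of how to realize that inclusion (empty parties versus attaching leaf qubits) fills in a bookkeeping detail the paper leaves implicit, but does not change the substance of the argument.
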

Thus, by simply increasing the number of parties from $3$ to $4$ the EGS changes from a finite set of three states (up to permutation) to an infinite set.

Let us address the question whether the EGS is infinite only due to considering PLC operations instead of more general PLU operations. Indeed, it could be that the EGS is finite when considering PLU operations. The following discussion, however, shows that this is not the case. By studying which additional transformations of the spiral graph states $\ket{G_{4n}}$ are enabled by PLU operations compared to PLC operations one realizes that they are equally powerful, as stated in the following theorem. A proof can be found in Appendix~\ref{app:spiralprop}.
\begin{thm}\label{thm_plu_plc}
	Let $U$ be a PLU transformation such that $U\ket{G_{4n}}=\ket{H_{4n}}$ is a stabilizer state. Then, there exists a PLC transformation $C\in\mathcal{C}_{4n}^{P(4,4n)}$ such that $C\ket{G_{4n}}=\ket{H_{4n}}$.
\end{thm}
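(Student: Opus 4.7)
The plan is to show that any party-local unitary $U = U_1\otimes U_2\otimes U_3\otimes U_4$ sending $\ket{G_{4n}}$ to a stabilizer state $\ket{H_{4n}}$ factors as $U = C\cdot S$, where $C\in\mathcal{C}_{4n}^{P(4,4n)}$ is party-local Clifford and $S$ is a local symmetry of $\ket{G_{4n}}$. Once such a factorization is established, $U\ket{G_{4n}} = C\cdot S\ket{G_{4n}} = C\ket{G_{4n}} = \ket{H_{4n}}$ gives the theorem.

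First I would analyse the local structure of the spiral graph $G_{4n}$ depicted in Figure~\ref{fig_fig1}. By direct inspection every vertex has at least one neighbour outside its own party, so each party has full local rank ($V^{\mathcal{S}}_{\alpha_i} = \{0\}$ and $\rho_{\alpha_i}$ is maximally mixed) and, in addition, no vertex inside a single party is a leaf of the global graph. This excludes the continuous non-Clifford local symmetries $e^{i\alpha X}\otimes e^{-i\alpha Z}$ generated by a leaf--parent pair that were highlighted in Section~\ref{sec_preliminaries}, and which are the usual obstruction to PLU$=$PLC on graph states.

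Second, I would invoke the classification of local symmetries of graph states from Ref.~\cite{En20}. Under the hypotheses verified in step one, that classification states that every local unitary symmetry of $\ket{G_{4n}}$ is (up to a global phase) the product of a Pauli element of $\mathcal{S}_{G_{4n}}$ and a local Clifford operation. In particular, every local symmetry automatically lies inside $\mathcal{C}_{4n}^{P(4,4n)}$. This rigidity statement is what ultimately powers the argument.

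Third, I would turn the existence of the conjugated stabilizer $\mathcal{S}_{H_{4n}} = U\,\mathcal{S}_{G_{4n}}\,U^\dagger$ into a constraint on each $U_i$. Since $\ket{H_{4n}}$ is a stabilizer state, for every generator $g\in\mathcal{S}_{G_{4n}}$ the operator $U g U^\dagger$ is a tensor product across parties of Paulis. Using the spiral structure, I would exhibit enough generators of $\mathcal{S}_{G_{4n}}$ whose restrictions to any chosen party $\alpha_i$ span the full local Pauli group to force $U_i\,\mathcal{P}_{|\alpha_i|}\,U_i^\dagger = \mathcal{P}_{|\alpha_i|}$, modulo a local symmetry of $\ket{G_{4n}}$. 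Combined with step two, this yields the decomposition $U_i = C_i\cdot s_i$ with $C_i$ Clifford and $s_i$ a stabilizer factor, and hence the desired $U = C\cdot S$.

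The hard part will be step three: converting the global Pauli-preserving condition on $U$ into a party-local Clifford condition on each $U_i$. This is the classical difficulty in PLU-versus-PLC questions, because tensor factors $U_i\otimes U_j$ can conspire to map Pauli operators to Pauli operators without either factor being Clifford. For the spiral graph the remedy is the abundance of generators whose support crosses every pair of neighbouring parties, which via the commutation matrix formalism should provide enough independent constraints on each $U_i$ to rule out such conspiracies; making this precise is the only place where the specific combinatorics of $G_{4n}$ genuinely enter the argument.
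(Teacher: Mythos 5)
There is a genuine gap, and it sits at the foundation of your argument. In step one you assert that no vertex of the spiral graph $G_{4n}$ is a leaf, and you use this to exclude the continuous non-Clifford symmetries $e^{i\beta X}\otimes e^{-i\beta Z}$. But $G_{4n}$ has exactly two leaf--parent pairs: qubit $1$ is a leaf attached only to qubit $2$, and qubit $4n$ is a leaf attached only to qubit $4n-1$ (this is visible in Figure~\ref{fig_fig1}, and it is why Lemma~\ref{structure_stab} singles out $g_1$ and $g_{4n}$ as the only stabilizer elements supported on just two parties). Consequently $\ket{G_{4n}}$ \emph{does} admit the continuous PLU symmetries $B_{1,2}(\beta)$ and $B_{4n-1,4n}(\beta)$ for all $\beta\in\mathbb{R}$, which are not Clifford for generic $\beta$. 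Your step two -- the rigidity claim that every local symmetry is a Pauli stabilizer element times a local Clifford -- is therefore false for these states, and the factorization $U=C\cdot S$ with $S$ a symmetry and $C$ party-local Clifford cannot be obtained this way. The case you dismiss is precisely the hard case.

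The paper's proof is built around handling these leaf symmetries rather than excluding them. It does not attempt to factor $U$ itself; instead it observes that, since no $4$-partite GHZ can be extracted from $\ket{G_{4n}}$, the stabilizer $\mathcal{S}_{H_{4n}}$ admits generators $h_i$ each supported on at most three parties, so each preimage $U^\dagger h_i U$ is a PLU symmetry of $\ket{G_{4n}}$ acting trivially on some party. Lemmas~\ref{lem_cliff_sym} and~\ref{lem_non_cliff_sym} then force each such preimage to be of the form $S_i$, $S_iB_{1,2}(\beta_i)$ or $S_iB_{4n-1,4n}(\beta_i)$ with $S_i\in\mathcal{S}_{G_{4n}}$, and a careful analysis shows that the leaf-rotation factors neither change party supports nor local commutation relations, so the Pauli parts $\tilde S_i$ form an independent generating set matching the $h_i$ in the sense required by Theorem~\ref{thm_PLC_equiv}, which then supplies the PLC map. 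Your step three also remains heuristic ("should provide enough independent constraints"), but the decisive problem is the false premise in step one: any correct proof must confront, not exclude, the continuous leaf symmetries of the spiral graph.
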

The theorem relies on properties of the PLU symmetry group of the spiral graph states $\ket{G_{4n}}$. Note that any PLU transformation between stabilizer states defines a bijection between their respective PLU symmetry groups. Theorem \ref{thm_plu_plc} implies that the spiral graph states $\ket{G_{4n}}$ are indecomposable under PLU operations and therefore are elements of the EGS$_M$ w.r.t. PLU operations for any $M\ge 4$. Thus, stabilizer states on $M\geq 4$ parties contain infinitely many different types of stabilizer state entanglement.

\subsection{The EGS$_4$ up to $10$ qubits\label{subsec:EGS}}

\begin{figure*}
	\centering
	\includegraphics[width=0.7\textwidth]{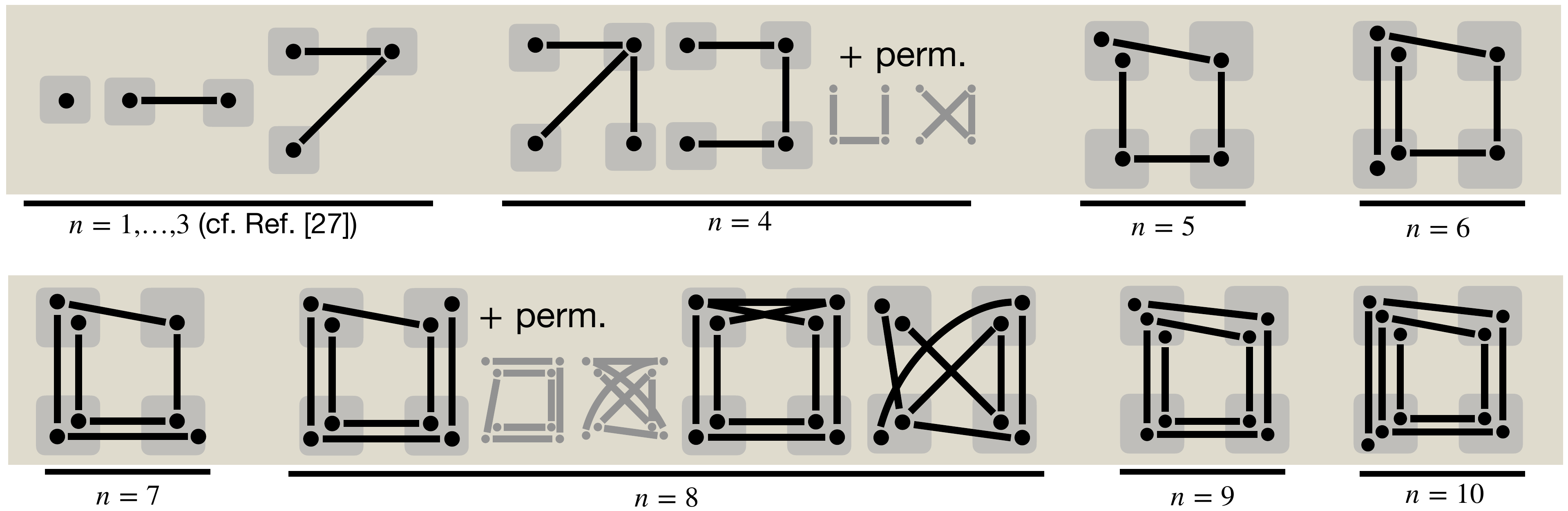}
	\caption{{\bf EGS$_4$ for up to $10$ qubits.} The first three states generate all stabilizer states on two- and three-partite qubit stabilizer states~\cite{Fa04,Br06}. The EGS$_4$ on four qubits contains the GHZ state as well as the linear cluster state (up to permutations) with different positions of its leaf parent pairs. For five, six and seven qubits one finds that only the PLC classes of spiral graphs are indecomposable. For eight qubits, the spiral graph appears three times with different positions of its leaf parent pairs, while the other two graphs have a structure different from the spiral graph. Particularly the last graph for eight qubits has an interesting structure, since it contains three leaf parent pairs. Finally, for nine and ten qubits only the PLC classes of the spiral graphs are indecomposable.}
	\label{fig:EGS8}
\end{figure*}

The results from the previous section suggest that a complete characterization of all indecomposable stabilizer states for an arbitrary number of parties is a difficult task. Nevertheless, let us turn to the simplest non-solved case, namely EGS$_4$, and present some results on the states it contains. Intuitively, we expect that not every distribution of qubits among the $4$ parties allows for the existence of an indecomposable state. Indeed, it can be shown that any state, where one party contains more than twice the number of qubits of the party with the smallest number of qubits, is decomposable, as stated by the following observation.
 
\begin{obs}\label{thm_bounds_egs}
	Let $P(4,n)=\{\alpha_1,\alpha_2,\alpha_3,\alpha_4\}$ be a partition such that $|\alpha_1|\ge|\alpha_2|\ge|\alpha_3|\ge|\alpha_4|$ and $\ket{\psi}\in\text{Stab}(P(M,n))$. Then $\ket{\psi}$ is decomposable if $|\alpha_1|>2|\alpha_4|$.
\end{obs}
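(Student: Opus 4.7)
The plan is to argue in two cases separated by the rank of the commutation matrix $C_{\alpha_1}$.

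First, if ${\rm rk}(C_{\alpha_1}) < 2|\alpha_1|$, then $\dim V^{\mathcal S}_{\alpha_1} = |\alpha_1| - {\rm rk}(C_{\alpha_1})/2 > 0$, so there is a non-trivial stabilizer element supported only on $\alpha_1$; by the simple extraction remark following Theorem~\ref{thm_PLC_extract}, $\ket\psi$ is PLC-equivalent to $\ket\phi\otimes\ket 0$ and in particular decomposable. So from now on I would assume ${\rm rk}(C_{\alpha_1})=2|\alpha_1|$, i.e.\ $\rho_{\alpha_1}$ is maximally mixed.

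Under this assumption I would use the bipartite Clifford decomposition of $\ket\psi$ across the bipartition $\alpha_1$ versus the combined subsystem $\beta := \alpha_2\cup\alpha_3\cup\alpha_4$ (implemented by a party-local Clifford on $\alpha_1$ together with an internal Clifford on $\beta$ used purely as a bookkeeping device). In this basis the stabilizer admits a set of generators of the form $\{X_i\otimes A_i,\,Z_i\otimes B_i\}_{i=1}^{|\alpha_1|}$ supplemented by $|\beta|-|\alpha_1|$ extra generators supported on $\beta$, where the pairs $(A_i,B_i)\in\mathcal{P}_\beta$ form a symplectic family and hence span a $2|\alpha_1|$-dimensional symplectic subspace $W\subseteq\mathbb{Z}_2^{2|\beta|}$ equipped with the total symplectic form on $\beta$.

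I would then restrict $W$ to party $\alpha_4$: the linear map $R:W\to\mathbb{Z}_2^{2|\alpha_4|}$ has kernel $K:=\ker R$ of dimension at least $2|\alpha_1|-2|\alpha_4|$. On $K$ the ambient symplectic form on $\beta$ coincides with the commutation form on $\alpha_2\cup\alpha_3$, because every element of $K$ is trivial on $\alpha_4$. Suppose this induced form on $K$ were isotropic; then $K\subseteq K^{\perp}$ inside the symplectic space $W$ and hence $\dim K\le |\alpha_1|$. Combined with the kernel lower bound this forces $|\alpha_1|\le 2|\alpha_4|$, contradicting the hypothesis $|\alpha_1|>2|\alpha_4|$. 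Hence $K$ contains a hyperbolic pair, producing stabilizer elements $F,G\in V^{\mathcal S}$ that are trivial on $\alpha_4$, supported on $\alpha_1\cup\alpha_2\cup\alpha_3$, and anti-commute both on $\alpha_1$ and on $\alpha_2\cup\alpha_3$.

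Finally I would promote $\operatorname{span}\{F,G\}$ to the image of an injective map $T:V^{\mathcal S_\phi}\to V^{\mathcal S}$ satisfying the hypotheses of Theorem~\ref{thm_PLC_extract}, where $\ket\phi$ is a two- or three-qubit stabilizer state distributed according to the physical supports of $F$ and $G$ across $\alpha_1,\alpha_2,\alpha_3$. This then yields a non-trivial PLC-decomposition $\ket\psi=C(\ket\phi\otimes\ket{\phi'})$ as required. The main obstacle I expect is precisely this last step: the support-matching condition (ii) of Theorem~\ref{thm_PLC_extract} must be enforced, which may require refining $F,G$ --- either by iterating the kernel argument inside $K$ against $\alpha_3$ to obtain representatives jointly supported on only two of $\{\alpha_1,\alpha_2,\alpha_3\}$, in which case a bipartite Bell-pair extraction is sufficient, or by enlarging $\operatorname{span}\{F,G\}$ to a carefully chosen three-dimensional isotropic subspace set up for a three-party extraction.
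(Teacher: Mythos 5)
Your setup is sound and, up to the construction of the hyperbolic pair, it is a legitimate alternative to the paper's argument: the reduction to the full-local-rank case, the $2|\alpha_1|$-dimensional symplectic space $W$ of $\beta$-parts, the bound $\dim K\ge 2|\alpha_1|-2|\alpha_4|>|\alpha_1|$, and the conclusion that $K$ cannot be isotropic are all correct, and notably the same threshold $|\alpha_1|>2|\alpha_4|$ falls out of your count as out of the paper's. The genuine gap is exactly the step you flag at the end, and it is not a technicality that can be deferred. Your $F,G$ anticommute on $\alpha_1$ and (since the total parity must be even and they are trivial on $\alpha_4$) on exactly one of $\alpha_2,\alpha_3$, say $\alpha_2$; but they may carry nontrivial, mutually commuting support on $\alpha_3$. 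This kills a direct Bell extraction between $\alpha_1$ and $\alpha_2$, because condition $(ii)$ of Theorem~\ref{thm_PLC_extract} forces every nonzero element of $T(V^{\mathcal{S}_{\phi^+}})$ to be trivial on $\alpha_3$; and a two-dimensional span cannot feed a $GHZ_3$ extraction either. Of your two proposed repairs, the first provably does not close the gap under the stated hypothesis: restricting $K$ further to the kernel of the $\alpha_3$-projection only guarantees $\dim\ge 2|\alpha_1|-2|\alpha_4|-2|\alpha_3|$, so forcing non-isotropy there requires $|\alpha_1|>2|\alpha_3|+2|\alpha_4|$, strictly stronger than $|\alpha_1|>2|\alpha_4|$. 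The second repair (enlarging to a three-dimensional isotropic subspace with the right party supports) is precisely the hard content and is left unspecified.

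For comparison, the paper sidesteps this support problem by running the dimension count in $\mathbb{Z}_2^{2|\alpha_1|}$ rather than in $W\subseteq\mathbb{Z}_2^{2|\beta|}$: it bounds $\dim(V_{\alpha_1\cup\alpha_2})+\dim(V_{\alpha_1\cup\alpha_3})\ge 2|\alpha_1|-2|\alpha_4|>|\alpha_1|$ using Schmidt-rank considerations, where the relevant vectors are \emph{by construction} supported on only two parties each. Then either some pair in one of these subspaces anticommutes on $\alpha_1$ (whence maximal isotropy of $V^{\mathcal{S}}$ forces anticommutation on the second party as well and a clean $\ket{\phi^+}$ extraction with condition $(ii)$ automatic), or all $\alpha_1$-restrictions mutually commute, so their span is isotropic and of dimension at most $|\alpha_1|$, forcing $f\in V_{\alpha_1\cup\alpha_2}$ and $h\in V_{\alpha_1\cup\alpha_3}$ with $f_{\alpha_1}=h_{\alpha_1}$ --- which is exactly the hypothesis of Lemma~\ref{lem_extract_ghz}, whose (nontrivial, page-long) proof then delivers the $GHZ_3$ or $GHZ_4$ extraction. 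If you want to rescue your route, you would essentially need to reprove that lemma's content for your three-party-supported pair, so arranging the two-party support from the outset is what the paper's choice of subspaces buys.
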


The bound is simple consequence of Lemma~\ref{lem_extract_ghz}, which gives sufficient conditions for when one can extract GHZ states from a stabilizer state. Note that this bound is not tight, as we will see later, i.e., there exist qubit configurations for which all states are decomposable which is not implied by the above observation. This is not surprising as Observation~\ref{thm_bounds_egs} does not take into account whether $4$-partite stabilizer states other than the GHZ state can be extracted. Now let us compute the EGS$_4$ up to $10$ qubits. We explain in detail how we utilize the commutation matrix formalism in this computation~\footnote{The code is written for Python 3 and can be made available from the authors upon reasonable request.}.

From Observation~\ref{thm_bounds_egs} it is clear that for up to $9$ qubits only qubits distributed among parties specified by the tuples $(1,1,1,1)$, $(2,1,1,1)$, $(2,2,1,1)$, $(2,2,2,1)$, $(2,2,2,2)$, and $(3,2,2,2)$ can lead to indecomposable stabilizer states. To find the maximal set of indecomposable PLC inequivalent set of stabilizer states we proceed as follows. First, for each configuration containing up to $7$ qubits we generate all possible graph states and remove the decomposable ones. To this end, we compute a tuple of commutation matrices for each state and then check decomposability by computing a basis of the endomorphism ring and checking whether or not there exists an idempotent element that is neither nilpotent nor invertible. According to Fitting's Lemma~\ref{lem_fitting} the existence of such an element is equivalent to the set of commutation matrices, and hence, the state, being decomposable.

It remains to identify one representative of each PLC equivalence class from the obtained list of indecomposable graph states. We select a state from the list and then remove all states which are PLC equivalent to the chosen state. To check PLC equivalence we proceed as follows. Recall that according to Theorem~\ref{thm_PLC_equiv_com} two states are PLC equivalent if they admit congruent tuples of commutation matrices. Thus, given two graph states $\ket{G'}$ and $\ket{G'}$ with commutation matrices $(C_\alpha)_{\alpha}$ and $(D_\alpha)_{\alpha}$ respectively, we have to decide whether $(C_\alpha)_{\alpha}$ and $(D_\alpha)_{\alpha}$ are congruent to each other or not. This can be done by solving the linear system of equations
\begin{equation}
(QC_\alpha)_{\alpha}=(D_\alpha P)_{\alpha}\label{eq_check_congruence}
\end{equation}
for the entries of $Q$ and $P$. Then, we check if any solution satisfies $Q=(P^{-1})^T$. The runtime for both, verifying indecomposability and removing congruent tuples, can in principle be exponentially large in the number of qubits. This is due to the fact that the solution space of the respective linear equation system can have a basis of $O(n^2)$ elements, and hence the solution space contains $O(2^{n^2})$ elements. Nevertheless, the algorithm seems to be fast for most graph states in practice.

For the cases $(2,2,2,2)$ and $(3,2,2,2)$ containing eight and nine qubits we have to follow a different approach as the number of graph states of eight qubits is already too large ($\approx 268$ million). Therefore, we start from a complete list of local Clifford inequivalent states of eight and nine qubits (see~\cite{Ca00}) and consider all possible distributions of these states among $4$ parties resulting in the configuration $(2,2,2,2)$, and $(3,2,2,2)$ respectively. Then, we proceed as before by removing all decomposable states and determining a state for each PLC class.

For $10$ qubits we proceed in a similar way as for eight and nine qubits but now we have to consider qubits distributed as $(3,3,2,2)$ and $(4,2,2,2)$. Note, that the latter distribution is not excluded to contain indecomposable states by Observation~\ref{thm_bounds_egs}. Nevertheless, we find that it contains only decomposable states. For the distribution $(3,3,2,2)$ only the PLC class of the ten qubit spiral graph state is indecomposable.

In Figure~\ref{fig:EGS8} we display all states in the EGS$_4$ up to $10$ qubits up to relabeling of parties. Remarkably, up to $7$ qubits, the only additions to the EGS$_4$ besides the $4$-qubit GHZ state are the spiral graph states. For $8$ qubits more complex indecomposable graph states exist. In particular, the last graph for eight qubits in Figure~\ref{fig:EGS8} is different, since it contains three leaf parent pairs. Finally, for $9$ and $10$ qubits the only additional states are again the spiral graphs.

Going beyond the EGS$_4$ we find that for $6$ qubits on 5 parties there exist $19$ different PLC classes (or $10$ different classes up to permutations of the parties holding a single qubit). This suggests that beyond the EGS$_4$ the number of different PLC classes increases very quickly, already in the case where the EGS$_4$ for $5$ qubits still has a very simple structure, i.e., where it contains only a single PLC class (see Figure~\ref{fig:EGS8}).

\section{Qudit systems \label{sec_qudit_systems}}
In this section, we generalize the commutation matrix formalism to qudit stabilizer states of prime dimension. Similar to the qubit case, non-local qudit Clifford operations can be implemented deterministically using gate teleportation. Consequently, it is again interesting to study the same task of characterizing PLC equivalence classes of qudit stabilizer states.

We first recall the definition of the Pauli group and stabilizer states for qudits. We will closely follow Refs.~\cite{Ho05, Fa14}. In the following, $d\in\mathbb{N}$ is prime and $d\ge 3$. We consider the common generalization of the Pauli operators to qudit systems as
\begin{align}
X&=\sum_{k=0}^{d-1}\ket{k+1}\bra{k},\\
Z&=\sum_{k=0}^{d-1}\eta^k\ket{k}\bra{k},
\end{align}
where $\eta=\exp(2\pi i/d)$ so that $X^d=Z^d=\id$. In analogy to the single-qubit Pauli group the single-qudit Pauli group is defined as
\begin{equation}
\mathcal{P}^d=\{\eta^m \sigma_{a,b}|a,b, m\in \mathbb{Z}_{d}\},
\end{equation}
where $\sigma_{a,b}=X^aZ^b$. As $d$ is prime, all elements in the single-qudit Pauli group have the same order, namely $d$, and the same set of non-degenerate eigenvalues. The $n$-qudit Pauli group $\mathcal{P}_n^d$ is the group generated by all $n$-fold tensor products of elements of $\mathcal{P}^d$.
In analogy to the qubit case, there is a correspondence between the Pauli group and the vector space $\mathbb{Z}_d^{2n}$ with a symplectic form $\omega$. Let us again define the map $\sigma:\mathbb{Z}_d^{2n}\rightarrow \mathcal{P}_n^d$ via $\sigma(a_1,b_1,\ldots,a_n,b_n)=\sigma_{a_1,b_1}\otimes\ldots\otimes\sigma_{a_n,b_n}$. Under this map, the group structure of the Pauli group is preserved. More precisely, from the relation
\begin{equation}
\sigma(\vec{f})\sigma(\vec{g}) \propto \sigma(\vec{f}+\vec{g})\label{eq_eq3}
\end{equation}
we see that elements of the Pauli group modulo phases can be represented as vectors of $\mathbb{Z}_d^{2n}$, and the multiplication of Pauli operators corresponds to addition of their corresponding vectors in $\mathbb{Z}_d^{2n}$. The generalized Pauli operators obey the following commutation relations
\begin{equation}\label{eq:comm_qudits}
\sigma(\vec{f})\sigma(\vec{g})=\eta^{\omega(\vec{f},\vec{g})}\sigma(\vec{g})\sigma(\vec{f})\ \ \vec{f},\vec{g}\in\mathbb{Z}_d^{2n},
\end{equation}
with the symplectic form
\begin{align}
\omega(\vec{f},\vec{g})=(\vec{a},\vec{b})\begin{pmatrix}
0&  \mathds{1}\\
-\mathds{1}&0
\end{pmatrix}(\vec{a'},\vec{b'})^T.
\end{align}
The Clifford group $\mathcal{C}_n(d)$ is defined as the unitary normalizer of the Pauli group up to phases, i.e., $\mathcal{C}_n(d)=\{U\in SU(d^n)|UPU^\dagger \in\mathcal{P}_n^d\ \forall P\in \mathcal{P}_n^d\}/U(1)$. Every $U\in \mathcal{C}_n(d)$ defines an isometry on the symplectic space $\mathbb{Z}_d^{2n}$ via Equation~\eqref{eq_eq4} and vice versa. 

A stabilizer state is the unique $+1$ eigenstate to a maximal abelian subgroup of $\mathcal{P}_n^d$ which does not contain any nontrivial multiples of the identity. Such a subgroup contains $d^n$ elements and is generated by $n$ independent elements. An abelian subgroup of $\mathcal{P}_n^d$ which does not contain any nontrivial multiples of the identity is called a stabilizer. In complete analogy to the qubit case, a stabilizer associated to a stabilizer state corresponds to a maximally isotropic subspace of $\mathbb{Z}_d^{2n}$ via the map $\sigma$. Moreover, every qudit stabilizer state is local Clifford equivalent to a qudit graph state~\cite{Mo06}. A qudit graph state is the unique stabilizer state associated to a multigraph $G$ with vertices $V$ and edges $E$. In contrast to a simple graph, multigraphs allow for an arbitrary number of edges between two vertices. Similar to the qubit case, edges correspond to controlled-$Z$ interactions between qudits, which are represented by the vertices of the graph. The existence of distinct non-trivial powers of the gate $CZ=\sum_{k=0}^{d-1} \ketbra{k}\otimes Z^k$ gives rise to the different numbers of edges between vertices. The number of edges $m$ between two vertices is called the \emph{multiplicity}, and the corresponding interaction is given by $CZ^{m}$, where $0\leq m \leq d-1$. Given a multigraph the canonical generators of the corresponding stabilizer are the operators
\begin{equation}\label{eq:quditgen}
g_i=X_i\bigotimes_{j\in N_i} Z_j^{m_{ij}},
\end{equation}
where $m_{ij}$ is the multiplicity of the edges connecting vertex $i$ and $j$. Finally, the adjacency matrix of a graph state $\ket{G}$ is defined by
\begin{equation}\label{eq:qudit_adjacency}
    (\Gamma_G)_{ij} = m_{ij}.
\end{equation}
In the following we denote by $\text{Stab}_d(P(M,n))$ the set of $n$-qudit stabilizer states for qudits of dimension $d$, which are distributed among $M$ parties according to the partition $P(M,n)$.

The $n$-qudit Pauli group $\mathcal{P}_n^d$ has an analogous relation to the symplectic vector space $(\mathbb{Z}_d^{2n},\omega)$ as $\mathcal{P}_n$ has to the symplectic vector space $(\mathbb{Z}_2^{2n},\omega)$. It is therefore straightforward to define commutation matrices via the same construction as in Section~\ref{sec_plc_trafo_stab} for qudit stabilizer states. Commutation matrices are now tuples of matrices $(C_\alpha)_{\alpha\in P(M,n)}$ such that $C_\alpha\in\ M_{n\times n}(\mathbb{Z}_d)$. The matrices are again alternating due to the fact that $\omega(\vec{f},\vec{g})=-\omega(\vec{g},\vec{f})$ and satisfy Equation~\eqref{eq_sum}, i.e., they sum up to zero modulo $d$.

Before discussing the other properties of commutation matrices, let us consider an example. Consider a four qutrit state distributed over $M=4$ parties and let  the generator of its stabilizer be $g_1=X_1Z_2Z_4$, $g_2=Z_1X_2$, $g_3=X_3Z_4^2$, and $g_4=Z_1Z_3^2X_4$. One can then verify that it admits the following set of commutation matrices.
\begin{equation}\label{eq:comm_example}
\begin{split}
    C_1 &=\begin{pmatrix}
0&2&0&2\\1&0&0&0\\0&0&0&0\\1&0&0&0
\end{pmatrix}
C_2=\begin{pmatrix}
0&1&0&0\\2&0&0&0\\0&0&0&0\\0&0&0&0
\end{pmatrix} \\
C_3 &=\begin{pmatrix}
0&0&0&0\\0&0&0&0\\0&0&0&1\\0&0&2&0
\end{pmatrix}
C_4=\begin{pmatrix}
0&0&0&1\\0&0&0&0\\0&0&0&2\\2&0&1&0
\end{pmatrix}.
\end{split}
\end{equation}
These matrices are alternating as $-1=2$ over $\mathbb{Z}_3$, and they sum up to zero modulo $3$. In fact, this state is again a graph state with edges $\{\{1,2\}_1,\{1,4\}_1,\{3,4\}_2\}$, where the subscript denotes the multiplicity of its edges. One can again verify that the commutation matrices can be obtained from the adjacency matrix $\Gamma_G$ via
\begin{equation}
    C_\alpha=\sum_{i\in\alpha} \Gamma_G(i),
\end{equation}
where $\Gamma_G(i)$ is the matrix that contains the $i$-th column and $(-1)$ times the $i$-th row of $\Gamma_G$, and zero everywhere else. E.g., for the the above example one has
\begin{equation}
    \Gamma_G=\begin{pmatrix}
0&1&0&1\\1&0&0&0\\0&0&0&2\\1&0&2&0
\end{pmatrix},
\end{equation}
from which the commutation matrices in Equation~\eqref{eq:comm_example} directly follow.

Let us continue our discussion on properties of commutation matrices. As stabilizers again correspond to maximally isotropic subspaces of $\mathbb{Z}_d^{2n}$, it is straightforward to verify that the rank condition in  Theorem~\ref{thm_rank_condition} still holds. Moreover, commutation matrices still have the same interpretation concerning PLC equivalence of stabilizer states. In fact, a theorem similar to Theorem~\ref{thm_PLC_equiv_com} in the qubit case still holds in the case of qudits.

\begin{thm}\label{thm_1_qudits}
	Let $\ket{\psi},\ket{\phi}\in \text{Stab}_d(P(M,n))$. Then $\ket{\psi}$ is PLC equivalent to $\ket{\phi}$ if and only if they admit the same tuple of commutation matrices.
\end{thm}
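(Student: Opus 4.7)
The plan is to reduce Theorem~\ref{thm_1_qudits} to the qudit analog of Theorem~\ref{thm_PLC_equiv}, which is proven in Appendix~\ref{sec_proof_plcequiv_qudits}. That is, one first establishes that two states $\ket{\psi},\ket{\phi}\in\text{Stab}_d(P(M,n))$ are PLC equivalent if and only if there exist bases $\{\vec{f}_1,\dots,\vec{f}_n\}$ of $V^{\mathcal{S}_\psi}$ and $\{\vec{g}_1,\dots,\vec{g}_n\}$ of $V^{\mathcal{S}_\phi}$ with
\begin{equation}
\omega((\vec{f}_i)_\alpha,(\vec{f}_j)_\alpha)=\omega((\vec{g}_i)_\alpha,(\vec{g}_j)_\alpha)
\end{equation}
for all $\alpha\in P(M,n)$ and $i,j\in[n]$. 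Once this analog is in place, the theorem follows by directly translating the commutation-relation condition into the commutation-matrix language.

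More concretely, for the ``only if'' direction, I would assume PLC equivalence, apply the qudit version of Theorem~\ref{thm_PLC_equiv} to obtain matching bases, and then observe that the matrices defined by $(C_\alpha^\psi)_{ij}=\omega((\vec{f}_i)_\alpha,(\vec{f}_j)_\alpha)$ and $(C_\alpha^\phi)_{ij}=\omega((\vec{g}_i)_\alpha,(\vec{g}_j)_\alpha)$ coincide by construction. Since any two tuples of commutation matrices for the same state are related by a congruence transformation (via the invertible change-of-basis matrix $Q$ relating two different generating sets), one concludes that $\ket{\psi}$ and $\ket{\phi}$ admit the same tuple. For the ``if'' direction, I would start from the assumption that bases can be chosen such that the commutation matrices agree entry-wise; unfolding the definition in Equation~\eqref{eq_com_mat} produces exactly the equality of local symplectic pairings required by the qudit analog of Theorem~\ref{thm_PLC_equiv}, hence PLC equivalence.

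The main obstacle is therefore not in the translation step itself but in establishing the qudit version of Theorem~\ref{thm_PLC_equiv}. The nontrivial direction there requires that, given two bases of isotropic subspaces of $\mathbb{Z}_d^{2|\alpha|}$ whose Gram matrices under $\omega$ coincide, one can construct a local symplectic isomorphism on party $\alpha$ mapping one basis onto the other. This is more delicate for $d\geq 3$ than for $d=2$ because the symplectic form now involves a nontrivial sign ($-\mathds{1}$ instead of $\mathds{1}$), and because the Clifford-isometry correspondence and the phase/sign bookkeeping (making sure the constructed Cliffords $C_\alpha$ map the actual Pauli operators, not merely their $\mathbb{Z}_d^{2|\alpha|}$ representatives, into each other) must be handled carefully. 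This is precisely what is done in Appendix~\ref{sec_proof_plcequiv_qudits}; the ingredients are the fact that every symplectic isomorphism on $\mathbb{Z}_d^{2|\alpha|}$ lifts to an element of $\mathcal{C}_{|\alpha|}(d)$, together with an extension argument showing that any isotropic basis with prescribed Gram matrix can be completed to a full symplectic basis so that Witt's extension-type argument applies over $\mathbb{Z}_d$.

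Once this qudit analog is established, Theorem~\ref{thm_1_qudits} follows in one line in each direction, and the global PLC transformation is obtained as the tensor product of the per-party local Cliffords produced by the extension argument.
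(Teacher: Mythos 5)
Your overall strategy is the paper's strategy: Appendix~\ref{sec_proof_plcequiv_qudits} is precisely a proof of the qudit analog of Theorem~\ref{thm_PLC_equiv}, phrased directly in commutation-matrix language, and the translation between ``matching local Gram matrices of some bases'' and ``same tuple of commutation matrices up to congruence'' is the trivial part, exactly as you say. However, your sketch has a genuine gap at the one substantive step. You describe the core task as: given ``two bases of isotropic subspaces of $\mathbb{Z}_d^{2|\alpha|}$ whose Gram matrices under $\omega$ coincide,'' construct a local symplectic isomorphism mapping one basis onto the other. This misstates the local data in two ways --- the projections $(\vec{f}_1)_\alpha,\ldots,(\vec{f}_n)_\alpha$ onto party $\alpha$ are in general neither linearly independent nor isotropic (if they were isotropic, $C_\alpha$ would vanish) --- and, more importantly, it presupposes that the assignment $(\vec{f}_j)_\alpha\mapsto(\vec{g}_j)_\alpha$ defines a linear map at all. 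Matching Gram matrices does \emph{not} by itself guarantee that a linear dependence $\sum_j x_j(\vec{f}_j)_\alpha=0$ forces $\sum_j x_j(\vec{g}_j)_\alpha=0$; it only controls the radical of the local pairing, not the kernel of the projection. This well-definedness/injectivity is exactly condition $(ii)$ of Theorem~\ref{thm_PLC_extract}, i.e., the matching of colocal subspaces $T(V_{\hat{\alpha}})=W_{\hat{\alpha}}$ in Lemma~\ref{lem_condition_clifford}, and only once it is secured can Witt's Lemma~\ref{lem_witt_lemma} be applied party by party.

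The actual content of the paper's proof is therefore twofold and absent from your proposal: (a) Lemma~\ref{lem_plc_equiv_qudit_full} shows that for states with \emph{full local ranks} condition $(ii)$ is automatic, using that $V^{\mathcal{S}_\psi}$ and $V^{\mathcal{S}_\phi}$ are \emph{maximally} isotropic (if $\vec{f}_\alpha=0$ then $T(\vec{f})_\alpha\oplus 0\in W^\perp=W$, and full local rank forces $T(\vec{f})_\alpha=0$); and (b) the general case is reduced to the full-local-rank case by first extracting the same number of $\ket{0}$ states per party from both states via a separate Witt-type argument. Your remarks about sign/phase bookkeeping and completing to a symplectic basis are secondary; the missing colocal-subspace argument is the step your proof would fail without.
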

The proof closely follows Ref.~\cite{Br06} and can be found in Appendix~\ref{sec_proof_plcequiv_qudits}.

\begin{figure*}
	\centering
	\includegraphics[width=0.9\textwidth]{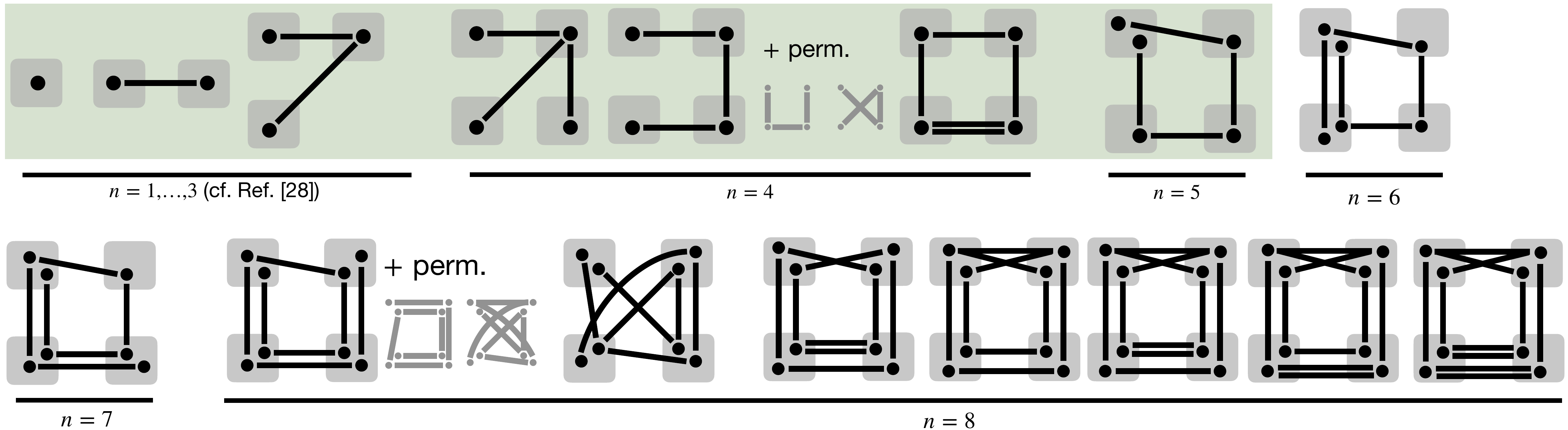}	
\caption{{\bf States in the EGS$_4$ for qutrits.} Highlighted in green is the complete EGS$_4$ for up to five qutrits. The first three states generate all two- and three-partite qutrit stabilizer states~\cite{Gr11}. An exhaustive search through all four-partite graph states up to 5 qutrits reveals six more states from the EGS$_4$. Beyond 
$5$ qutrits we have identified some states that are contained in the EGS$_4$ for qutrits, however, this list might be incomplete. The EGS$_4$ for qutrits already has a much richer structure, due to the appearance of graphs that contain edges of higher multiplicities.}\label{fig_fig4}
\end{figure*}

It directly follows that a qudit stabilizer state is decomposable if and only if its commutation matrices can be block-diagonalized under congruence (cf. Theorem \ref{thm_block_diag}). This is due to the fact that Equation~\eqref{eq_rank_condition} and Inequality~\eqref{ineq_rank_condition_general} still hold. Thus, we again have that finding states in the EGS is equivalent to finding indecomposable tuples of commutation matrices. Stabilizer states decompose uniquely into indecomposable ones if commutation matrices decompose uniquely into indecomposable blocks. In contrast to qubit systems, the commutation matrices are now defined over the finite field $\mathbb{Z}_d$ with $d\ge 3$, i.e., a field with characteristic not equal to two. Theorem $2$ of Ref.~\cite{Se88} shows that tuples of alternating matrices over such fields indeed decompose uniquely into indecomposable blocks. Thus, we have shown the following theorem.
\begin{thm}
All states in $\text{Stab}_d(P(M,n))$ decompose uniquely into indecomposable states in the EGS$_M$, where $d\geq 3$ is a prime.
\end{thm}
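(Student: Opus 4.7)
The plan is to reduce the statement to the classification of tuples of alternating bilinear forms over $\mathbb{Z}_d$ via the commutation matrix formalism, and then invoke Sergeichuk's uniqueness result (Theorem 2 in Ref.~\cite{Se88}).

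First I would establish existence: given $\ket{\psi}\in\text{Stab}_d(P(M,n))$ with associated tuple of commutation matrices $(C_\alpha)_{\alpha\in P(M,n)}$, I apply the qudit analog of Theorem~\ref{thm_block_diag} iteratively, splitting $(C_\alpha)$ into a direct sum of indecomposable sub-tuples under simultaneous congruence. Each summand satisfies the rank condition of Theorem~\ref{thm_rank_condition}, because the rank condition is preserved under direct sums and, as in the step between Equation~\eqref{eq:equation} and its consequence, the ranks on each side of a block decomposition must saturate the inequality \eqref{eq_rank_inequ} separately (otherwise the full tuple would fail the saturation). Hence each indecomposable block is itself a tuple of commutation matrices and, by the qudit construction sketched in Section~\ref{sec_plc_trafo_stab}, corresponds to an indecomposable qudit stabilizer state, i.e.\ an element of EGS$_M$.

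Next I would prove uniqueness. Suppose $\ket{\psi}$ admits two PLC decompositions into indecomposables,
\begin{equation}
\ket{\psi}\,\simeq_{\text{PLC}}\,\bigotimes_{i=1}^{r}\ket{\phi_i}\,\simeq_{\text{PLC}}\,\bigotimes_{j=1}^{s}\ket{\phi'_j},
\end{equation}
with $\ket{\phi_i},\ket{\phi'_j}\in$~EGS$_M$. Since the tuple of commutation matrices of a tensor product is the direct sum of the tuples of commutation matrices of the factors (the defining observation preceding Theorem~\ref{thm_block_diag}, which carries over verbatim to qudits), both decompositions yield direct-sum tuples that are congruent to any fixed tuple $(C_\alpha)$ of $\ket{\psi}$ by Theorem~\ref{thm_1_qudits}. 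Because $d\geq 3$ is prime, the field $\mathbb{Z}_d$ has characteristic different from $2$, so Theorem~2 of Ref.~\cite{Se88} applies and asserts that a tuple of alternating matrices over $\mathbb{Z}_d$ decomposes uniquely, up to permutation and simultaneous congruence of the blocks, into indecomposable summands. Therefore $r=s$ and, after a permutation of indices, the $i$-th block in one decomposition is congruent to the $i$-th block in the other.

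Finally, I would translate this back through Theorem~\ref{thm_1_qudits}: congruence of the individual tuples of commutation matrices is equivalent to PLC equivalence of the corresponding stabilizer states, so after the same permutation $\ket{\phi_i}\simeq_{\text{PLC}}\ket{\phi'_i}$ for every $i$. This is the desired uniqueness statement modulo PLC equivalence of the factors, which is exactly how uniqueness in EGS$_M$ is defined. The main obstacle I anticipate is bookkeeping around the rank condition: one has to check that the blocks produced by the abstract classification of alternating-form tuples are not just arbitrary alternating matrices but genuine commutation matrices of stabilizer states. As argued above, this follows because the saturation of inequality~\eqref{eq_rank_inequ} is additive over direct sums, so any block in a direct-sum decomposition of a commutation tuple is again a commutation tuple; the rest is then a straightforward translation between the matrix and state descriptions.
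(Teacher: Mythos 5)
Your proposal is correct and follows essentially the same route as the paper: reduce to tuples of commutation matrices via the qudit versions of Theorem~\ref{thm_block_diag} and Theorem~\ref{thm_1_qudits}, verify that blocks of a commutation tuple are again commutation tuples because saturation of inequality~\eqref{eq_rank_inequ} must hold blockwise, and invoke Theorem~2 of Ref.~\cite{Se88} over $\mathbb{Z}_d$ with $d\ge 3$ for uniqueness of the block decomposition. The bookkeeping you flag as the main obstacle is exactly the content of the paper's proof of Theorem~\ref{thm_block_diag}, which you reproduce correctly.
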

Due to this theorem, any PLC class is uniquely characterized by a tensor product of states from the EGS. Therefore, the remaining task is to determine the EGS.

First, however, some remarks are in order. Let us stress that (qubit and qudit) stabilizer states distributed among parties (where each party can hold more than one qubit) can be described by a subset of higher-dimensional qudit stabilizer states with a single qudit per party and a composite dimension (cf. Ref.~\cite{Kr18}). PLC operations then form a subset of qudit Clifford transformations. On the other hand, viewing qudit stabilizer states of non-prime dimension as lower-dimensional weighted graph states~\cite{Kr18} also reveals that for qudit graph states of non-prime dimension the decompositions under PLU are not necessarily unique. Consider a four ququart state on two parties that decomposes into a maximally entangled state $\ket{\phi^+_4}$, containing 2 ebits of entanglement, and two $\ket{0}$ states. Using some of the results of Ref.~\cite{Kr18} one can see that this decomposition is PLU equivalent to a decomposition consisting of two entangled pairs of ququarts that each contain only one ebit of entanglement. Thus, there exist to decompositions that are PLU equivalent, but the states that occur in these decompositions are PLU inequivalent. Note that qudit stabilizer states and qudit Clifford transformations were also investigated in Refs.~\cite{St17,Ri18}.

Moreover, let us remark that, similar to the qubit case, the results of Scharlau~\cite{Sc76} cannot lead to anything else than the EGS$_3$ for prime dimensions. In particular, they cannot be used to construct the EGS$_4$ since choosing two commutation matrices to be indecomposable but not equal to the $\ket{\phi^+_3}$ or the $\ket{GHZ_3^3}$ one cannot find two additional commutation matrices so that the overall rank constraint is fulfilled.

\subsection{States in the EGS$_4$ for qutrits}

Using the same techniques as in the case of qubits in Section~\ref{sec_nec_suf_deco}, i.e., Fitting's Lemma~\ref{lem_fitting}, we can compute the EGS$_4$ for up to $5$ qutrits by performing an exhaustive search through all $4$-partite graph states up to $5$ qutrits. First, one recovers the states $\ket{0}$, $\ket{\phi^+_3}$, and $\ket{GHZ_3}$ that where already proven in Ref.~\cite{Gr11} to generate all three-partite stabilizer states. This can again be derived from the complete set of indecomposable pairs of alternating forms in Ref.~\cite{Sc76}, as these results hold for any field. Additionally, we identify the $6$ additional states from the EGS$_4$ up to $5$ qutrits, see Figure~\ref{fig_fig4}.

Beyond $5$ qutrits the exhaustive search through all graph states becomes computationally intractable due to the increasing number of graphs but also due to the increasing sizes of the endomorphism rings. Nevertheless, one can gain some additional insight into the EGS$_4$ for more than $5$ qutrits. To that end, let us first see if the states that we found to be in the EGS$_4$ for qubits also appear in the case of qutrits. Again employing Lemma~\ref{lem_fitting} (Fitting's Lemma), one can show that the spiral graphs containing only edges of multiplicities $1$ on $5$ to $10$ qubits are also indecomposable and thus in the EGS$_4$ for qutrits. This raises the question if other spiral graph states containing edges of other multiplicities are also contained in the EGS$_4$. In the following we argue that this is not the case, as for many states edges of higher multiplicities can be removed by local Clifford operations.

To this end, let us recall how the action of the local Clifford group can be described on the level of graphs. The action of the local Clifford group on qudit graph states was completely characterized in Ref.~\cite{Mo06}. In particular, it was shown that there exists a local Clifford operation that, when applied to any vertex $v$, multiplies the multiplicities of all edges connected to $v$ by a constant factor $0
\neq b\in\mathbb{Z}_d$. Thus, for qutrit graph states, there exists a local Clifford operation that converts all edges of multiplicity $2$ to edges of multiplicity $1$ and vice versa. Recall, that a \emph{tree graph} is a graph that does not contain any cycle. Then, we have the following observation.
\begin{obs}
Any PLC orbit that contains a tree graph, also contains the same tree graph, where all edges are of multiplicity $1$.
\end{obs}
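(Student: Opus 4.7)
The plan is to reduce the statement to a simple propagation argument on the tree, using the characterization of local Clifford action on qudit graphs recalled just before the observation.

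First, I would set up the variables explicitly. Let $G=(V,E)$ be the underlying tree with edge multiplicities $m_{uv}\in\mathbb{Z}_3^{\times}=\{1,2\}$. By the result of Ref.~\cite{Mo06} cited in the text, for each vertex $v$ and each $b_v\in \mathbb{Z}_3^\times$ there is a local Clifford operation whose effect on the graph is to multiply the multiplicity of every edge incident to $v$ by $b_v$. Applying such an operation at every vertex, with independently chosen $b_v$, transforms each edge multiplicity via $m_{uv}\mapsto b_u b_v m_{uv}\pmod 3$. Since local Clifford operations are a special case of PLC, finding any assignment $(b_v)_{v\in V}$ making all resulting multiplicities equal to $1$ proves the observation.

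Next, I would solve this assignment problem. We need $b_u b_v\equiv m_{uv}^{-1}\pmod 3$ for every edge $\{u,v\}\in E$. Passing to the isomorphism $\mathbb{Z}_3^\times\cong\mathbb{Z}_2$ (via $1\mapsto 0$, $2\mapsto 1$), set $\tilde b_v\in\mathbb{Z}_2$ and $\tilde c_{uv}\in\mathbb{Z}_2$ as the images of $b_v$ and $m_{uv}^{-1}$ respectively. The constraints become the linear system
\begin{equation}
\tilde b_u+\tilde b_v=\tilde c_{uv}\pmod 2,\qquad \{u,v\}\in E.
\end{equation}
Because $G$ is a tree, this system is always solvable: pick any root $r\in V$, set $\tilde b_r:=0$, and for every other vertex $v$ define $\tilde b_v$ recursively by traversing the unique path from $r$ to $v$ and using $\tilde b_v:=\tilde c_{uv}+\tilde b_u$, where $u$ is the parent of $v$. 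Since the tree contains no cycles, this assignment is well-defined and no consistency condition can fail.

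Finally, I would conclude: the corresponding $(b_v)_{v\in V}$ give a local Clifford transformation that converts $\ket{G}$ to a graph state with the same underlying tree but all edge multiplicities equal to $1$. In particular, the local (hence PLC) orbit of $\ket{G}$ contains this unit-multiplicity version. The whole argument is essentially a system of $\mathbb{Z}_2$-valued edge constraints on a tree; the acyclicity is doing all the work, so the only real step to verify carefully is that no obstruction arises, which is immediate. The main subtlety (if any) is just making precise that applying the vertex-wise Clifford operations commute in their action on multiplicities modulo $3$, which is clear since each acts by multiplication on the incident edges independently.
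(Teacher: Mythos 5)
Your proof is correct and rests on the same two ingredients as the paper's one-line argument: the vertex-wise local Clifford from Ref.~\cite{Mo06} that rescales all incident multiplicities by a unit $b_v$, and the acyclicity of the tree, which guarantees the required rescaling can always be realized. The paper phrases this as pushing higher multiplicities along the tree to a leaf and fixing them there, whereas you formalize the identical mechanism as a root-to-leaf solvable linear system $\tilde b_u+\tilde b_v=\tilde c_{uv}$ over $\mathbb{Z}_2$ — a cleaner but equivalent presentation.
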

This observation simply follows from the fact that in a tree graph all edges of higher multiplicities can be moved to a leaf and can then be locally converted to edges of multiplicities $1$. Notably, this means that any qutrit spiral graph state, regardless of the multiplicities of its edges, is always PLC equivalent to the spiral graph states with all multiplicities being equal to $1$.

Moreover, the graph state containing three leafs is indecomposable and edges of higher multiplicities play no role by the arguments above. The only state where edges of higher multiplicities play a role is the one containing cycles. It is again indecomposable for qutrits and appears in four PLC inequivalent forms with different numbers of edges with multiplicity $2$.

Finally, as in the case of four qutrits, the closed loop graph containing a single edge of multiplicity two is indecomposable. From these few states one can already deduce, that the EGS$_4$ for qutrits already has a much richer structure, due to the appearance of graphs that contain edges of higher multiplicities, see Figure~\ref{fig_fig4}.

\subsection{Beyond PLC transformations\label{sec:beyond}}

Let us finally briefly comment on how (stabilizer) states transform under more general types of transformations, also constrained by locality.

First, let us consider general PLU transformations (not just PLC). In Ref.~\cite{Gu21} it was shown that already in the case of two parties decompositions of arbitrary states into indecomposable states is not unique. More precisely, let $\ket{\psi_1}$ ($\ket{\psi_2}$) be a state of two qudits of dimension $d_1$ ($d_2$). Ref.~\cite{Gu21} shows that for any $d_1,d_2\ge 4$ there exist states of the form $\ket{\psi_1}\otimes \ket{\psi_2}$ shared among two parties and a PLU transformation $U$ such that

\begin{equation}
    U\ket{\psi_1}\otimes \ket{\psi_2}=\ket{\psi_1'}\otimes \ket{\psi_2'}
\end{equation}
and such that $\ket{\psi_1}$ (and $\ket{\psi_2}$) is not PLU equivalent to any $\ket{\psi_i'}$. If we consider quasilocal unitary operations instead of PLU, it follows from the results of Ref.~\cite{Kr09} that also decompositions of stabilizer are not unique.

If we allow for measurements in addition to PLC transformations, transformations between PLC inequivalent decompositions become possible . Consider for instance two Bell states $\ket{\phi^+}^{\otimes 2}$ shared between parties $\alpha_1=\{1\}$, $\alpha_2=\{2,3\}$ and $\alpha_3=\{4\}$. Then, applying $CZ_{2,3}$ followed by local complementation w.r.t. qubit $3$ and a $Z$ basis measurement of qubit $3$ results in the state $\ket{GHZ_3}$ shared between all three parties and $\ket{0}$ at party $\alpha_2$ (up to local Pauli corrections). It is clear that $\ket{\phi^+}^{\otimes 2}$ and $\ket{GHZ_3}\otimes \ket{0}$ are PLU inequivalent.

\section{Conclusion and Outlook}

In this work, we studied PLC transformations of stabilizer states. PLC transformations are a physically motivated extension of local Clifford operations that naturally arise in the context of quantum networks that can provide large amounts of bipartite entanglement between well connected nodes. Moreover, the fact that these operations can be implemented deterministically makes them particularly interesting to study.

First, we addressed the question of how PLC transformations between graph states change the corresponding graph. It was shown in Ref.~\cite{Vn04} that local Clifford transformations between graph states correspond to a sequence of local complementations on the respective graphs. We found that any PLC transformation between graph states can be realized by local complementations supplemented by the addition and removal of edges within parties.

Then, we studied the classification of PLC equivalence classes of stabilizer states. We discussed an approach based on invariant polynomials and showed that this approach achieves a complete classification. However, since we are not aware of an efficient way to evaluate those polynomial, the approach seems unpractical.

Motivated by the results of Ref.~\cite{Br06}, we introduced a new mathematical tool, the commutation matrix formalism, to study PLC equivalence classes of stabilizer states. More precisely, we related the problem of classifying PLC equivalence classes to the classification problem of tuples of alternating forms, which is a well studied problem in linear algebra.

We showed that two stabilizer states are PLC equivalent if they admit the same set of commutation matrices, and moreover, states can be decomposed into tensor products of smaller stabilizer states under PLC if their commutation matrices can be block-diagonalized. This approach allowed us to gain several insights into the entanglement structure of multipartite stabilzier states. We showed, that in contrast to the $3$-partite case, $M\geq 4$-partite stabilizer states contain infinitely many different types of entanglement under PLC transformations and even under PLU transformations. We derived necessary and sufficient conditions to decide whether or not a given stabilizer state is decomposable. To demonstrate the power of our approach we numerically computed the EGS$_4$ for up to $10$ qubits. We furthermore showed that the decomposition of states into indecomposable states is unique in case the states do not admit additional PLC symmetries.

Finally, we generalized the commutation matrix formalism to qudit systems of prime dimension $d$. As the commutation matrices are now defined over the field $\mathbb{Z}_d$, the results of Ref.~\cite{Se88} imply that qudit stabilizer states decompose uniquely into tensor products of indecomposable ones. Again we employed the commutation matrix formalism to compute the EGS$_4$ up to $5$ qutrits.

An interesting topic for future research is to resolve the question whether or not qubit stabilizer states decompose uniquely into indecomposable ones. Furthermore, the structure of the EGS$_4$ up to $10$ qubits suggests that maybe a complete description of the stabilizer states in this set is possible. Considering PLU transformations, it would be interesting to understand whether stabilizer states which are indecomposable under PLC can decompose under PLU into states which are not PLU equivalent to stabilizer states. Note that these states would have to be locally maximally entangleable (LME) states~\cite{Kr09}. In case such decompositions exist, one can study the question how the EGS for stabilizer states looks like if it can contain states which are not PLU equivalent to stabilizer states.

\section{Acknowledgements}
We thank Vladimir V. Sergeĭchuk for helpful correspondence. We acknowledge financial support from the Austrian Science Fund (FWF): W1259-N27, P32273-N27, FG5-L, and the SFB BeyondC (Grant No. F7107-N38).

\onecolumngrid

\clearpage

\appendix

\section{Generalized local complementation\label{app:PLC}}
In this section we provide a proof of Theorem~\ref{thm_gen_loc}. To that end, we first prove Lemma~\ref{lem_gen_loc}, which shows that the action of any $2$-qubit Clifford operator on a graph state can be implemented by $CZ$ gates and local complementation (LCE) on the respective qubits up to local Clifford operations. Let us state the lemma in its formal version. 
\begin{manuallemma}{2}[formal version]\label{lem_gen_loc_formal}
	Let $\ket{G}$ be a $n\ge 2$ qubit graph state. Then for any $C\in\mathcal{C}_2$ acting on qubits 1 and 2 there exists a local Clifford operator $L_1\in\mathcal{C}_n^L$ and a LCE transformation with respect to vertex $1$ and $2$ described by the operator $L_2\otimes L_3$, where $L_2\in\mathcal{C}_2$ acts on qubits 1 and 2, and $L_3\in\mathcal{C}_{n-2}^L$ acts on qubits $\{3,\ldots n\}$, such that 
	\begin{equation}
	C\otimes\mathds{1}\ket{G}=L_1(L_2\otimes L_3)\ket{G}.
	\end{equation}
\end{manuallemma}
Note that the 2-qubit Clifford group has $|\mathcal{C}_2|=11520$ elements and its subgroup, the local Clifford group, has $|\mathcal{C}_2^L|=24^2$ elements. Thus, $\mathcal{C}_2$ partitions into $11520/24^2=20$ different right cosets $\mathcal{C}_2^Lg=\{fg|f\in\mathcal{C}_2^L,g\in \mathcal{C}_2\}$ of $\mathcal{C}_2^L$, i.e., $\mathcal{C}_2=\bigcup_{i=1}^{20}\mathcal{C}^L_2g_i$, with $g_i\in \mathcal{C}_2^L$. With these insights let us show Lemma \ref{lem_gen_loc}.

\begin{proof}
Observe, that Lemma \ref{lem_gen_loc} is equivalent to following statement. Clifford operators corresponding to LCE with respect to vertices $1$ and $2$ can generate (up to local Clifford operations) an element of every right coset $\mathcal{C}_2^Lg$, with $g\in\mathcal{C}_2$ acting on qubits $1$ and $2$. We show this in the following by generating an element of each coset via LCE.

To this end we consider the operators that correspond to LCE. Adding and removing the edge between vertex $1$ and $2$ corresponds to a controlled phase gate $CZ_{12}=\ketbra{0}\otimes \mathds{1}+\ketbra{1}\otimes Z$ between qubits $1$ and $2$. Local complementation at qubit $1$ ($2$ qubit) corresponds to the operator $\exp(-i\pi/4X)\otimes \exp(i\pi/4Z)$ ($\exp(i\pi/4Z)\otimes \exp(-i\pi/4X)$) on qubits $1$ and $2$ if qubit $1$ and $2$ are connected. Otherwise, it corresponds to the operator $\exp(-i\pi/4X)\otimes \mathds{1}$ ($\mathds{1}\otimes \exp(-i\pi/4X)$). Moreover, local complementation creates additional local Clifford gates on other neighbours of qubit $1$ and $2$. Let us write all possible operators on qubit $1$ and $2$ as an (ordered) tuple
\begin{equation}
\left(CZ_{12},e^{-i\frac{\pi}{4} X}\otimes \mathds{1},\mathds{1}\otimes e^{-i\frac{\pi}{4} X},e^{-i\frac{\pi}{4} X}\otimes e^{i\frac{\pi}{4} Z},e^{i\frac{\pi}{4}Z}\otimes e^{-i\frac{\pi}{4} X}\right).\label{eq_tuple_lce}
\end{equation}
Figure \ref{tablea} shows a representative of each right coset of $\mathcal{C}_2^L$ in $\mathcal{C}_2$ and a sequence of the above operators generating it. The sequence respects that the operators for local complementation depend on whether qubit $1$ and $2$ are connected. Figure \ref{tablea} considers the case where qubit $1$ and $2$ are initially disconnected. Let $\{C_j\}$ be the operators listed in the table. Clearly, if qubit $1$ and $2$ are initially connected, then $\{C_jCZ_{1,2}\}$ are representatives of all different cosets.
\end{proof}

Next let us show Theorem~\ref{thm_gen_loc}, which we state here again.
\begin{manualtheorem}{1}
	Let $\ket{G}$ and $\ket{G'}$ be $n$-qubit graph states distributed among $P(M,n)$ parties. Then $\ket{G}$ is PLC equivalent to $\ket{G'}$ if and only if the graphs $G$ and $G'$ are related via a sequence of local complementations and edge additions/removals within parties (LCE).
\end{manualtheorem}

\begin{proof}[Proof of Theorem \ref{thm_gen_loc}]
	The if part is trivial.	To prove the only if part let $\ket{G}$ and $\ket{G'}$ be $M$-partite graph states. Suppose there exists a PLC operator $\bigotimes_{\alpha\in P(M,n)} C_\alpha$ such that $\ket{G'}=\bigotimes_\alpha C_\alpha\ket{G}$. Then, for every party $\beta\in P(M,n)$ we do the following. As $\mathcal{C}_{|\beta|}$ is generated by single- and two-qubit Clifford operators (see also Section \ref{sec_preliminaries}) we can write $C_\beta=\prod_{j=1}^kC_\beta^j$, where $C_\beta^j \in \mathcal{C}_2$ for all $j\in[k]$. Lemma \ref{lem_gen_loc} allows to replace the action of $C_\beta^k$ by the action of a local Clifford operator $L_1$ on the state $L_2\otimes L_3\ket{G}$. The state $L_2\otimes L_3\ket{G}$ is a again a graph state as the operator $L_2\otimes L_3$ describes a sequence of LCE. The local operator $L_1$ can then be commuted to the end of the circuit by possibly changing the remaining operators $C_\beta^j$ to some other $2$-qubit Clifford operators and possibly the operators $C_{\alpha}$ for $\alpha\neq \beta$ to some other PLC operators. We now continue like this for the operators $C_\beta^j$ for $j\in[k-1]$. It is straightforward to verify that, after applying this reasoning to every party, we end up with an equation of the form $\ket{G'}=L\ket{G''}$, where $\ket{G''}$ is obtained from $\ket{G}$ via LCE. It remains to show that the operator $L$ corresponds to local complementation. In Ref.~\cite{Vn04} it is shown that two graph states are related by a local Clifford operator if and only if their graphs are related via a sequence of local complementations. We conclude that $G''$ and $G'$ are related via a sequence of local complementations and therefore $G'$ and $G$ by a sequence of LCE.
\end{proof}

\begin{sidewaysfigure}
	\centering
	\small
		\begin{tabular}{c|c|c|c|c|c|c|c}\label{fig1}
			$B$ & generation & $B$& generation& $B$& generation& $B$& generation\\\hline\hline
			
			$\begin{pmatrix}1&1&0&0\\0&1&0&0\\0&0&1&0\\0&0&0&1\end{pmatrix}$ & $[1]$ & $\begin{pmatrix}1&1&0&1\\0&1&0&0\\0&1&1&1\\0&0&0&1\end{pmatrix}$ & $[0, 3, 0, 2, 0, 3, 0]$&$\begin{pmatrix}1&0&1&0\\0&1&0&0\\0&0&1&0\\0&1&0&1\end{pmatrix}$ & $[1, 0, 3, 0]$&$\begin{pmatrix}1&0&0&0\\0&1&0&1\\1&0&1&1\\0&0&0&1\end{pmatrix}$ & $[0, 4, 0]$\\\hline
			
			$\begin{pmatrix}0&0&1&1\\0&0&0&1\\1&0&0&0\\0&1&0&0\end{pmatrix}$ & $[0, 4, 3, 0, 2, 1, 0, 4, 0]$ & $\begin{pmatrix}0&1&1&0\\0&0&0&1\\1&0&0&1\\0&1&0&0\end{pmatrix}$ & $[0, 4, 3, 4, 0]$&$\begin{pmatrix}1&1&1&1\\0&1&0&0\\0&0&1&1\\0&1&0&1\end{pmatrix}$ & $[2, 0, 3, 0]$&$\begin{pmatrix}1&0&0&0\\0&1&1&0\\0&0&1&0\\1&0&0&1\end{pmatrix}$ & $[0]$\\\hline
			
			$\begin{pmatrix}0&1&1&1\\0&1&0&1\\1&1&1&0\\0&1&0&0\end{pmatrix}$ & $[1, 2, 0, 4, 3, 0]$ & $\begin{pmatrix}0&1&1&0\\0&1&0&1\\1&1&1&1\\0&1&0&0\end{pmatrix}$ & $[1, 0, 4, 3, 0]$ &$\begin{pmatrix}1&0&0&0\\0&1&1&1\\0&0&1&1\\1&0&0&1\end{pmatrix}$ & $[2, 0]$&$\begin{pmatrix}1&0&1&0\\0&0&0&1\\1&1&0&1\\0&1&0&1\end{pmatrix}$ & $[1, 0, 3, 4, 0]$\\\hline
			
			$\begin{pmatrix}1&0&1&1\\0&0&1&0\\0&1&1&0\\1&1&1&0\end{pmatrix}$ & $[2, 1, 0, 3, 0, 2, 0]$&$\begin{pmatrix}1&0&1&0\\0&0&1&1\\0&1&1&1\\1&1&1&1\end{pmatrix}$ & $[0, 3, 0, 2, 1, 0]$&$\begin{pmatrix}1&1&1&0\\0&0&0&1\\1&0&0&1\\0&1&0&1\end{pmatrix}$ & $[0, 3, 4, 0]$&$\begin{pmatrix}1&1&0&0\\0&1&0&1\\1&1&1&1\\0&0&0&1\end{pmatrix}$ & $[1, 0, 4, 0]$\\\hline
			
			$\begin{pmatrix}1&1&1&1\\0&0&1&0\\0&1&1&0\\1&0&1&0\end{pmatrix}$ & $[2, 0, 3, 0, 2, 0]$&$\begin{pmatrix}1&1&0&0\\0&1&1&0\\0&0&1&0\\1&1&0&1\end{pmatrix}$ & $[1,0]$&$\begin{pmatrix}1&1&0&0\\0&1&1&1\\0&0&1&1\\1&1&0&1\end{pmatrix}$ & $[2,1,0]$&$\begin{pmatrix}1&1&0&1\\1&1&1&0\\1&0&1&1\\1&1&0&0\end{pmatrix}$ & $[0, 4, 0, 1, 0]$\\\hline

		\end{tabular}
	\caption{This table contains an element $B$ of each right coset of $\mathcal{C}_2^L$ in $\mathcal{C}_2$ and a sequence of operators from the tuple in Eq.~\eqref{eq_tuple_lce} generating it. Each sequence is specified by a list of indices of the respective elements in the tuple (starting from $0$) and assuming that qubit $1$ and $2$ are initially not connected; the left most entry is the index of the operator applied first; the order of the generators is compatible with the fact the operators for local complementation depend on whether qubit $1$ and $2$ are connected.\label{tablea}}

\end{sidewaysfigure}

\section{Symplectic vector spaces and their connection to stabilizers\label{sec_sym}}
There is a close connection between stabilizer states and symplectic vector spaces. In this section we recall some important results on symplectic vector spaces, which have been studied extensively in the literature. For more details we refer the reader to, e.g., Refs.~\cite{As00,Gr06,La02}.

Let $V$ be a finite-dimensional vector space over any field $F$. A bilinear form is a map $\omega:V\times V\rightarrow F$ which is linear in both arguments. After fixing a basis $\{\vec{e}_i\}$ for $V$, we can associate to the bilinear form $\omega$ the matrix $B_\omega\in M_{\text{dim}(V)\times \text{dim}(V)}(F)$ whose entries are given by $(B_\omega)_{ij}=\omega(\vec{e}_i,\vec{e}_j)$. The bilinear form may then be written as
\begin{equation}
\omega(\vec{f},\vec{g})=\vec{f}^TB_\omega \vec{g}.
\end{equation}
A bilinear form is called \emph{alternating} if $\omega(\vec{f},\vec{f})=0$ for all $\vec{f}\in V$. Note that any alternating form is also skew-symmetric, i.e., $\omega(\vec{x},\vec{y})=-\omega(\vec{y},\vec{x})$, which follows from $\omega(\vec{x}+\vec{y},\vec{x}+\vec{y})=0$ together with the bilinearity of $\omega$. Consequently, any matrix representation of an alternating bilinear form satisfies $(B_\omega)_{ij}=-(B_\omega)_{ji}$ and $(B_\omega)_{ii}=0$ for all $i,j\in[\text{dim}(V)]$. Any matrix with these properties is called alternating. 

Given an alternating form and a subspace $W\subseteq V$, we define the dual space as $W^\perp=\{\vec{f}\in V|\omega(\vec{f},\vec{g})=0\ \forall \vec{g}\in W\}$. An important type of subspace which appears in this context is a \emph{hyperbolic plane}. A hyperbolic plane is a $2$-dimensional subspace spanned by a hyperbolic pair, which is a pair of vectors $\vec{f},\vec{g}\in V$ such that $\omega(\vec{f},\vec{g})=1$. Given a vector space and an alternating form, the vector space decomposes into a orthogonal direct sum of hyperbolic planes and a subspace on which the form is $0$. This is stated in the following theorem, a proof of which can for instance be found in Ref.~\cite{As00}.
\begin{thm}\label{thm_dickson_gen}
	Let $V$ be a vector space and let $\omega:V\times V\rightarrow F$ be an alternating form. Then $V=V_1\oplus\ldots\oplus V_m\oplus V^\perp$ where $V_1,\ldots ,V_m$ are hyperbolic planes for some $m\in \mathbb{N}$ and $\omega(\vec{f},\vec{g})=0$ if $\vec{f}\in V^\perp$ or $\vec{g}\in V^\perp$.
\end{thm}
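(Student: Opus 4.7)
My plan is to prove Theorem~\ref{thm_dickson_gen} by induction on $\dim V$, extracting hyperbolic planes one at a time until only the radical $V^\perp$ remains. The argument is the standard symplectic Gram--Schmidt, adapted to tolerate a nontrivial radical.

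First I would separate off the radical. Pick any subspace $W\subseteq V$ complementary to $V^\perp$, so that $V=W\oplus V^\perp$. I would verify that the restriction $\omega|_W$ is non-degenerate: if $\vec{w}\in W$ satisfied $\omega(\vec{w},\vec{w}')=0$ for all $\vec{w}'\in W$, then since $\omega$ vanishes on $V^\perp$ by definition, $\vec{w}$ would be orthogonal to all of $V=W\oplus V^\perp$, hence $\vec{w}\in W\cap V^\perp=\{0\}$. The claim $\omega(\vec{f},\vec{g})=0$ whenever $\vec{f}$ or $\vec{g}$ lies in $V^\perp$ is immediate from the definition of $V^\perp$, so it remains to decompose $W$ as an orthogonal sum of hyperbolic planes.

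Next I would peel off a hyperbolic plane. If $W=0$ we are done with $m=0$. Otherwise, since $\omega|_W$ is non-degenerate and not identically zero, pick any nonzero $\vec{f}\in W$; non-degeneracy yields a $\vec{g}'\in W$ with $\omega(\vec{f},\vec{g}')\neq 0$, and I rescale to $\vec{g}=\vec{g}'/\omega(\vec{f},\vec{g}')$ so that $\omega(\vec{f},\vec{g})=1$. The alternating property forces $\vec{f}$ and $\vec{g}$ to be linearly independent (any relation $\vec{g}=c\vec{f}$ would give $\omega(\vec{f},\vec{g})=0$), so $V_1:=\mathrm{span}\{\vec{f},\vec{g}\}$ is a hyperbolic plane. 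I would then show $W=V_1\oplus (V_1^\perp\cap W)$ by projecting: for any $\vec{x}\in W$, define
\begin{equation}
\vec{x}':=\vec{x}-\omega(\vec{x},\vec{g})\vec{f}+\omega(\vec{x},\vec{f})\vec{g}.
\end{equation}
Using $\omega(\vec{g},\vec{f})=-\omega(\vec{f},\vec{g})=-1$ (alternating, hence skew) and $\omega(\vec{f},\vec{f})=\omega(\vec{g},\vec{g})=0$, a one-line computation gives $\omega(\vec{x}',\vec{f})=\omega(\vec{x}',\vec{g})=0$, so $\vec{x}'\in V_1^\perp\cap W$ and $\vec{x}-\vec{x}'\in V_1$. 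The intersection $V_1\cap V_1^\perp$ is trivial because $a\vec{f}+b\vec{g}\in V_1^\perp$ forces $\omega(a\vec{f}+b\vec{g},\vec{g})=a=0$ and $\omega(a\vec{f}+b\vec{g},\vec{f})=-b=0$.

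Finally, I would close the induction. Setting $W':=V_1^\perp\cap W$, the restriction $\omega|_{W'}$ is again non-degenerate: any vector in its radical is orthogonal to $W'$ by hypothesis and to $V_1$ by construction, hence orthogonal to all of $W$, contradicting non-degeneracy of $\omega|_W$. Since $\dim W'=\dim W-2<\dim W$, the induction hypothesis yields a decomposition $W'=V_2\oplus\cdots\oplus V_m$ into hyperbolic planes, and combining gives $V=V_1\oplus\cdots\oplus V_m\oplus V^\perp$ as claimed. The only delicate point is the non-degeneracy bookkeeping when restricting the form to $W$ and then to $W'$; everything else reduces to routine bilinear algebra, and no special properties of the field $F$ are used (in particular the argument works equally well over $\mathbb{Z}_2$, where ``alternating'' is strictly stronger than ``skew-symmetric'').
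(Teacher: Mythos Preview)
Your argument is correct and is precisely the standard symplectic Gram--Schmidt procedure. The paper does not supply its own proof of Theorem~\ref{thm_dickson_gen} but merely cites Ref.~\cite{As00}, where essentially the same inductive extraction of hyperbolic planes is carried out; so your proposal in fact fills in what the paper only references.
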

This Theorem is equivalent to the following statement. For every alternating form $\omega$ on a vector space $V$ there exists a basis of $V$ w.r.t. which the matrix representation of $\omega$ is a direct sum of the blocks
\begin{equation}
\begin{pmatrix}
0
\end{pmatrix},\text{ and }
\begin{pmatrix}
0&1\\
-1&0
\end{pmatrix}.\label{eq_dickson_mat_form}
\end{equation}
Stated differently, there always exists a basis change in $V$ described by the invertible matrix $R$ such that $R^TB_\omega R$ is block-diagonal, with the blocks as specified above. Note that for $V=\mathbb{Z}_2^n$ Theorem~\ref{thm_dickson_gen} is also known as \emph{Dickson's Theorem} (cf. Chapter 15 in Ref.~\cite{Sl77}). We state it here for general fields as, when we treat qudit stabilizer states with qudits of prime dimension $d$, we need the Theorem for the field $\mathbb{Z}_d$.

Let us continue with some definitions. A bilinear form is called \emph{non-degenerate} if the conditions $\omega(\vec{f},\vec{g})=0$ for all $\vec{g}\in V$ imply that $\vec{f}=0$. An alternating bilinear form which is non-degenerate is called \emph{symplectic}. A symplectic vector space is a vector space $V$ together with symplectic bilinear form $\omega$. Symplectic vector spaces have even dimension as follows for instance from Theorem~\ref{thm_dickson_gen}. A subspace $W\subseteq V$ of a symplectic vector space $V$ is called \emph{isotropic} if $W\subseteq W^\perp $ and \emph{maximally isotropic} if $W^\perp=W$. Moreover, for any subspace $W\subseteq V$ it holds that $\text{dim}(W)+\text{dim}(W^\perp)=\text{dim}(V)$.

An invertible linear map $U:V\rightarrow V$ is referred to as \emph{isometry} for a symplectic form $\omega$ if it preserves $\omega$, i.e., if $\omega(U\vec{f},U\vec{g})=\omega(\vec{f},\vec{g})$ for all $\vec{f},\vec{g}\in V$, or equivalently, if $U^TB_\omega U=B_\omega$.
Such transformations are also called \emph{symplectic} transformations. Symplectic vector spaces allow for the extension of any isometry between subspaces to isometries on the whole vector space, as stated in the following lemma which can be found in Ref.~\cite{As00}.

\begin{lem}[Witt's Lemma]\label{lem_witt_lemma}
	Let $V$ be a symplectic space with symplectic form $\omega:V\times V\rightarrow \mathcal{F}$. Let $A,B$ be subspaces of $V$ and let $U:A\rightarrow B$ be an invertible linear map which satisfies $\omega(U\vec{f},U\vec{g})=\omega (\vec{f},\vec{g})$. Then $U$ can be extended to an isometry of $V$.
\end{lem}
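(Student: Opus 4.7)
The plan is to prove Witt's Lemma by induction on $k=\dim A$, with the key tool being the family of \emph{symplectic transvections} of $V$. Recall that for any $\vec{v}\in V\setminus\{0\}$ and any $\lambda\in F$, the map
\begin{equation}
T_{\vec{v},\lambda}(\vec{x})=\vec{x}+\lambda\,\omega(\vec{v},\vec{x})\,\vec{v}
\end{equation}
is an isometry of $\omega$ (a quick calculation using $\omega(\vec{v},\vec{v})=0$), and it fixes the hyperplane $\vec{v}^{\perp}$ pointwise. This gives us a rich supply of isometries that are ``the identity except along one direction'' and will serve as the building blocks for the extension.

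The base case $k=0$ is trivial (take $U=\id$). For the inductive step, I would pick a hyperplane $A_0\subset A$ of codimension one and write $A=A_0\oplus\langle\vec{a}\rangle$. The induction hypothesis applied to $U|_{A_0}:A_0\to U(A_0)$ yields an isometry $\phi$ of $V$ extending it. Replacing $U$ by $\phi^{-1}\circ U$, I may assume that $U$ fixes $A_0$ pointwise and sends $\vec{a}$ to some vector $\vec{b}\in V$. The isometry condition on $A_0\cup\{\vec{a}\}$ forces $\omega(\vec{a}-\vec{b},\vec{x})=\omega(\vec{a},\vec{x})-\omega(\vec{b},\vec{x})=0$ for all $\vec{x}\in A_0$, i.e.\ $\vec{a}-\vec{b}\in A_0^{\perp}$. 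So the problem reduces to: construct an isometry of $V$ that fixes $A_0$ pointwise and maps $\vec{a}\mapsto \vec{b}$, given $\vec{a}-\vec{b}\in A_0^{\perp}$.

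The easy case is $\omega(\vec{a},\vec{b})\neq 0$. Setting $\vec{v}=\vec{b}-\vec{a}\in A_0^{\perp}$ and $\lambda=-1/\omega(\vec{v},\vec{a})=-1/\omega(\vec{b},\vec{a})$, a direct check shows that $T_{\vec{v},\lambda}$ fixes $A_0$ (because $\vec{v}\in A_0^{\perp}$ implies $A_0\subseteq \vec{v}^{\perp}$) and sends $\vec{a}$ to $\vec{b}$. The only case left is the \emph{degenerate} case $\vec{a}\neq\vec{b}$ but $\omega(\vec{a},\vec{b})=0$. This will be the main obstacle, and I plan to handle it by introducing an auxiliary vector $\vec{c}\in V$ and realising the transformation $\vec{a}\mapsto\vec{b}$ as a composition of two transvections, via an intermediate step $\vec{a}\mapsto\vec{c}\mapsto\vec{b}$ where each step falls into the easy case above.

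Concretely, I would choose $\vec{c}$ so that (i) $\vec{c}-\vec{a}\in A_0^{\perp}$ and $\vec{c}-\vec{b}\in A_0^{\perp}$, and (ii) $\omega(\vec{a},\vec{c})\neq 0$ and $\omega(\vec{c},\vec{b})\neq 0$. Condition (i) is easily arranged by taking $\vec{c}\in \vec{a}+A_0^{\perp}$, since $\vec{a}-\vec{b}\in A_0^{\perp}$ already. Condition (ii) reduces to finding $\vec{u}\in A_0^{\perp}$ with $\omega(\vec{u},\vec{a})\neq 0$ and $\omega(\vec{u},\vec{b})\neq 0$, then setting $\vec{c}=\vec{a}+\vec{u}$ (possibly rescaled); the existence of such $\vec{u}$ follows from nondegeneracy of $\omega$ restricted to suitable subspaces — specifically, $A_0^{\perp}\not\subseteq\vec{a}^{\perp}$ (otherwise $\vec{a}\in A_0^{\perp\perp}=A_0$, contradiction), and similarly for $\vec{b}$, so the union of two proper subspaces of $A_0^{\perp}$ cannot exhaust it (over infinite fields this is immediate; over $\mathbb{Z}_d$ a small counting argument, using $\dim A_0^{\perp}\geq 2$ in the relevant cases, suffices, and the trivial sub-case $\dim A_0^{\perp}=1$ forces $\vec{a}=\vec{b}$). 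Once $\vec{c}$ is fixed, two successive applications of the easy case yield transvections $T_1,T_2$ with $T_1(\vec{a})=\vec{c}$ and $T_2(\vec{c})=\vec{b}$, both fixing $A_0$. Their composition is the desired extension, completing the induction. The subtlety in the counting step for small finite fields is what I expect to be the most delicate piece of bookkeeping, but it never actually obstructs the construction.
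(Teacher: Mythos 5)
Your argument is correct, but it is worth noting that the paper does not actually prove this statement: Witt's Lemma is quoted as a known result with a citation to Aschbacher's \emph{Finite Group Theory}, so there is no in-paper proof to match. What you supply instead is the classical, self-contained proof of the symplectic extension theorem by induction on $\dim A$ using symplectic transvections, and all the essential points are in place: the reduction (via the inductive hypothesis and composition with $\phi^{-1}$) to extending a map that fixes a hyperplane $A_0$ pointwise and sends $\vec{a}\mapsto\vec{b}$ with $\vec{a}-\vec{b}\in A_0^{\perp}$; the single-transvection solution when $\omega(\vec{a},\vec{b})\neq 0$; and the two-transvection detour through an auxiliary $\vec{c}$ when $\omega(\vec{a},\vec{b})=0$. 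Two small remarks. First, with your convention $T_{\vec{v},\lambda}(\vec{x})=\vec{x}+\lambda\,\omega(\vec{v},\vec{x})\,\vec{v}$ and $\vec{v}=\vec{b}-\vec{a}$, the correct coefficient is $\lambda=+1/\omega(\vec{v},\vec{a})=1/\omega(\vec{b},\vec{a})$, not $-1/\omega(\vec{v},\vec{a})$; the sign is invisible over $\mathbb{Z}_2$ but matters over $\mathbb{Z}_d$ for odd $d$, which this paper also needs. Second, the counting worry at the end is unnecessary: a vector space over \emph{any} field is never the union of two proper subspaces (if $\vec{x}\notin W_1$ and $\vec{y}\notin W_2$ then $\vec{x}+\vec{y}$ lies in neither), so once you know $A_0^{\perp}\not\subseteq\vec{a}^{\perp}$ and $A_0^{\perp}\not\subseteq\vec{b}^{\perp}$ (the latter because $\vec{b}=U(\vec{a})\notin U(A_0)=A_0$ by injectivity) the existence of $\vec{u}$ is immediate, with no case distinction on $|\mathbb{Z}_d|$ or on $\dim A_0^{\perp}$. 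This elementary route has the advantage of being fully constructive and tailored to alternating forms, whereas the cited reference proves a more general Witt lemma covering orthogonal and unitary geometries as well.
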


Let us see how the above results connect to the Pauli group and stabilizer states. As described in Section \ref{sec_previous_results}, the Pauli group $\mathcal{P}_n$ modulo phases is isomorphic to the vector space $\mathbb{Z}_2^{2n}$. The commutation relations of the Pauli operators define a symplectic form on $\mathbb{Z}_2^{2n}$, i.e., the Pauli group modulo phases is isomorphic to a symplectic vector space. This correspondence also exists for the qudit Pauli group and $\mathbb{Z}_d^{2n}$ (see Section \ref{sec_qudit_systems}). A stabilizer corresponding to a stabilizer state is a maximal abelian subgroup of $\mathcal{P}_n$ which does not contain $-\mathds{1}$. Via the above isomorphism it corresponds to a maximally isotropic subspace of $\mathbb{Z}_2^{2n}$.

For instance, consider the Bell state $\ket{\phi^+}$, whose stabilizer is $\mathcal{S}=\{\mathds{1},X_1Z_2,Z_1X_2,-Y_1 Y_2\}$. It is straightforward to verify that any operator in $\mathcal{P}_2$ which commutes with every element of $\mathcal{S}$ is already an element of $\mathcal{S}$ up to a phase. The stabilizer $\mathcal{S}$ corresponds to the subspace $V^\mathcal{S}=\{(0,0,0,0),(1,0,0,1),(0,1,1,0),(1,1,1,1)\}\subset\mathbb{Z}_2^4$ via the homomorphism $\sigma:\mathbb{Z}_2^4\rightarrow \mathcal{P}_n$ defined by
\begin{align}
    \sigma((1,0,0,0))&=X_1\\
    \sigma((0,1,0,0))&=Z_1\\
    \sigma((0,0,1,0))&=X_2\\
    \sigma((0,0,0,1))&=Z_2.
\end{align}
This subspace is maximally isotropic, i.e., if $\vec{f}=(a_1,b_1,a_2,b_2)\in \mathbb{Z}_2^4$ such that $\omega(\vec{f},\vec{g})=0$ for all $\vec{g}=(a_1',b_1',a_2',b_2')\in V^\mathcal{S}$ with respect to the symplectic form
\begin{equation}
    \omega(\vec{f},\vec{g})=(a_1,a_2,b_1,b_2)^T\begin{pmatrix}
    0&\mathds{1}_2\\\mathds{1}_2&0
    \end{pmatrix}(a_1',a_2',b_1',b_2'),
\end{equation}
then $\vec{f}\in V^\mathcal{S}$. It is straightforward to see that $\omega$ describes the commutation relations of the respective Pauli operators, for example $\omega((1,0,0,0),(0,1,0,0))=1$ as $X_1Z_1=-Z_1X_1$ but $\omega((1,0,0,0),(0,0,1,0))=0$ as $X_1X_2=X_2X_1$.

Clifford operators map Pauli operators to Pauli operators. As they preserve the commutation relations of the Pauli operators Clifford operators correspond to isometries on the symplectic vector space $(\mathbb{Z}_2^{2n},\omega)$. On the contrary, any homomorphism between subgroups of the Pauli group which preserves commutation relations and is invertible can be extended to an isometry due to Lemma~\ref{lem_witt_lemma} and can be implemented by a Clifford operator (cf. Equation~\eqref{eq_eq4}).

\section{Properties of commutation matrices\label{app:comm}}

In this section we are concerned with the rank condition (cf. Equation~\eqref{eq_rank_condition}), which is necessary for commutation matrices. First, we prove Theorem~\ref{thm_rank_condition} and then show that the equality in Equation~\eqref{eq_rank_condition} is an inequality for general tuples of alternating matrices. We note that the theorem also holds for any finite field, and the proof follows simply by replacing $\mathbb{Z}_2$ with $\mathbb{Z}_d$.

For convenience, let us restate Theorem~\ref{thm_rank_condition}.

\begin{manualtheorem}{6}
    Let $(C_\alpha)_{\alpha\in P(M,n)}$, $C_\alpha\in M_{n\times n}(\mathbb{Z}_2)$ be a tuple of alternating matrices such that $\sum_{\alpha} C_\alpha=0$ for some partition $P(M,n)$. Then there exists a $\ket{\psi}\in \text{Stab}(P(M,n))$ such that $(C_\alpha)_{\alpha}$ are the corresponding commutation matrices if and only if
	\begin{equation}
	2\,{\rm rk}([C_\alpha]_{\alpha})=\sum_{\alpha} {\rm rk}(C_\alpha)\label{eq:rank_appendix}.
	\end{equation}
	Here, $[C_\alpha]_{\alpha}$ is the $n\times (nm)$ matrix obtained by concatenating all matrices $C_\alpha$, and the rank is taken over the field $\mathbb{Z}_2$.
\end{manualtheorem}

\begin{proof}
First, we prove the only if part of the statement. Suppose that $(C_\alpha)_{\alpha}$, with $C_\alpha\in M_{n\times n}(\mathbb{Z}_2)$, correspond to an $n$ qubit stabilizer state. We want to prove that Equation~\eqref{eq:rank_appendix} holds. To begin with, let us consider the l.h.s. of Equation~\eqref{eq:rank_appendix}. Observe that ${\rm rk}([C_\alpha]_{\alpha})={\rm rk}([QC_\alpha Q^T]_{\alpha})$ for any invertible $Q\in M_{n\times n}(\mathbb{Z}_2)$. Consider a $Q$ such that $[QC_\alpha Q^T]_\alpha$ has the maximum number of zero rows. The number of zero rows is equal to the number of generators of $\mathcal{S}$ that locally commute with all other elements in the stabilizer. These generators correspond to qubits that can be locally extracted, as they are not entangled with any other qubit outside of their corresponding parties (see discussion above Theorem \ref{thm_block_diag}). For convenience, let us now assume that those states have been extracted, so that ${\rm rk}([C_\alpha]_{\alpha})=N\leq n$ is of full rank, i.e., equal to the number $N$ of qubits that are entangled with at least one other qubit outside of their own party. Next, let us consider the r.h.s of Equation~\eqref{eq:rank_appendix}, and let us establish the connection between the ranks of the commutation matrices and the ranks of the corresponding reduced states. Recall, that for stabilizer states the reduced states are given by Equation~\eqref{eq:reducedstate}, i.e., by the sum over all elements $s\in\mathcal{S}$ with their support fully contained in $\alpha$. Consider the matrix $C_\alpha$. Due to Theorem~\ref{thm_dickson_gen} there exists an invertible matrix $Q$ such that $QC_\alpha Q^T$ is a direct sum of blocks of the form as in Equation~\eqref{eq_dickson_mat_form}. We observe that whenever a generator anti-commutes with another generator on party $\alpha$ they also have to anti-commute on at least one other party (as the generators have to commute as a whole), and hence, every nonzero block increases the rank of the reduced state by a factor of two, and thus, $\text{rk}(\varrho_\alpha)=2^{\text{rk}(C_\alpha)/2}$. From this equation it follows that $\text{rk}(C_\alpha)/2$ is the number of qubits in $\alpha$ that are entangled with at least one other qubit in another party. Thus, $\sum_\alpha\text{rk}(C_\alpha)/2$ is equal to the total number $N$ of entangled qubits, and thus, Equation~\eqref{eq:rank_appendix} holds.

The if part of the statement follows directly form the construction below Theorem~\ref{thm_rank_condition} in the main text.
\end{proof}

Next, let us show that Equation \eqref{eq_rank_condition} is an inequality for general sets of alternating matrices. We show the following theorem over any field $F$ as we use it later not only for $\mathbb{Z}_2$ but any $\mathbb{Z}_d$ with $d$ a prime number.

\begin{thm}\label{thm_rank_ineq}
	Let $\{A_\alpha\}_{\alpha\in [M]}\subset M_{n\times n}(F)$ with $M\in\mathbb{N}$  be a finite set of alternating matrices such that $\sum_\alpha A_\alpha=0$. Then
	\begin{equation}
	2\,{\rm rk}([A_\alpha]_{\alpha \in [M]})\le\sum_{\alpha\in[M]} {\rm rk}(A_\alpha)\label{ineq_rank_condition_general}
	\end{equation}
	where $[A_\alpha]_{\alpha \in [M]}\in M_{n\times (k n)}(F)$ is the matrix obtained by concatenating the matrices $\{A_\alpha\}_{\alpha\in [M]}$.
\end{thm}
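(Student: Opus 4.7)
The plan is to interpret the inequality as a Lagrangian-type dimension bound in a suitably constructed symplectic space. First, I would pass from each alternating matrix $A_\alpha$ to the non-degenerate alternating form $\bar{A}_\alpha$ it induces on the quotient $V/K_\alpha$, where $V = F^n$ and $K_\alpha$ denotes the radical (kernel) of $A_\alpha$. By construction $\dim(V/K_\alpha) = {\rm rk}(A_\alpha)$ and $\bar{A}_\alpha$ is non-degenerate.

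Second, I would assemble these quotients into the direct sum $W = \bigoplus_{\alpha \in [M]} V/K_\alpha$, endowed with the componentwise form $B((u_\alpha),(w_\alpha)) = \sum_\alpha \bar{A}_\alpha(u_\alpha, w_\alpha)$. Then $B$ is alternating, and non-degeneracy follows from a short check: any vector in the radical of $B$ must project trivially under each of the already non-degenerate $\bar{A}_\alpha$. Hence $(W,B)$ is symplectic of dimension $\sum_\alpha {\rm rk}(A_\alpha)$. Consider next the diagonal map $\phi: V \to W$ defined by $\phi(v) = (v+K_\alpha)_\alpha$. Its kernel is $\bigcap_\alpha K_\alpha$, which (using $A_\alpha^T = -A_\alpha$) coincides with the common null space of the matrices $A_\alpha$, so $\dim \phi(V) = n - \dim \bigcap_\alpha K_\alpha = {\rm rk}([A_\alpha]_\alpha)$.

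The hypothesis $\sum_\alpha A_\alpha = 0$ enters at this point: it yields $B(\phi(v),\phi(w)) = \sum_\alpha A_\alpha(v,w) = 0$ for all $v,w\in V$, so $\phi(V)$ is an isotropic subspace of the symplectic space $(W,B)$. I would then invoke the standard bound --- an immediate consequence of Theorem~\ref{thm_dickson_gen} valid over any field, including $\mathbb{Z}_2$ --- that any isotropic subspace of a symplectic space has dimension at most half that of the ambient space. Applied to $\phi(V) \subseteq W$ this gives
\[
{\rm rk}([A_\alpha]_\alpha) = \dim \phi(V) \le \tfrac{1}{2}\dim W = \tfrac{1}{2}\sum_\alpha {\rm rk}(A_\alpha),
\]
which is the claimed inequality.

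There is no substantial obstacle: the argument reduces the statement to a clean fact from symplectic geometry. The points requiring care are (i) the well-definedness and non-degeneracy of $\bar{A}_\alpha$ and of $B$, (ii) the identification $\ker \phi = \bigcap_\alpha K_\alpha$, and (iii) the observation that in characteristic two the half-dimension bound still applies, because the $\bar{A}_\alpha$ (and hence $B$) are \emph{alternating} in the strict sense $B(v,v) = 0$ rather than merely skew-symmetric, so the Witt-style decomposition of Theorem~\ref{thm_dickson_gen} is available.
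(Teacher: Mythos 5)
Your proof is correct, but it takes a genuinely different route from the paper. The paper proves the inequality by induction on $M$: the base case $M=3$ is Marsaglia's rank inequality (Lemma~\ref{thm_rank_general}), and the inductive step is a fairly technical bookkeeping argument with projectors onto column spaces and dimensions of intersections such as $C_{\sum_{\alpha}A_\alpha}\cap C_{A_N}$. Your argument instead handles all $M$ and all fields in one stroke: the orthogonal direct sum $W=\bigoplus_\alpha V/K_\alpha$ carries a non-degenerate alternating form of dimension $\sum_\alpha {\rm rk}(A_\alpha)$, the identity $\ker\phi=\bigcap_\alpha K_\alpha$ (valid since left and right null spaces of a skew-symmetric matrix coincide) gives $\dim\phi(V)={\rm rk}([A_\alpha]_\alpha)$, the hypothesis $\sum_\alpha A_\alpha=0$ is exactly the statement that the diagonal image $\phi(V)$ is isotropic, and the bound $\dim U\le\tfrac12\dim W$ for isotropic $U$ needs only $\dim U+\dim U^\perp=\dim W$, which holds for any non-degenerate bilinear form (no appeal to Theorem~\ref{thm_dickson_gen} is even strictly required). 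All the points you flag as requiring care do check out. What your approach buys is brevity and a conceptual explanation of the inequality --- and, as a bonus, it makes transparent that equality holds precisely when $\phi(V)$ is Lagrangian in $W$, which is a clean way to see why the tuples saturating the bound are exactly those coming from maximally isotropic subspaces, i.e., from stabilizer states. What the paper's route buys is that it stays entirely within elementary rank inequalities and connects the statement to Marsaglia's classical result.
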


To prove Theorem \ref{thm_rank_ineq} we use the following Lemma shown in Ref.~\cite{Ma64}.

\begin{lem}\label{thm_rank_general}
	Let $A,B\in M_{n\times m}(F)$. Then
	\begin{equation}
	{\rm rk}([A,B])+{\rm rk}\begin{pmatrix}
	A \\ B
	\end{pmatrix}\le {\rm rk}(A)+ {\rm rk}(B) + {\rm rk}(A+B).\label{eq_them_rank_general}
	\end{equation}
\end{lem}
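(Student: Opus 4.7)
The plan is to establish the inequality by a dimension-counting argument on column spaces, combined with a short rank-nullity analysis of the block column map. The core idea is to rewrite $\operatorname{rk}[A,B]$ in terms of the intersection $\operatorname{col}(A)\cap\operatorname{col}(B)$, and to bound $\operatorname{rk}\begin{pmatrix}A\\B\end{pmatrix}$ in terms of the same intersection plus $\operatorname{rk}(A\pm B)$ by projecting the image of $\phi:x\mapsto(Ax,Bx)$ onto the diagonal of $F^n\times F^n$.

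Concretely, I would first observe that $\operatorname{col}[A,B]=\operatorname{col}(A)+\operatorname{col}(B)$, so the standard identity $\dim(U+V)=\dim U+\dim V-\dim(U\cap V)$ gives
\begin{equation}
\operatorname{rk}[A,B]=\operatorname{rk}(A)+\operatorname{rk}(B)-\dim\bigl(\operatorname{col}(A)\cap\operatorname{col}(B)\bigr).
\end{equation}
It therefore suffices to show $\operatorname{rk}\begin{pmatrix}A\\B\end{pmatrix}\le\dim(\operatorname{col}(A)\cap\operatorname{col}(B))+\operatorname{rk}(A+B)$. Replacing $B$ by $-B$ leaves $\operatorname{col}(B)$, $\operatorname{rk}[A,B]$, and $\operatorname{rk}\begin{pmatrix}A\\B\end{pmatrix}$ unchanged, so at the start I would switch signs and prove the equivalent statement with $A-B$ in place of $A+B$; over characteristic two this reformulation is trivially the same.

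For the main step, consider the linear map $\phi:F^m\to F^{2n}$, $x\mapsto(Ax,Bx)$, which satisfies $\operatorname{rk}(\phi)=\operatorname{rk}\begin{pmatrix}A\\B\end{pmatrix}$, together with the projection $\pi:F^{2n}\to F^n$, $(u,v)\mapsto u-v$, whose kernel is the diagonal $\Delta=\{(v,v):v\in F^n\}$. Since $\pi\circ\phi=A-B$, applying rank-nullity to $\pi$ restricted to $\operatorname{im}(\phi)$ yields
\begin{equation}
\operatorname{rk}(A-B)=\operatorname{rk}(\phi)-\dim\bigl(\operatorname{im}(\phi)\cap\Delta\bigr).
\end{equation}
A vector $(Ax,Bx)$ lies in $\Delta$ precisely when $x\in K:=\ker(A-B)$, in which case it equals $(Ax,Ax)$, so $\dim(\operatorname{im}(\phi)\cap\Delta)=\dim A(K)$. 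For every $x\in K$ one has $Ax=Bx$, hence $A(K)\subseteq\operatorname{col}(A)\cap\operatorname{col}(B)$ and in particular $\dim A(K)\le\dim(\operatorname{col}(A)\cap\operatorname{col}(B))$. Substituting this bound into the rank-nullity identity and then plugging the result back into the reformulation of the first step gives exactly the claimed inequality (with $A-B$, hence, after undoing the sign reduction, with $A+B$).

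I do not expect any genuine obstacle: the whole argument reduces to the two dimension identities above and a one-line inclusion of subspaces. The only point requiring mild care is bookkeeping of signs so that the final bound features $\operatorname{rk}(A+B)$ rather than $\operatorname{rk}(A-B)$, which is why I would perform the substitution $B\mapsto -B$ at the outset. The proof is uniform in $F$, which matters because the paper will apply the lemma both over $\mathbb{Z}_2$ (in Theorem~\ref{thm_rank_ineq}, qubit case) and over $\mathbb{Z}_d$ for primes $d\ge 3$ (qudit case of Section~\ref{sec_qudit_systems}).
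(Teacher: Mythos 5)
Your argument is correct. Note, however, that the paper does not prove this lemma at all: it is imported verbatim from Marsaglia's note~\cite{Ma64} and used as a black box in the proof of Theorem~\ref{thm_rank_ineq}, so there is no in-paper proof to compare against. Your derivation is a clean, self-contained substitute. The two ingredients both check out: the identity ${\rm rk}([A,B])={\rm rk}(A)+{\rm rk}(B)-\dim\bigl(\operatorname{col}(A)\cap\operatorname{col}(B)\bigr)$ is the standard dimension formula applied to $\operatorname{col}(A)+\operatorname{col}(B)$, and the rank--nullity step for $\pi$ restricted to $\operatorname{im}(\phi)$ correctly identifies $\operatorname{im}(\phi)\cap\Delta$ with $A(\ker(A-B))$, whose inclusion in $\operatorname{col}(A)\cap\operatorname{col}(B)$ closes the bound. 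The sign bookkeeping via $B\mapsto -B$ is handled properly (and is vacuous over $\mathbb{Z}_2$). Importantly, the argument is uniform over an arbitrary field, which is exactly the generality the paper needs, since Theorem~\ref{thm_rank_ineq} is invoked both over $\mathbb{Z}_2$ and over $\mathbb{Z}_d$ for odd primes $d$ in Section~\ref{sec_qudit_systems}.
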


\begin{proof}[Proof of Theorem \ref{thm_rank_ineq}]
	Following Ref.~\cite{Ma64}, we denote in the following for any matrix $A\in M_{k\times l}(F)$, $k,l\in\mathbb{N}$ by $\overline{A}\in M_{k\times k}(\mathbb{Z}_2)$ the orthogonal projector on the columnspace of $A$. 
	
	We prove the statement via induction. For $M=3$ the statement is equivalent to Lemma~\ref{thm_rank_general} and holds. Suppose the statement holds for all $M$ from $3$ to $N$. Let us show that then it also holds for $M=N+1$.
	
	Let $\{A_\alpha\}_{\alpha\in [N+1]}\subset M_{n\times n}(F)$ be a set of alternating matrices such that $\sum_\alpha A_\alpha=0$. We want to show that Inequality~\eqref{ineq_rank_condition_general} holds. To this end, let us first show another inequality that will be helpful in the proof. In the following we write $C_X$ for the column space of the matrix $X$. Then, we find that
    \begin{align}
        &{\rm rk}((\mathds{1}-\overline{[A_\alpha]_{\alpha\in[N-1]}})A_N)+\text{dim}(C_{\sum_{\alpha=1}^{N-1}A_\alpha}\cap C_{A_N})\\\le &{\rm rk}((\mathds{1}-\overline{[A_\alpha]_{\alpha\in [N-1]}})A_N)+\text{dim}(C_{[A_\alpha]_{\alpha\in[N-1]}}\cap C_{A_N})={\rm rk}(A_N).\label{ineq_eq2}
    \end{align}
    The inequality follows from the fact that $C_{\sum_{\alpha=1}^{N-1} A_\alpha}\subseteq C_{[A_\alpha]_{\alpha\in[N-1]}}$. The last equality can be seen as follows. First observe that the equality holds if and only if it holds for $A_NR$, where $R$ is an invertible matrix. Let $f_j=f_j^{\perp}\oplus f_j^{||}$ for any $j$ be the columns of $A_NR$ where $(\mathds{1}-\overline{[A_\alpha]_{\alpha\in[N-1]}})f_j^{\perp}=f_j^\perp$ and $(\mathds{1}-\overline{[A_\alpha]_{\alpha\in[N-1]}})f_j^{||}=0$. We choose $R$ such that there exists a $k\le n$ such that $f_j^\perp=0$ for all $j>k$ and $f_1^\perp,\ldots,f_k^\perp$ are independent. Then, Equation~\ref{ineq_eq2} follows from the fact that the number of independent columns of $A_NR$ coincides with $k$ plus the number of independent vectors in $\{f_{k+1},\ldots ,f_{n}\}$. The latter is equal to the dimension of the intersection of the column spaces of $[A_\alpha]_{\alpha\in[N-1]}$ and $A_N$. Let us now use Inequality~\eqref{ineq_eq2} to show the induction step. First, observe that
    \begin{align}
    2\,{\rm rk}([A_\alpha]_{\alpha\in [N+1]})&=2\,{\rm rk}([A_\alpha]_{\alpha\in [N]})\\
    &=2\,{\rm rk}([A_\alpha]_{\alpha\in[N-1]})+2\,{\rm rk}((\mathds{1}-\overline{[A_\alpha]_{\alpha\in[N-1]}})A_N),\label{eq_eq_6}
    \end{align}
    where the first equation follows from $\sum_{\alpha=1}^{N+1}A_\alpha=0$. The second one follows from the fact that ${\rm rk}([A_\alpha]_{\alpha\in [N]})={\rm rk}([A_\alpha]_{\alpha\in [N]}R)$ for any invertible matrix $R$. Choosing the $R$ such that the last $n$ columns of $[A_\alpha]_{\alpha\in [N]}R$ are orthogonal to the others (and leaving the others invariant) the equality is straightforward to verify. Equation~\ref{eq_eq_6} allows us to use the assumption of the induction step, i.e. that the Inequality~\ref{ineq_rank_condition_general} holds for $M=N$. Thus, we can bound Equation~\ref{eq_eq_6} as
    \begin{align}
    2\,{\rm rk}([A_\alpha]_{\alpha\in [N+1]})\le \sum_{\alpha=1}^{N-1}{\rm rk}(A_\alpha)+{\rm rk}(\sum_{\alpha=1}^{N-1}A_\alpha)	+2\,{\rm rk}((\mathds{1}-\overline{[A_\alpha]_{\alpha\in[N-1]}})A_N).\label{eq_eq1}
    \end{align}
    Next, we want to bound the term which contains the projection operator from above. Inserting Inequality~\eqref{ineq_eq2} we find that
    \begin{align}
    2\,{\rm rk}([A_\alpha]_{\alpha\in [N+1]})\le \sum_{\alpha=1}^{N-1}{\rm rk}(A_\alpha)+{\rm rk}(\sum_{\alpha=1}^{N-1}A_\alpha)	+2\,{\rm rk}(A_N)-2\,\text{dim}(C_{\sum_{\alpha=1}^{N-1}A_\alpha}\cap C_{A_N}).\label{ineq_eq_4}
    \end{align}
    To eliminate the last term in the latter inequality we use that due to the assumption of the induction step the Inequality~\ref{ineq_rank_condition_general} holds in particular for $M=3$. Thus,
    \begin{align}
        2{\rm rk}([\sum_{\alpha=1}^{N-1}A_\alpha,A_N,A_N+\sum_{\alpha=1}^{N-1}A_\alpha])\le{\rm rk}(\sum_{\alpha=1}^{N-1}A_\alpha)+{\rm rk}(A_N)+{\rm rk}(A_N+\sum_{\alpha=1}^{N-1}A_\alpha).
    \end{align}
    Using the fact that ${\rm rk}([\sum_{\alpha=1}^{N-1}A_\alpha,A_N,A_N+\sum_{\alpha=1}^{N-1}A_\alpha])={\rm rk}(\sum_{\alpha=1}^{N-1}A_\alpha)+{\rm rk}(A_N)-\text{dim}(C_{\sum_{\alpha=1}^{N-1}A_\alpha}\cap C_{A_N})$ (see also Ref.~\cite{Ma64}) the latter inequality can be written as
    \begin{align}
    {\rm rk}(\sum_{\alpha=1}^{N-1}A_\alpha)+{\rm rk}(A_N)-2\,\text{dim}(C_{\sum_{\alpha=1}^{N-1}A_\alpha}\cap C_{A_N})\le {\rm rk}(\sum_{\alpha=1}^{N}A_\alpha).\label{ineq_eq_3}
    \end{align}
    Finally, using Inequality~\eqref{ineq_eq_3} in Inequality~\eqref{ineq_eq_4} we find
    \begin{align}
        2\,{\rm rk}([A_\alpha]_{\alpha\in [N+1]})&\le\sum_{\alpha=1}^{N-1}{\rm rk}(A_\alpha)+{\rm rk}(A_N)+{\rm rk}(\sum_{\alpha=1}^{N}A_\alpha)=\sum_{\alpha=1}^{N+1}{\rm rk}(A_\alpha),
    \end{align}
    which completes the proof.
\end{proof}

\section{Classification of forms\label{sec_clas_forms}}

In this appendix we summarize, in mathematical terms, results from linear algebra obtained in the context of the classification of systems of linear maps and forms. The classification of tuples of commutation matrices is a special case of this general framework. Given an equivalence relation, the term \emph{classification} refers to the task of identifying a representative of every equivalence class. For systems of linear maps and forms the equivalence relation is induced by basis changes on the spaces on which they act.

Let us start by summarizing the results of Ref.~\cite{Sc76}, which is concerned with the classification of so called \emph{bialternating modules} over any field $F$. A bialternating module is a pair of bilinear, alternating forms $A_1,A_2:V\times V\rightarrow F$ acting on the same vector space $V$ over the field $F$. The equivalence relation induced by a basis change on $V$ is a congruence transformation of $A_1$ and $A_2$ with an invertible linear map $Q:V\rightarrow V$.

Before we outline the results of Ref.~\cite{Sc76} let us introduce some terminology. A bialternating module $(A_1,A_2)$ is decomposable if there exists an invertible linear map $Q:V\rightarrow V$ such that $QA_iQ^T=B_i^1\oplus B_i^2$ for $i\in[2]$. A decomposition of a tuple into indecomposable tuples is said to be unique if any other decomposition of the same tuple into indecomposable ones is the same up to reordering of the indecomposable blocks and equivalence of the blocks.

To solve this classification problem, Ref.~\cite{Sc76} uses the following approach, which is commonly used in these types of classification problems. First, one establishes that bialternating modules decompose uniquely into blocks. Then the problem is solved by stating a maximal list of non-equivalent, indecomposable bialternating modules. Before we comment on the idea how to solve both these tasks for bialternating modules, let us present the complete list of indecomposable bialternating modules. A detailed explanation of the results of Ref.~\cite{Sc76} can be found in Ref.~\cite{Ba99} from where the following list is taken.

Let $n\in\mathbb{N}$ and let $f(x)=x^n+a_n x^{n-1}+\ldots+a^1$ be a power of an irreducible polynomial over $F$. A polynomial is called irreducible if it cannot be written as a product of two non-constant polynomials. A complete list of non-congruent, indecomposable, bialternating modules is given by modules $(C_1,C_2)$ where
\begin{equation}
C_1=\begin{pmatrix}
0&M_1\\
-M_1^T&0
\end{pmatrix},\ \ C_2=\begin{pmatrix}
0&M_2\\
-M_2^T&0
\end{pmatrix}\label{eq_standard_form4}
\end{equation}
and 
\begin{equation}
M_1=\begin{pmatrix}
0&\mathds{1}_n
\end{pmatrix},\ \ M_2=\begin{pmatrix}
\mathds{1}_n&0
\end{pmatrix}\label{eq_standard_form1}
\end{equation}
or
\begin{equation}
M_1=\mathds{1}_n,\ \ M_2=\begin{pmatrix}
0&0\\
\mathds{1}_{n-1}&0
\end{pmatrix}\label{eq_standard_form2}
\end{equation}
or 
\begin{equation}
M_1=\mathds{1}_n,\ \ M_2=\begin{pmatrix}
-a_1&-a_2&\ldots &-a_n\\
1&0&\ldots &0\\
0&1&\ldots &0\\
\vdots&\vdots&&\vdots\\
0&0&\ldots&1  
\end{pmatrix}.\label{eq_standard_form3}
\end{equation}
It is straightforward to check, that the only modules of this list, which satisfy the rank condition in Equation~\eqref{eq_rank_condition} are (congruent to) the ones corresponding to the GHZ state and the Bell state (see Equations~\eqref{eq_com_mat_bell} and~\eqref{eq_com_mat_ghz}).

Both results of Ref.~\cite{Sc76}, i.e., the uniqueness of the decomposition and the complete set of indecomposables, follow from a correspondence between bialternating modules and so-called \emph{Kronecker modules}. A Kronecker module, also known as \emph{matrix pencil}, is a pair of linear maps $K_1,K_2:V\rightarrow W$ between $F$ vector spaces $V$ and $W$. These objects were studied under simultaneous isomorphy corresponding to a basis change on $V$ and $W$. Two Kronecker modules $(K_1,K_2)$ and $(\tilde{K}_1,\tilde{K}_2)$ are isomorphic if there exist invertible matrices $P\in M_{n\times n}(F)$ and $Q\in M_{m\times m}(F)$ such that $(\tilde{K}_1,\tilde{K}_2)=(QK_1P,QK_2P)$. It was shown in earlier works that Kronecker modules decompose uniquely into indecomposable blocks. For an overview of the results regarding Kronecker modules we refer the reader to Ref.~\cite{Fal04} and the references therein. Moreover, a complete set of indecomposable modules was known~\cite{Di46}. To apply those results to the classification of bialternating modules, Ref.~\cite{Sc76} establishes that every bialternating module is, up to congruence, of the form
\begin{equation}
\left(\begin{pmatrix}
0&K_1\\K_1^T&0
\end{pmatrix},\begin{pmatrix}
0&K_2\\K_2^T&0,
\end{pmatrix}\right)\label{eq_neutral}
\end{equation}
for some pair of matrices $(K_1,K_2)$. Bialternating modules of this form are called \emph{neutral} in Ref.~\cite{Sc76}. This property together with the knowledge of the classification of Kronecker modules then allows one to show that two bialternating modules are congruent if and only if they are isomorphic (understood as Kronecker modules). The main results of Ref.~\cite{Sc76} then follow directly from this observation.

Kronecker modules are a special case of the classification of representations of so-called \emph{quivers}. A basic introduction into the representation theory of quivers can be found in Ref.~\cite{Ba}. A quiver is a mathematical graph with vertices $V$ and directed edges $E\subset V^2$. Multiple edges and self-loops are allowed. A representation of a quiver is a map $R$ that assigns a vector space $R(v)$ to every $v\in V$ and a linear map $R((v_1,v_2)):R(v_1)\rightarrow R(v_2)$ to every edge $(v_1,v_2)\in E$. Two representations $R$ and $R'$ of a quiver are called isomorphic if for every $v\in V$ there exist invertible matrices $Q_v:R(v)\rightarrow R'(v)$ such that for any edge $(v_1,v_2)\in E$ it holds that $R'((v_1,v_2))=Q_{v_2} R(v)Q_{v_1}^{-1}$. Kronecker modules are representations of a quiver, which consists of two vertices that are connected by two edges of the same direction, i.e.,
\begin{equation}
    v_1 \,\substack{\xrightarrow{\hspace*{1cm}}\\[-1em] \xrightarrow{\hspace*{1cm}} } \,v_2.
\end{equation}
Therefore, this quiver is also called the \emph{Kronecker quiver}. Note that classifying the representations of the Kronecker quiver with 3 and more edges is a wild problem. A important result in the representation theory of quivers is the so-called \emph{Krull-Remak-Schmidt theorem} (see for instance Ref.~\cite{Ba}).

\begin{thm}[Krull-Remak-Schmidt]\label{thm_krm}
	Let $R_1,\ldots, R_n$ and $S_1,\ldots, S_m$ be indecomposable representations of a quiver such that $R_1\oplus \ldots \oplus R_n$ is isomorphic to $S_1\oplus \ldots \oplus S_m$, i.e., there exist invertible matrices $Q_v:R_1(v)\oplus\ldots \oplus R_n(v)\rightarrow S_1(v)\oplus\ldots\oplus S_m (v)$ such that
	\begin{equation}
	    S_1((v_1,v_2))\oplus\ldots\oplus S_m ((v_1,v_2))=Q_{v_2}R_1((v_1,v_2))\oplus\ldots\oplus R_m ((v_1,v_2))Q_{v_1}^{-1}
	\end{equation}
    holds for all $(v_1,v_2)\in E$.	Then $n=m$ and there exists a permutation $\pi$ and matrices $Q_v^i:R_i(v)\rightarrow S_i(v)$ such that
    \begin{equation}
	    S_{\pi(i)}((v_1,v_2))=Q_{v_2}^{i}R_i((v_1,v_2)) (Q_{v_1}^{i})^{-1}
	\end{equation}
	holds for all $(v_1,v_2)\in E$ and for all $i\in[n]$.
\end{thm}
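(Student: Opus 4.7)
The plan is to prove this via the classical route: establish that the endomorphism ring of each indecomposable representation is a local ring, and then use locality to match summands one at a time. Both steps go through uniformly for representations of any quiver with finite-dimensional vertex spaces.

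First, I would set up endomorphism rings. An endomorphism of a representation $R$ is a family $\eta=(\eta_v)_{v\in V}$ of linear maps $\eta_v:R(v)\to R(v)$ that intertwine the edge maps, i.e.\ $\eta_{v_2}\,R((v_1,v_2))=R((v_1,v_2))\,\eta_{v_1}$ for every $(v_1,v_2)\in E$. Since each $R(v)$ is finite-dimensional, so is $\mathrm{End}(R)$. To show $\mathrm{End}(R)$ is local when $R$ is indecomposable, I would apply the single-endomorphism version of Fitting's lemma that underlies Lemma~\ref{lem_fitting}: for $N$ large enough, $\ker(\eta^N)$ and $\mathrm{im}(\eta^N)$ stabilize and furnish a direct-sum decomposition $R=\ker(\eta^N)\oplus\mathrm{im}(\eta^N)$ at every vertex that is compatible with the edge maps. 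Indecomposability of $R$ forces one summand to be zero, so every $\eta\in\mathrm{End}(R)$ is either nilpotent or an automorphism. A short argument then shows the non-units form a two-sided ideal: if $f,g$ are nilpotent but $u=f+g$ is a unit with inverse $h$, then $hf$ and $hg$ are endomorphisms of $R$; neither can be invertible (else $f$ or $g$ would be), so both are nilpotent; but $hf=\mathrm{id}-hg$ and $hg$ nilpotent makes $\mathrm{id}-hg$ invertible via a finite geometric series, a contradiction.

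Next, I would carry out the matching step. Writing $\phi:R_1\oplus\cdots\oplus R_n\to S_1\oplus\cdots\oplus S_m$ in block form via the natural inclusions and projections gives components $\phi_{ji}:R_i\to S_j$, and likewise $\phi^{-1}$ yields $\psi_{ij}:S_j\to R_i$. Reading off the $(1,1)$-block of $\phi^{-1}\phi=\mathrm{id}$ gives $\sum_{j=1}^m\psi_{1j}\phi_{j1}=\mathrm{id}_{R_1}$, which is a unit in the local ring $\mathrm{End}(R_1)$. Since the non-units form an ideal, at least one summand, say $\psi_{1k}\phi_{k1}$, must itself be a unit. Then $\phi_{k1}:R_1\to S_k$ is a split monomorphism; since $S_k$ is indecomposable and the image of $\phi_{k1}$ is nonzero, the complement of the image must vanish, so $\phi_{k1}$ is in fact an isomorphism $R_1\cong S_k$.

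Finally, I would finish by induction on $n$. Using the iso $\phi_{k1}$ and its inverse, one constructs automorphisms of $\bigoplus_i R_i$ and $\bigoplus_j S_j$ that absorb the off-diagonal components $\phi_{j1}$ ($j\neq k$) and $\phi_{ki}$ ($i\neq 1$) without disturbing the block $R_1\cong S_k$. After this modification, the isomorphism becomes block diagonal, so restricting to the complementary summands yields an isomorphism $R_2\oplus\cdots\oplus R_n\cong\bigoplus_{j\neq k}S_j$, and the induction hypothesis gives the desired permutation and component isomorphisms; in particular $n=m$. The main obstacle is precisely this cancellation step: one has to verify that the absorbing automorphisms are genuine morphisms of quiver representations, i.e.\ compatible with the edge maps at every vertex. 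This works because $\phi_{k1}$ and $\phi_{k1}^{-1}$ are themselves morphisms of representations, so conjugating and pre/post-composing by them preserves the intertwining condition, but it must be checked uniformly over all vertices and edges rather than one vector space at a time.
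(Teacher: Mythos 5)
The paper does not prove this theorem: it is quoted as a known result with a pointer to Ref.~\cite{Ba}, so there is no in-paper argument to compare against. Your proof is the standard one (Fitting dichotomy $\Rightarrow$ local endomorphism rings for indecomposables, then the exchange/cancellation induction), and it is correct as sketched; the only points worth making explicit are that a single exponent $N$ works for all vertices simultaneously (take the maximum of the vertex dimensions) and that $\ker(\eta^N)$ and $\mathrm{im}(\eta^N)$ are genuine subrepresentations because the intertwining relation $\eta_{v_2}R((v_1,v_2))=R((v_1,v_2))\eta_{v_1}$ propagates to powers of $\eta$. Both are routine, so the proposal is a complete and faithful rendering of the textbook proof the paper implicitly relies on.
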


In other words, representations of quivers decompose uniquely into indecomposable representations under isomorphism (for any field). We make use of this theorem in Appendix~\ref{sec_uniqueness} where we discuss whether commutation matrices over the field $\mathbb{Z}_2$ decompose uniquely into indecomposable blocks.

Representations of quivers can yet again be viewed as a special case of self-adjoint representations of linear categories with involution, a framework studied in Ref.~\cite{Se88}. The main result of Ref.~\cite{Se88} is a Krull-Remak-Schmidt-type Theorem for self-adjoint representations of linear categories with involution. This theorem implies in particular that the decomposition of tuples of alternating forms over the field $\mathbb{Z}_d$ is unique for $d>2$. Via the commutation matrix formalism this result implies that qudit stabilizer states for qudits of prime dimension decompose uniquely under PLC, as stated in the main text (see Section \ref{sec_qudit_systems}).  An explanation of the results of Ref.~\cite{Se88} can be found in Ref.~\cite{Pl14}.

\section{Extraction of $\ket{\Phi^+}$ and $\ket{GHZ}$ states\label{sec_proof_lem_ghz}}

In this section we prove Observation~\ref{thm_bounds_egs}. To this end we first show Lemma~\ref{lem_extract_ghz}. For convenience let us restate the lemma.

\begin{manuallemma}{9}
	Let $P(M,n)=\{\alpha_1,\ldots ,\alpha_{M}\}$ be a partition and let $\ket{\psi}\in \text{Stab}(P(M,n))$ have full local ranks, i.e., $\text{dim}(V^\mathcal{S}_\alpha)=0$ for all $\alpha\in P(M,n)$. Then one can extract either $\ket*{GHZ_{M-1}}$ or $\ket{GHZ_{M}}$ from $\ket{\psi}$ if and only if there exist elements $f_1,\ldots,f_{M-2}$ in $\mathcal{S}$ such that $\text{supp}(f_j)=\alpha_1\cup \alpha_{j+1}$ and $(f_j)_{\alpha_1}=(f_k)_{\alpha_1}$ for all $j,k\in [M-2]$.
\end{manuallemma}

\begin{proof}
	To prove the statement, we show that there exists a graph state, which is PLC equivalent to $\ket{\psi}$ and for which the claim directly follows from Theorem \ref{thm_PLC_extract}.
	
	It follows from Witt's Lemma (Lemma~\ref{lem_witt_lemma}) that there exists a PLC transformation $C\in\mathcal{C}^{P(M,n)}$ such that $Cf_jC^\dagger|_{\alpha_1}$ and $Cf_jC^\dagger|_{\alpha_{j+1}}$ act nontrivially on a single qubit each for all $j\in [M-2]$. W.l.o.g. we order the qubits such that for any $j$ these are qubits 1 and $j+1$. Next find an $D\in\mathcal{C}_n^{L}$ such that $DC\ket{\psi}=\ket{G}$ is a graph state (see Ref.~\cite{Vn04}). W.l.o.g. we choose $D$ such that $DCf_{1}C^\dagger D^\dagger|_{\alpha_1}=Z\otimes \mathds{1}$. Indeed, qubit $1$ is connected to another qubit in $G$ as $\text{dim}(V_{\alpha_1}^\mathcal{S})=0$. Thus, if $DCf_{j}C^\dagger D^\dagger|_{\alpha_1}=Y\otimes \mathds{1}$ or $DCf_{j}C^\dagger D^\dagger|_{\alpha_1}=X\otimes \mathds{1}$ there exists a $D'\in\mathcal{C}_n^L$ which corresponds to a local complementation on $G$ such that $D'DCf_{j}C^\dagger D^\dagger(D')^\dagger|_{\alpha_1}=Z\otimes \mathds{1}$.
	
	We discuss now what the above properties imply for the structure of $\ket{G}$. Let us use that $\mathcal{S}$ is generated by the canonical generators (see Equation~\eqref{eq:generators}). Combining this with $DCf_{j}C^\dagger D^\dagger|_{\alpha_1}=Z\otimes \mathds{1}$ and the fact that $DCf_{j}C^\dagger D^\dagger|_{\alpha_{j+1}}$ acts only on a single qubit for any $j$, we conclude that $DCf_{j}C^\dagger D^\dagger|_{\alpha_{j+1}}=X\otimes \mathds{1}$ for any $j\in [M-2]$. That is, the qubits $\{2,\ldots M-1\}$ are only connected to qubit $1$. Let us use this structure to further simplify $\ket{G}$. Suppose that vertex $1$ is connected to another vertex $k$ in party $\alpha_j$ with $k\neq j$. If qubit $k$ and $j$ are not connected, then $CZ_{kj}$, followed by $L\in\mathcal{C}_n^L$ which corresponds to local complementation at vertex $j$, followed by another operator $CZ_{kj}$ removes the connection between vertex $1$ and $k$ and leaves $G$ unchanged otherwise. If qubit $k$ and $j$ are connected, one obtains the same result by leaving out the first $CZ_{kj}$ gate. This reasoning applies to any connection of vertex $1$ to any vertex not in $\{2,\ldots,M-1\}$, which is associated to a party in $\{\alpha_2,\ldots ,\alpha_{M-1}\}$. 
	
	Let $N_j(\alpha_k)$ be the neighbourhood of vertex $j$ in party $\alpha_k$. Suppose there exists a (possibly empty) set of vertices $W\subset\bigcup_{j\in \{2,\ldots,M-1\}}\alpha_j$ such that
	\begin{equation}
	\bigotimes_{i\in N_1(\alpha_{M})}Z_i=\prod_{j\in W}\bigotimes_{i\in N_j(\alpha_{M})}Z_i.\label{eq_eq2}
	\end{equation}
	Then the linear map $T$ from the stabilizer of the $M-1$-partite GHZ state shared between parties $1,\ldots, m-1$ to the stabilizer of $\ket{G}$ defined by
	\begin{equation}
	Z_1Z_2,\ldots,Z_1Z_{M-1},X_1\ldots X_{M-1}\rightarrow g_2,\ldots,g_{M-1},g_1\prod_{j\in W}g_j
	\end{equation}
	satisfies all constraints of Theorem~\ref{thm_PLC_extract} (\cite{Br06}). Here, $g_j=X_j\bigotimes_{k\in N_j} Z_k$ are the canonical generators of the stabilizer of the graph state $\ket{G}$ (see Equation~\eqref{eq:generators}). We conclude that we can extract $\ket*{GHZ_{M-1}}$ from $\ket{G}$.
	
	If such a set $W$ does not exist, let us show that we can extract $\ket*{GHZ_{M}}$ from $\ket{G}$. Indeed, then any operator in the stabilizer $S$ of $\ket{G}$, which contains $g_1$ in the decomposition w.r.t. the canonical generators acts nontrivial on all parties. Indeed, as all vertices in $\{2,\ldots, M-1\}$ are only connected to vertex $1$, such an operator acts nontrivially on parties $\alpha_1,\ldots ,\alpha_{M-1}$. It also acts nontrivial on party $\alpha_{M}$ as no set of vertices $W$ exists such that Equation~\eqref{eq_eq2} holds. As any set of generators of $S$ has to contain an operator whose decomposition w.r.t. the canonical generators contains $g_x$, we conclude that each set of generators for the stabilizer $S$ of $\ket{G}$ contains an element with support on all parties. Thus, we have that $S\neq S_{loc}$. Then Theorem 3 from Ref.~\cite{Br06} implies that we can extract $\ket*{GHZ_{M}}$ from $\ket{G}$.	
\end{proof}

Next, we use Lemma~\ref{lem_extract_ghz} to show Observation~\ref{thm_bounds_egs} which establishes bounds on the ratios of qubit numbers in the case of $4$ parties for which there possibly exist states in EGS. Besides the extraction of the states $\ket{GHZ_4}$ and $\ket{GHZ_3}$, the observation also takes into account the extraction of $\ket{\phi^+}$ and $\ket{0}$ states. For readability let us state the observation again.
\begin{manualobservation}{14}
	Let $P(4,n)=\{\alpha_1,\alpha_2,\alpha_3,\alpha_4\}$ be a partition such that $|\alpha_1|\ge|\alpha_2|\ge|\alpha_3|\ge|\alpha_4|$ and $\ket{\psi}\in\text{Stab}(P(M,n))$. Then $\ket{\psi}$ is decomposable if $|\alpha_1|>2|\alpha_4|$.
\end{manualobservation}

\begin{proof}
	Let us denote $V=V^{\mathcal{S}_\psi}$. We start by considering the reduced density matrix $\rho_{\alpha_1\cup\alpha_2}$ of $\ket{\psi}$. From Equation~\eqref{eq:reducedstate}
	it is evident that the rank of $\rho_{\alpha_1\cup\alpha_2}$ is equal to $2^{|\alpha_1|+|\alpha_2|-\text{dim}(V_{\alpha_1\cup\alpha_2})}$, where $\text{dim}(V_{\alpha_1\cup\alpha_2})$ is the dimension of the local subspace defined in Equation~\eqref{eq:local}. As ${\rm rk}(\rho_{\alpha_1\cup\alpha_2})={\rm rk}(\rho_{\alpha_3\cup\alpha_4})\le 2^{|\alpha_3|+|\alpha_4|}$ we find that $\text{dim}(V_{\alpha_1\cup\alpha_2})\ge |\alpha_1|+|\alpha_2|-|\alpha_3|-|\alpha_4|$. An analogous argument shows that $\text{dim}(V_{\alpha_1\cup\alpha_3})\ge |\alpha_1|+|\alpha_3|-|\alpha_2|-|\alpha_4|$. From the assumptions of the observation it follows that the sum of both these inequalities can be bounded from below as
	\begin{equation}
	\text{dim}(V_{\alpha_1\cup\alpha_2})+\text{dim}(V_{\alpha_1\cup\alpha_3})\ge2|\alpha_1|-2|\alpha_4|> |\alpha_1|.\label{ineq_1}
	\end{equation}
	Let us distinguish two cases. If there exists a $j\in[4]$ such that $\text{dim}(V_{\alpha_j})>0$, then it is easy to see from Theorem~\ref{thm_PLC_extract} that the state is decomposable, namely, one can extract $\ket{0}$ from the state. If $\text{dim}(V_{\alpha_j})=0$ for all $j\in [4]$, any vector in $V$ has support on at least two parties, i.e., the reduced state of every party has full rank. Assuming that, let us again distinguish two cases. First, if $\omega(f_{\alpha_1},g_{\alpha_1})=0$ for all $f,g\in V_{\alpha_1\cup\alpha_2}$ and $\omega(h_{\alpha_1},k_{\alpha_1})=0$ for all $h,k\in V_{\alpha_1\cup\alpha_3}$, then there exists $f\in V_{\alpha_1\cup\alpha_2}$ and $h\in V_{\alpha_1\cup\alpha_3}$ with $f,h\neq 0$ such that $f_{\alpha_1}=h_{\alpha_1}$. Indeed, due to the above assumptions the subspace $\left<f_{\alpha_1}|f\in V_{\alpha_1\cup\alpha_2}\cup V_{\alpha_1\cup\alpha_3}\right>$ is isotropic, and thus, its dimension is at most $|\alpha_1|$. Inequality~\eqref{ineq_1} implies that there are more independent vectors than that in $V_{\alpha_1\cup\alpha_2}\cup V_{\alpha_1\cup\alpha_3}$. Thus, there exist $f,h\in V_{\alpha_1\cup\alpha_2}\cup V_{\alpha_1\cup\alpha_3}$ such that $f\neq h$ and $f_{\alpha_1}=h_{\alpha_1}$. Note that it is not possible that either $f,h\in V_{\alpha_1\cup\alpha_2}$ or $f,h\in V_{\alpha_1\cup\alpha_3}$ as $\text{dim}(V_{\alpha_j})=0$ for all $j\in [4]$. We conclude that in case $\text{dim}(V_{\alpha_j})=0$ for all $j\in [4]$ and $\omega(f_{\alpha_1},g_{\alpha_1})=0$ for all $f,g\in V_{\alpha_1\cup\alpha_2}\cup V_{\alpha_1\cup\alpha_3}$ the assumptions of Lemma~\ref{lem_extract_ghz} are satisfied and $\ket{\psi}$ is decomposable.
	
	Second, if there exist $f,g\in V_{\alpha_1\cup\alpha_2}$ (or $f,g\in V_{\alpha_1\cup\alpha_3}$) such that $\omega(f_{\alpha_1},g_{\alpha_1})=1$, then also $\omega(f_{\alpha_2},g_{\alpha_2})=1$ ($\omega(f_{\alpha_3},g_{\alpha_3})=1$) as $V^{\mathcal{S}}$ is maximally isotropic. It is easy to see from Theorem \ref{thm_PLC_extract} that one can extract $\ket{\phi^+}$ from $\ket{\psi}$ and thus $\ket{\psi}$ is decomposable. 
\end{proof}

\section{Properties of the spiral graph state\label{app:spiralprop}}

In this section we are concerned with properties of the spiral graph states $\ket{G_n}$ of $n$ qubits distributed among $4$ parties as shown in Figure~\ref{fig_fig1}. First, we prove Theorem~\ref{thm_egs_infinite}, i.e., we show that $\ket{G_n}$ is indecomposable under PLC transformations for any $n$. Then, we prove that $\ket{G_{4n}}$ is indecomposable even under PLU. In fact, we show the more general statement that any PLU transformation from a state $\ket{G_{4n}}$ to another graph state for any $n$ can be implemented by a PLC transformation. This result implies that the EGS$_4$ is infinite for both, PLC and PLU transformations.

To show that $\ket{G_n}$ is indecomposable under PLC transformations, let us first prove the following simple lemma.

\begin{lem}\label{lem_indecomp_com}
	Let $M\in \mathbb{N}$ and let $(A_i\in M_{n\times n}(F))_{i\in [M]}$. If any tuple of linear combinations of the matrices ${(A_i)}_{i\in [M]}$ is indecomposable, then also $(A_i)_{i\in [M]}$ is indecomposable.
\end{lem}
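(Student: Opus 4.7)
The plan is to prove the contrapositive: I will show that if $(A_i)_{i\in[M]}$ is decomposable, then every tuple of linear combinations of the $A_i$ is also decomposable.

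First, I would unpack the hypothesis and terminology. By "tuple of linear combinations" I understand any tuple $(B_j)_{j\in[L]}$ where $B_j = \sum_{i=1}^{M} c_{ji} A_i$ for some scalars $c_{ji} \in F$ (the lemma is applied in the paper's main use-case with $B_1 = C_{\alpha_1}+C_{\alpha_2}$, $B_2 = C_{\alpha_2}+C_{\alpha_3}$). Decomposability is the notion from Theorem~\ref{thm_block_diag}: a tuple $(X_i)$ is decomposable if there exist an invertible $Q \in M_{n\times n}(F)$ and sizes $n_1 + n_2 = n$ with $n_1,n_2 \geq 1$ such that $Q X_i Q^T = X_i^1 \oplus X_i^2$ for all $i$, where $X_i^k \in M_{n_k \times n_k}(F)$.

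The key observation is that the congruence action $X \mapsto QXQ^T$ is linear in $X$, and so is the direct-sum decomposition. Concretely, if $(A_i)$ is decomposable via an invertible $Q$ with $Q A_i Q^T = A_i^1 \oplus A_i^2$ for every $i$, then for any coefficients $c_{ji}$ I have
\begin{equation}
Q B_j Q^T = \sum_{i=1}^{M} c_{ji}\, Q A_i Q^T = \Bigl(\sum_i c_{ji} A_i^1\Bigr) \oplus \Bigl(\sum_i c_{ji} A_i^2\Bigr),
\end{equation}
which exhibits the same block structure with the same invertible $Q$ and the same block sizes $n_1, n_2$. Hence $(B_j)_{j\in[L]}$ is decomposable. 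Taking the contrapositive yields the lemma.

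I do not foresee any real obstacle here: the statement is essentially the observation that simultaneous congruence block-diagonalization is preserved under the linear span of the tuple, which is immediate from the linearity of $Q(\cdot)Q^T$. The only minor subtlety is to make sure that the block sizes $n_1, n_2$ chosen for $(A_i)$ are used verbatim for $(B_j)$, so that the notion of "block-diagonalizable by the same $Q$" is preserved (rather than, say, having to re-partition). This is automatic from the displayed identity above.
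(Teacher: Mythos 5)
Your proof is correct and follows the same route as the paper: the paper's proof is exactly the contrapositive observation that decomposability of $(A_i)_i$ implies decomposability of any tuple of linear combinations, which you justify in detail via the linearity of the congruence action $X\mapsto QXQ^T$ with the same $Q$ and block sizes. You have simply spelled out the one-line argument the paper leaves implicit.
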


\begin{proof}
	If $(A_i)_{i\in [M]}$ is decomposable, then also any tuple of linear combinations of the matrices ${(A_i)}_{i\in [M]}$ is decomposable. The lemma is the negated version of this statement.
\end{proof}

With the help of this lemma we can now show that $\ket{G_n}$ is indecomposable for any $n$.

\begin{proof}[Proof of Theorem \ref{thm_egs_infinite}]
	Let $\ket{G_n}$ be the spiral graph state distributed among $4$ parties with different labeling depending on whether $n$ is even or odd, see Figure~\ref{fig_numbering_qubtis}.
	
	\begin{figure}
		\centering\includegraphics[width=0.5\textwidth]{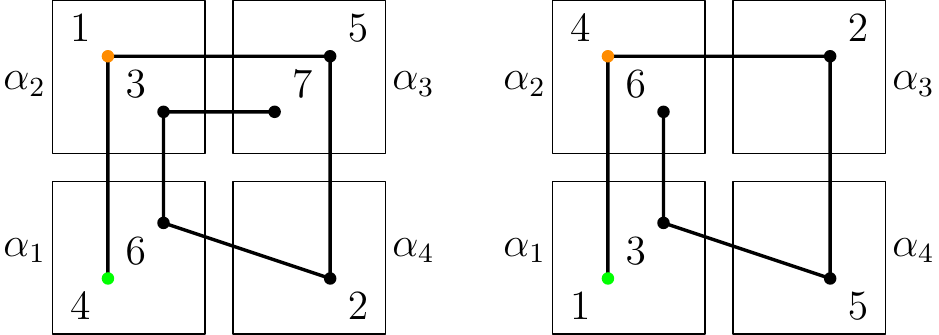}
		\caption{\textbf{Different ways to number qubits for $n$ even or odd:} if $n$ is odd, start from the orange qubit label every second qubit, then continue from green one; if $n$ is even, start the green qubit and label every second qubit, then continue from the orange one \label{fig_numbering_qubtis}}
	\end{figure}
	
	Let us start with the case $n$ even. We consider the commutation matrix $C_{\alpha_1\cup\alpha_2}$ w.r.t. the canonical generators. It is straightforward to verify that the only generators that anticommute on $\alpha_1\cup \alpha_2$ are the pairs $g_{2k},g_{n/2+2k+1}$ and $g_{2l+1},g_{n/2+2l+2}$ for $k,l\in\mathbb{N}$ (and such that $n/2+2k+1\le n$ and $n/2+2l+2\le n$). That is
	
	\begin{equation}
	C_{\alpha_1\cup\alpha_2}=\begin{pmatrix}
	0&M_2\\
	M_2^T&0
	\end{pmatrix},
	\end{equation}
	
	where $M_2$ is defined in Eq.~\eqref{eq_standard_form2}. Next, let us consider the commutation matrix $C_{\alpha_2\cup\alpha_3}$ again w.r.t. to the canonical generators. It is straightforward to verify that the only generators that anticommute on $\alpha_2\cup\alpha_3$ are the pairs $g_{2k+1},g_{n/2+2k+1}$ and $g_{2l+2},g_{n/2+2l+2}$ for $k,l\in\mathbb{N}_0$ (and such that $n/2+2k+1$ and $n/2+2l+2\le n$). That is
	
	\begin{equation}
	C_{\alpha_2\cup\alpha_3}=\begin{pmatrix}
	0&M_1\\
	M_1^T&0
	\end{pmatrix},
	\end{equation}
	
	where $M_1$ is defined in Eq.~\eqref{eq_standard_form2}. We conclude that the tuple $(C_{\alpha_1\cup\alpha_2},C_{\alpha_2\cup\alpha_3})$ is indecomposable and by Lemma~\ref{lem_indecomp_com} that also $(C_{\alpha_1},C_{\alpha_2},C_{\alpha_3})$ is indecomposable. This is equivalent to the statement that $\ket{G_n}$ is indecomposable for any even $n$. 
	
	Let us now consider the case $n$ odd. It is straightforward to verify that the commutation matrices $(C_{\alpha_1\cup\alpha_2},C_{\alpha_2\cup\alpha_3})$ are of the form presented in Eq.~\eqref{eq_standard_form4} and \eqref{eq_standard_form1}. Thus, by analogous arguments as above the state $\ket{G_n}$ is also indecomposable for any odd $n$.
\end{proof}

Note that the condition provided by Lemma \ref{lem_indecomp_com} is only a necessary, but not a sufficient condition for decomposability. This can be easily seen by considering for instance the 4 qutrits spiral graph state. There, the tuple $(C_{\alpha_1\cup\alpha_2},C_{\alpha_2\cup\alpha_3})$ is decomposable.

Next, we show that PLU transformations from $\ket{G_{4n}}$ to any other stabilizer state are possible if and only if they are possible via a PLC transformation. This implies that the EGS for more than $3$-partite stabilizer states is also infinite with respect to PLU transformations. Thus, the fact that the EGS with respect to PLC transformations is infinite is not just an artifact of considering a restricted set of operations. As we are only interested in the latter statement, it is sufficient to consider the states $\ket{G_{4n}}$.

\begin{manualtheorem}{13}
    Let $U$ be a PLU transformation such that $U\ket{G_{4n}}=\ket{H_{4n}}$ is a stabilizer state. Then, there exists a PLC transformation $C\in\mathcal{C}_{4n}^{P(4,4n)}$ such that $C\ket{G_{4n}}=\ket{H_{4n}}$.
\end{manualtheorem}

To prove this theorem we study the structure of the PLU symmetry group $\Sigma_{G_{4n}}$ of $\ket{G_{4n}}$, i.e.,
\begin{equation}
\Sigma_{G_{4n}}=\{U|U\text{ is PLU}, U\ket{G_{4n}}=\ket{G_{4n}}\}.
\end{equation}
Note that in particular $\mathcal{S}_{G_{4n}}\subset \Sigma_{G_{4n}}$. For the spiral graph states $\ket{G_{4n}}$ distributed among $4$ parties as specified above the elements of the stabilizer have very particular support. From now on we return to the numbering of qubits shown in Figure~\ref{fig_fig1}.
\begin{lem}\label{structure_stab}
	For any $n$ there is no element in $\mathcal{S}_{4n}$, which has support on only one party. Moreover, there are exactly two elements in $\mathcal{S}_{4n}$ which have support on only two parties, namely $g_1$ and $g_{4n}$.
\end{lem}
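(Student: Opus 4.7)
The plan is to exploit the graph-state structure directly. Every element $s \in \mathcal{S}_{4n}$ equals, up to a sign, $\prod_{i \in I} g_i$ for a unique subset $I \subseteq [4n]$, and a short computation using the canonical generators in Equation~\eqref{eq:generators} shows that
\[
\mathrm{supp}(s) = I \,\cup\, \{j \notin I : |N_j \cap I| \text{ is odd}\},
\]
where $N_j$ denotes the neighbourhood of $j$ in $G_{4n}$. The problem is thus reduced to a combinatorial one over $\mathbb{Z}_2$: for which subsets $I$ does this support lie in one or two of the four parties? In particular, since $I \subseteq \mathrm{supp}(s)$, the set $I$ is automatically forced to lie inside the targeted parties from the outset.

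First I would read off from Figure~\ref{fig_fig1} the underlying graph and party assignment of $G_{4n}$: the graph is a path on $4n$ vertices, and the party of vertex $i$ cycles through $\alpha_1,\alpha_2,\alpha_3,\alpha_4$ as $i$ runs from $1$ to $4n$. Fixing a candidate target $\alpha_a \cup \alpha_b$ (or $\alpha_a$ in the single-party case), the support formula becomes a local parity constraint on the indicator vector $x_i := [i \in I]$: for every $j \notin \alpha_a \cup \alpha_b$, the two path-neighbours of $j$ must together contribute an even number to $I$. This gives a sparse linear system over $\mathbb{Z}_2$.

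Next I would propagate these parity equations along the path. Because each interior constraint links only $x_{j-1}$ and $x_{j+1}$, and the endpoints of the path have degree one, the system is essentially triangular: once the boundary vertex contributes its single-neighbour constraint, induction along the path pins down every $x_i$. For each of the four single-party targets the cascade collapses $I$ to $\emptyset$, so no nontrivial element has single-party support. For the two-party targets one has to check six cases. Four of them, namely $\alpha_1\cup\alpha_3$, $\alpha_2\cup\alpha_4$, $\alpha_1\cup\alpha_4$ and $\alpha_2\cup\alpha_3$, again cascade to $I = \emptyset$, because the boundary constraints at \emph{both} endpoints of the path are nontrivial. The remaining two targets $\alpha_1\cup\alpha_2$ and $\alpha_3\cup\alpha_4$ are special: one endpoint of the path already lies inside the target, so the boundary constraint there is vacuous, and the cascade admits exactly one nontrivial solution, giving $I=\{1\}$ (hence $g_1$) and $I=\{4n\}$ (hence $g_{4n}$) respectively.

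The main care needed is bookkeeping: recording, for each target pair, which residue classes modulo $4$ sit inside and outside the target, which pairs of residues are the neighbours of an excluded vertex, and carefully handling the two endpoint constraints. Once this setup is in place, all ten cases are resolved by the same parity cascade. I expect the only subtle point to be the asymmetry between the two endpoints of the path for the two exceptional targets; it is exactly this asymmetry that singles out $g_1$ and $g_{4n}$ as the unique stabilizer elements with support on only two parties.
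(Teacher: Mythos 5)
Your proposal is correct and follows essentially the same route as the paper: both reduce the question to the combinatorics of which canonical generators can appear in the product $\prod_{i\in I}g_i$ given the path structure of $G_{4n}$ and the cyclic party assignment, and your parity constraint $x_{j-1}+x_{j+1}=0$ at each excluded vertex $j$ is exactly the paper's ``to cancel the contribution on qubit $j\pm1$ the decomposition must also contain $g_{j\pm2}$'' argument, just packaged as a linear system over $\mathbb{Z}_2$. One small inaccuracy: the stated reason that the four targets $\alpha_1\cup\alpha_3$, $\alpha_2\cup\alpha_4$, $\alpha_1\cup\alpha_4$, $\alpha_2\cup\alpha_3$ collapse ``because the boundary constraints at both endpoints are nontrivial'' is not right in all cases (e.g., for $\alpha_1\cup\alpha_4$ both endpoints lie inside the target and it is the interior constraints alone that force $I=\emptyset$, while for $\alpha_1\cup\alpha_3$ only the constraint at vertex $4n$ anchors the chain $x_1=x_3=\cdots=x_{4n-1}$); carrying out the bookkeeping you describe would immediately correct this without affecting the conclusion.
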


\begin{proof}

    The following arguments are based on the fact that if the decomposition of any operator $f$ with respect to the canonical generators (see Equation~\eqref{eq:generators}) contains $g_j$ for some $j\in[n]$, then $j\in\text{supp}(f)$.
	
	First let us argue that there is no element with support on only one party. Any canonical generator has support on at least two parties, and the canonical generators corresponding to the qubits of the same party all act on disjoint sets of qubits. We conclude that any element of $\mathcal{S}_{4n}$ has at least support on two parties.
	
	Next, we show that there are exactly two elements with support on two parties. Obviously, $g_1$ and $g_{4n}$ ($g_1$, $g_{4n}$ and $g_1g_{4n}$) have support on only two parties. We show that there are no further elements with this property. 
	
	Let us first show that such an element does not exist on any pair of adjacent parties, starting with $\alpha_1,\alpha_2$. Suppose there exists an element $f\in \mathcal{S}_{4n}$ such that $\text{supp}(f)=\{\alpha_1,\alpha_2\}$. For any $j\in \alpha_1$ with $j\neq 1,l$ the operator $g_j$ cannot be in the decomposition of $f$ with respect to the canonical generators. Indeed, $g_j$ acts non-trivially on qubit $j-1$ (party $D$) and no other generator corresponding to a qubit of $\alpha_1$ or $\alpha_2$ acts non-trivially on this qubit. Thus, if $g_j$ was contained in the decomposition, then $\text{supp}(f)=\{\alpha_1,\alpha_2,\alpha_4\}$. Similarly, no generator $g_j$ with $j\in \alpha_2$ and $j\neq 1,l$ can be contained in the decomposition of $f$. A similar argument for any other pair of adjacent parties shows that the only options for $f$ to have support on only two adjacent parties are $f=g_1$ and $f=g_{4n}$.
	
	Next, consider pairs of non-adjacent parties, i.e., $\alpha_1,\alpha_3$ and $\alpha_2,\alpha_4$. Suppose there exists an element $f\in \mathcal{S}_{4n}$ such that $\text{supp}(f)=\{\alpha_1,\alpha_3\}$ or $\text{supp}(f)=\{\alpha_2,\alpha_4\}$. Let $g_j$ for some $j$ be part of the decomposition of $f$ with respect to the canonical generators.
	
	If $j\neq 1, l$, then $g_j$ acts non-trivially on qubit $j-1$ and $j+1$ (both not part of the support of $f$) and in order to cancel those contributions the decomposition of $f$ also has to contain $g_{j+2}$ and $g_{j-2}$. Applying the same reasoning to $g_{j+2}$ and $g_{j-2}$ we find that the decomposition has to contain all canonical generators from elements in $\alpha_1,\alpha_3$ or $\alpha_2,\alpha_4$. Therefore $f$ acts non-trivially on both, qubits $1$ and $l$. However, as qubits $1$ and $l$ are not both contained in either $\alpha_1,\alpha_3$ or $\alpha_2,\alpha_4$ for any $n$, $f$ always has support on three parties which is a contradiction.	
	
	If $j= 1 (j=l)$, then $g_j$ has support on one adjacent party and, in order to cancel the contribution on this party, the decomposition also has to contain $g_{2}$ $(g_{l-2})$. Similarly, it has to contain $g_{4}$ $(g_{l-4})$ in order to cancel the contributions from $g_{2}$ $(g_{l-2})$ (has support on three parties). We find that the decomposition has to contain all canonical generators from elements in $\alpha_1,\alpha_3$ or $\alpha_2,\alpha_4$. Again $f$ would have support on three parties which is a contradiction.
	
	We conclude that there does not exist an $f$ with support on only $\alpha_1,\alpha_3$ or $\alpha_2,\alpha_4$. This completes the proof.
\end{proof}

For any $n$ the group $\Sigma_{G_{4n}}$ also contains symmetries associated to the beginning and end of the loop. These so called leaf symmetries are of the form
\begin{align}
B_{1,2}(\beta)&=e^{i\beta X}\otimes e^{-i\beta Z}\otimes \mathds{1},\\
\text{and  }B_{4n-1,4n}(\beta)&=\mathds{1}\otimes e^{-i\beta Z}\otimes e^{i\beta X},
\end{align}
where $\beta\in \mathbb{R}$ is arbitrary~\cite{Zh09}. While $\Sigma_{G_{4n}}$ might contain many more symmetries than those generated by the stabilizer and the leaf symmetries, we do not compute those symmetries. Instead show two lemmata that are concerned with their structure.

\begin{lem}\label{lem_cliff_sym}
	Let $C\in\Sigma_{G_{4n}}$ be a PLC. If there exists a party $\alpha\in P(M,n)$ such that $C_\alpha=\mathds{1}$, then either $C=S$ or $C=SB_{1,2}(\pm \pi/4)$  or $C=SB_{4n-1,4n}(\pm\pi/4)$ for some  $S\in\mathcal{S}_{G_{4n}}$.
\end{lem}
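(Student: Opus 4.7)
Writing $C = \bigotimes_{\beta\in P(4,4n)} C_\beta$ with $C_\alpha = \mathds{1}$, the hypothesis $C\ket{G_{4n}} = \ket{G_{4n}}$ means that conjugation by $C$ induces an automorphism $\phi$ of $\mathcal{S}_{G_{4n}}$, and the PLC structure together with $C_\alpha = \mathds{1}$ forces $\phi(s)|_\alpha = s|_\alpha$ for every $s \in \mathcal{S}_{G_{4n}}$. The strategy is to pin $\phi$ down on the canonical generators $g_1,\ldots, g_{4n}$ (Equation~\eqref{eq:generators}) and then recover $C$ up to a stabilizer correction that can be absorbed into $S$.

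For each canonical generator $g_j$ whose support intersects $\alpha$, I would expand $\phi(g_j) = \epsilon_j \prod_k g_k^{a_{jk}}$ in the canonical basis of $\mathcal{S}_{G_{4n}}$ and use $\phi(g_j)|_\alpha = g_j|_\alpha$ to constrain the coefficients. Distinct canonical generators contribute Paulis on disjoint qubits of $\alpha$ (by the linear structure of the spiral graph), so the exponents $a_{jk}$ with $g_k|_\alpha \neq \mathds{1}$ are uniquely determined; in particular Lemma~\ref{structure_stab} rules out cancellations by nontrivial stabilizer elements supported on a single party. Thus $\phi(g_j) = g_j \cdot t_j$ with $t_j \in \mathcal{S}_{G_{4n}}$ and $t_j|_\alpha = \mathds{1}$. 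Iterating this analysis across the spiral for all four parties simultaneously (using that each interior generator $g_j$ with $j \notin \{1,2,4n-1,4n\}$ has support on three consecutive parties, so the constraints link up) should yield a stabilizer element $S_0 \in \mathcal{S}_{G_{4n}}$ such that $S_0 C$ induces the identity automorphism on every interior canonical generator.

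The only remaining freedom is then localized on the leaf-parent qubits $\{1,2\}$ and $\{4n-1,4n\}$. The local symmetry group of a leaf-parent pair in any graph state is the one-parameter family $B_{1,2}(\beta)$ (respectively $B_{4n-1,4n}(\beta)$) from Ref.~\cite{Zh09}, so $S_0 C$ must be a product of such symmetries on the two ends times a further Pauli stabilizer element. Intersecting the continuous family $\{B(\beta)\}$ with the Clifford group restricts $\beta$ to $(\pi/4)\mathbb{Z}$; the even multiples of $\pi/4$ are already Pauli stabilizer elements and can be absorbed into $S_0$, leaving only $\beta \in \{\pm\pi/4\}$. This gives $C = S$, $C = SB_{1,2}(\pm\pi/4)$, or $C = SB_{4n-1,4n}(\pm\pi/4)$, as claimed. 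The main obstacle is the bookkeeping in the second step: one must show that the stabilizer corrections $t_j$ can be absorbed consistently across the four parties at once, which requires a case analysis depending on how the qubits of the spiral graph are assigned to the parties (as in the parity cases treated in the proof of Theorem~\ref{thm_egs_infinite}) and ultimately rests on the rigidity provided by Lemma~\ref{structure_stab}.
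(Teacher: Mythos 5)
Your high-level architecture (determine the automorphism $\phi$ induced by conjugation on the canonical generators, then localize the residual freedom on the leaf--parent pairs and quantize $\beta$ to multiples of $\pi/4$ by intersecting the family $B(\beta)$ with the Clifford group) agrees with the paper's, and your endgame on qubits $4n-1,4n$ is essentially right. However, the central step has a genuine gap. First, a factual error: on the spiral graph distinct canonical generators do \emph{not} contribute Paulis on disjoint qubits of $\alpha$ --- for a qubit $k\in\alpha$ both $g_{k-1}$ and $g_{k+1}$ restrict to $Z_k$ on $\alpha$, so the restriction map has a large kernel (the colocal subspace $V^{\mathcal S}_{\hat\alpha}$) and the exponents $a_{jk}$ are not determined; the honest output of your first step is only $\phi(g_j)=g_jt_j$ with $t_j$ an \emph{arbitrary} element of the colocal subgroup, which is weak. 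Second, and more seriously, ``iterating this analysis across the spiral for all four parties simultaneously'' presupposes the constraint $\phi(s)|_\beta=s|_\beta$ on every party $\beta$, but the hypothesis supplies it only for the single party $\alpha$; on the other three parties $C_\beta$ is an arbitrary Clifford and nothing a priori fixes $\phi(g_j)$ there. Bootstrapping triviality on one party into global rigidity is the entire content of the lemma, and your sketch does not provide the mechanism.

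The paper's mechanism, which is the missing idea, is this: for suitable $k$ the product $g_k\,(Cg_kC^\dagger)$ is simultaneously a Pauli operator and a symmetry of $\ket{G_{4n}}$, hence an element of $\mathcal S_{G_{4n}}$; because $C_\alpha=\mathds{1}$ and because $g_k$ misses one further party, this element has support on at most two parties, and Lemma~\ref{structure_stab} (only $g_1$ and $g_{4n}$ have two-party support, nothing has one-party support) forces $Cg_kC^\dagger=g_k$ exactly for all $k\equiv 1,2\pmod 4$. A second round of the same trick, using these now-fixed generators, pins down $Cg_kC^\dagger$ for $k\equiv 0,3\pmod 4$ up to factors of $g_{4n}$, which are then excluded by preservation of local commutation relations. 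Only after this does the residual freedom collapse onto the single pair $\{4n-1,4n\}$ --- which is also why the conclusion contains \emph{either} $B_{1,2}$ \emph{or} $B_{4n-1,4n}$ but never their product, a point your sketch (which allows ``a product of such symmetries on the two ends'') does not recover. Without an argument of this kind the proof does not go through.
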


\begin{proof}
Let us fist outline the proof, before we present the technical details. 
We prove the lemma by distinguishing two cases, $C_{\alpha_1}=\mathds{1}$ and $C_{\alpha_2}=\mathds{1}$. Considering the first case, we show that $C$ has to leave $X_i$ invariant for all $i\in\{1, \ldots 4n-2, 4n\}$ and $Z_i$ invariant for all $i\in\{1,\ldots, 4n-1\}$. This implies that $C_i$ is a Pauli operator for all $i$ except $4n-1, 4n$. As there exists no Pauli operator other than the elements of $\mathcal{S}_{G_{4n}}$ as a symmetry of the stabilizer state and as $C g_{4n-1}C^\dagger$, which acts only non-trivially on $4n-2,4n-1,4n$, must be in $\mathcal{S}_{G_{4n}}$, we have due to the properties of $\mathcal{S}_{G_{4n}}$ that $C=SB_{4n-1,4n}(\pm\pi/4)$ or $C=S$ for some  $S\in\mathcal{S}_{G_{4n}}$. The second case can be treated analogously. 

Let us now present the details of this proof. As mentioned above, we distinguish two cases, $C_{\alpha_1}=\mathds{1}$ and $C_{\alpha_2}=\mathds{1}$. The other cases follow from the permutation symmetry of the spiral graph. In the following we denote by $\{g_i\}\subset \mathcal{S}_{G_{4n}}$ the canonical generators.

Suppose $C_{\alpha_1}=\mathds{1}$. First note that
$g_{k}Cg_{k}C^\dagger$ for $k=4j+1$ ($k=4j+2$) 
is (i) a Pauli operator, (ii) a symmetry of the state, and (iii) acts trivially on ${\alpha_1}, {\alpha_3}$ (${\alpha_2}, {\alpha_4}$), respectively. Using that (i) and (ii) implies that $g_{k}Cg_{k}C^\dagger \in \mathcal{S}_{G_{4n}}$ and then employing Lemma \ref{structure_stab} implies that $g_{k}Cg_{k}C^\dagger=g_{k}$, for $k=4j+1$ and $k=4j+2$.  

Using this result, we find that for any $j<n$ the operator $g_{4j+3} C g_{4j+3}C^\dagger$ is a Pauli operator and has support only on parties ${\alpha_3}$ and ${\alpha_4}$. Therefore, due to Lemma \ref{structure_stab}, $C g_{4j+3}C^\dagger=g_{4n}^{\delta_{4j+3}} g_{4j+3}$ where $\delta_{4j+3}\in\{0,1\}$. Similarly, for any $j\le n$ we find $C g_{4j}C^\dagger=g_{4n}^{\delta_{4j}} g_{4j}$. For $j=n$, we have $\delta_{4n}=0$ as $C\ldots C^\dagger$ has to preserve support. Now let us show that $\delta_{4j+3}=0$ for all $j\le n-2$ and $\delta_{4j}=0$ for all $j\le n$. Observe that $[g_{4(n-1)+3},g_{4j+3}]_{\alpha_3}=0$ and $[g_{4(n-1)+3},g_{4j+3}]_{\alpha_4}=0$ for all $j\le n-2$ and $[g_{4(n-1)+3},g_{4j}]_{\alpha_3}=0$ and $[g_{4(n-1)+3},g_{4j}]_{\alpha_4}=0$ for all $j\le n$. As conjugation by $C$ preserves these commutation relations, we find that $\delta_{4j+3}=0$ for all $j\le n-2$ and $\delta_{4j}=0$ for all $j\le n$. Thus, $C$ leaves all canonical generators invariant except for possibly $g_{4(n-1)+3}$. If $C g_{4(n-1)+3}C^\dagger=g_{4(n-1)+3}$, then $C\in\mathcal{S}_{G_{4n}}$. If $C g_{4(n-1)+3}C^\dagger=g_{4j}g_{4(n-1)+3}$, then $C_{4n-1}=X^\delta\exp(\pm i\pi/4 Z)$ and $C_{4n}=Z^\delta \exp(\mp i\pi/4 X)$ where $\delta \in\{0,1\}$ and $C_j\in \mathcal{P}$ for any $j<4n-1$. Thus, we can write $C=SB_{4n-1,4n}(\pm\pi/4)$. As $\exp(\pm i\pi/4 Z_{4n-1})\otimes \exp(\mp i\pi/4 X_{4n})\ket{G_{4n}}=\ket{G_{4n}}$, also $S\ket{G_{4n}}=\ket{G_{4n}}$ and therefore $S\in\mathcal{S}_{G_{4n}}$.
	
Suppose $C_{\alpha_2}=\mathds{1}$. Considering the symmetries $g_{4j+1}Cg_{4j+1}C^\dagger$ and $g_{4j+2}Cg_{4j+2}C^\dagger$ and using the same arguments as above one finds that $C_{\alpha_1}\in\mathcal{P}_{|\alpha_1|}$. The rest of the poof is completely identical. Thus, we have shown again that either $C=S$ or $C=SB_{4n-1,4n}(\pm\pi/4)$, where $S\in\mathcal{S}_{G_{4n}}$.
\end{proof}

\begin{lem}\label{lem_non_cliff_sym}
	For $n\ge 2$ let $U\in\Sigma_{G_{4n}}$ be a PLU symmetry which is not a Clifford operator. If there exists a party $\alpha\in P(M,n)$ such that $U_\alpha=\mathds{1}$, then either $U=\mathds{1}$ or $U=SB_{1,2}(\beta )$  or $U=SB_{4n-1,4n}(\beta)$ where $S\in\mathcal{S}_{G_{4n}}$ and $\beta\in\mathbb{R}$, $\beta\neq  k\pi/4$ with $k\in\mathbb{Z}$.
\end{lem}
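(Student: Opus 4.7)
The plan is to extend the argument of Lemma~\ref{lem_cliff_sym} to general PLU symmetries, using an auxiliary construction that manufactures PLU symmetries of reduced support. Assume without loss of generality that $U_{\alpha_1}=\mathds 1$; the remaining cases follow from the reflection symmetry of the spiral graph and its four-party partition. The central tool is, for every $f\in\mathcal S_{G_{4n}}$, the operator
\begin{equation}
V_f := U f U^\dagger f,
\end{equation}
which is again a PLU symmetry of $\ket{G_{4n}}$ because $f$ is a Hermitian stabilizer element. Since $U_{\alpha_1}=\mathds 1$ and $f_\beta^2=\mathds 1_\beta$ on every party, one has $(V_f)_{\alpha_1}=\mathds 1$ and $(V_f)_\beta=\mathds 1$ on every party on which $f$ acts trivially; hence $V_f$ has party support contained in $\text{supp}(f)\setminus\{\alpha_1\}$.

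First I would use $f\in\mathcal S_{G_{4n}}$ with two-party support (one being $\alpha_1$) to produce single-party $V_f$'s. Because $\ket{G_{4n}}$ has full local ranks, as it is indecomposable and cannot shed a $\ket 0$ factor, every single-party reduced density matrix is maximally mixed on its full support. A short Schmidt-decomposition argument then shows that every single-party PLU symmetry equals $\mathds 1$, so $V_f=\mathds 1$, which is the commutation relation $[U_\beta,f_\beta]=0$. Lemma~\ref{structure_stab} together with a short enumeration of products of canonical generators whose support collapses onto two parties then pins every $U_\beta$ with $\beta\neq \alpha_1$ to the centralizer of a substantial Pauli subgroup on $\beta$.

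Next I would promote these commutation relations to a rigidity statement: on every party that does not host a leaf-parent pair of the spiral, the centralizer reduces to the Pauli group itself, so $U_\beta$ is a Pauli and in particular Clifford on such parties. The non-Clifford content of $U$ must therefore live on the parties containing the leaf-parent pairs $(1,2)$ and $(4n{-}1,4n)$. The continuous local symmetries associated to each such pair are, by Ref.~\cite{Zh09}, exactly the one-parameter families $B_{1,2}(\beta_1)$ and $B_{4n-1,4n}(\beta_2)$. After absorbing the Pauli part into $S\in\mathcal S_{G_{4n}}$ one is left with $U = S \cdot B_{1,2}(\beta_1) B_{4n-1,4n}(\beta_2)$, and the hypothesis $U_{\alpha_1}=\mathds 1$ then forces the angle of whichever rotation is supported on $\alpha_1$ to lie in $\pi\mathbb{Z}$, where the corresponding factor becomes $\pm\mathds 1$ and can be absorbed into $S$. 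Only the rotation on the leaf-parent pair disjoint from $\alpha_1$ survives, giving the claimed form, and the non-Clifford assumption on $U$ forces its angle to lie outside $(\pi/4)\mathbb Z$.

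The principal obstacle is the rigidity statement: verifying that the centralizer arising from the two-party $V_f$ constraints really reduces to the Pauli group on every non-leaf party, and that the $B_{1,2}$ and $B_{4n-1,4n}$ rotations exhaust the continuous non-Pauli symmetries compatible with those constraints on the leaf parties. This requires a careful party-by-party analysis of the spiral graph's local structure around the leaves and likely invokes the classification of local symmetries of stabilizer states from Ref.~\cite{En20}.
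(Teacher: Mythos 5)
Your core tool, $V_f = UfU^\dagger f$, is exactly the engine of the paper's proof (there written as $WUWU^\dagger$), and the observation that a single-party PLU symmetry of a fully entangled stabilizer state must be trivial is also used there. However, the way you propose to deploy it has a genuine gap. You plan to feed in stabilizer elements $f$ with two-party support containing $\alpha_1$ so as to obtain single-party $V_f$'s, and then to conclude that each $U_\beta$ lies in the centralizer of "a substantial Pauli subgroup." But by Lemma~\ref{structure_stab} the \emph{only} elements of $\mathcal{S}_{G_{4n}}$ supported on two parties are $g_1$ and $g_{4n}$, and each of these restricts to a single-qubit Pauli on each party it touches. So this step yields at most two commutation constraints per party, whose joint centralizer inside $U(2^{|\beta|})$ is enormous; it comes nowhere near forcing $U_\beta$ to be Pauli or Clifford on the non-leaf qubits. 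The "rigidity statement" you flag as the principal obstacle is therefore not a technicality to be outsourced to Ref.~\cite{En20}: it is the entire content of the lemma, and your construction as described does not supply enough constraints to prove it.

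The paper closes this gap differently: it takes $W$ to be specific canonical generators and \emph{products of two} canonical generators whose support covers \emph{three} parties, one of which is $\alpha_1$ (e.g.\ $W\in\{g_{4l+1},g_{4l+2},g_{4l+2}g_{4l+4},g_{4m+3}g_{4m+5}\}$), so that $WUWU^\dagger$ is a symmetry supported on exactly two parties. It then runs a case analysis on such a two-party symmetry $V$: if $V$ is Clifford, Lemma~\ref{lem_cliff_sym} together with Lemma~\ref{structure_stab} forces $V=\mathds{1}$; if $V$ is not Clifford, a second round of conjugations produces a nontrivial single-party symmetry, contradicting full local ranks. Iterating this bootstrap pins $U_{\alpha_2}$ to be Clifford and the conjugates $UX_jU^\dagger$, $UZ_jU^\dagger$ to be Pauli on the remaining parties, after which the argument of Lemma~\ref{lem_cliff_sym} shows $U$ fixes all canonical generators except $g_{4n-1}$; a final squaring argument (using that qubit $4n$ is entangled with the rest) locks the two rotation angles on the leaf--parent pair together, giving $U=SB_{4n-1,4n}(\beta)$. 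Your proposal would need this intermediate layer of three-party $W$'s and the accompanying Clifford/non-Clifford dichotomy to go through; as written, the reduction from "commutes with $(g_1)_\beta,(g_{4n})_\beta$" to "is Pauli away from the leaves" does not follow.
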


\begin{proof}
	Let $U\in\Sigma_{G_{4n}}$ be a symmetry which is not a Clifford operator, i.e., in particular $U\neq \mathds{1}$. We consider the cases $U_{\alpha_1}=\mathds{1}$ and $U_{\alpha_2}=\mathds{1}$. The cases $U_{\alpha_3}=\mathds{1}$ and $U_{\alpha_4}=\mathds{1}$ follow from the permutation symmetry of the spiral graph.
	
Let us again outline the proof. We consider various operators of the form $W U W U^\dagger$, with $W \in \mathcal{S}_{G_{4n}}$ to first show (using Lemma \ref{lem_cliff_sym}) that $U_{\alpha_2}$ is a Pauli operator and that $U_{\alpha_3}$, $U_{\alpha_4}$ also have to be Pauli operators on all qubits but $4n-1,4n$. Similarly to the proof of Lemma \ref{lem_cliff_sym}, this implies the form given in the lemma. 	
	
Let us now present the details of the proof. Suppose that $U_{\alpha_1}=\mathds{1}$. First, we show that this assumption implies that $U_{\alpha_2}$ is a Clifford operator, $(UX_{4j+3}U^\dagger)_{\alpha_3} \in\mathcal{P}_{|{\alpha_3}|}$ ,$(UZ_{4j+4}U^\dagger)_{\alpha_4} \in\mathcal{P}_{|{\alpha_4}|}$ for all $j<n-1$ and $(UZ_{4j+3}U^\dagger)_{\alpha_3} \in\mathcal{P}_{|{\alpha_3}|}$, $(UX_{4j+4}U^\dagger)_{\alpha_4} \in\mathcal{P}_{|{\alpha_4}|}$ for all $j\le n-1$. Suppose any of the latter conditions does not hold. Then one of the operators 
$W U W U^\dagger$, with 

\begin{equation}
    W\in \{g_{4l+2},g_{4l+1}, g_{4l+2}g_{4l+4}, g_{4m+3}g_{4m+5}\}
\end{equation}

for $0\le l\le n-1$ and $0\le m\le n-2$, is not a Pauli operator and in particular not the identity. These operators are constructed such that they act nontrivial only on party ${\alpha_2}$ and ${\alpha_3}$ or ${\alpha_2}$ and ${\alpha_4}$. Let $V$ be one of the operators from above such that $V$ is not a Pauli operator and not the identity. Let us distinguish two cases. If $V$ is a Clifford operator, Lemma \ref{lem_cliff_sym} and Lemma \ref{structure_stab} imply that $V=\mathds{1}$ as $V$ has support only on parties ${\alpha_2}$ and ${\alpha_3}$ or ${\alpha_2}$ and ${\alpha_4}$. However, this is a contradiction to the assumption that $V$ is not the identity. If $V$ is not a Clifford operator, then let us again distinguish two cases. If $V$ has support on party ${\alpha_2}$ and ${\alpha_3}$, then one of the operators $W U W U^\dagger$, with 

\begin{equation}
W \in \{g_{4l+1},g_{4l+4}, g_{4l+2}g_{4l+4},g_{4l+3}g_{4l+1}\}
\end{equation}

for $0\le l\le n-1$, acts nontrivial on only one party and is a symmetry. This is a contradiction to the fact that for $\ket{G_{4n}}$ every party is fully entangled with the other ones. If $V$ has support on party ${\alpha_2}$ and ${\alpha_4}$, then one of the operators $W U W U^\dagger$ with 

\begin{equation}
W \in \{g_{4l+2},g_{4l+4}, \prod_{k\le l}g_{4k+1}\prod_{0\le k\le l-1}g_{4k+3}, \prod_{k\le l}g_{4k+1}\prod_{0\le k\le l}g_{4k+3}\}
\end{equation}

for $0\le l\le n-1$, acts nontrivial on only one party and is a symmetry, which leads again to the same contradiction. We conclude that $U_{\alpha_2}$ is a Clifford operator, $(UX_{4j+3}U^\dagger)_{\alpha_3} \in\mathcal{P}_{|{\alpha_3}|}$, $(UZ_{4j+4}U^\dagger)_{\alpha_4} \in\mathcal{P}_{|{\alpha_4}|}$ for all $j<n-1$ and $(UZ_{4j+3}U^\dagger)_{\alpha_3} \in\mathcal{P}_{|{\alpha_3}|}$, $(UX_{4j+4}U^\dagger)_{\alpha_4} \in\mathcal{P}_{|{\alpha_4}|}$ for all $j\le n-1$. Using these insights and following the exact steps of the proof of Lemma \ref{lem_cliff_sym}, we find that $U$ leaves all canonical generators invariant except $g_{4n-1}$. Expressing $U$ in the Pauli basis and imposing those conditions, we find that $U_{j}\in\mathcal{P}$ for any $j\le n-2$ and $U_{4n-1,4n}=P_1 \exp(i\beta_1 X)\otimes P_2 \exp(-i\beta_2 X)$ for some Pauli operators $P_1, P_2$ and with $\beta_{1},\beta_2\in\mathbb{R}$. We claim that $\beta_1=\beta_2+k\pi/2$ for some $k\in\mathbb{Z}$. Indeed, suppose $\beta_1\neq \beta_2+k\pi/2$. Let us choose a $S\in \mathcal{S}_{G_{4n}}$ such that $S_{4n}=P_2$. Then the symmetry $(SUB_{4n-1,4n}(-\beta_1))^2\in\Sigma_{G_{4n}}$ acts nontrivial only on qubit $4n$. This is a contradiction to the fact that this qubit is fully entangled with the rest of the state. Thus, we have shown that $U_{4n-1,4n}=\tilde{P}_1 \exp(i\beta X)\otimes \tilde{P}_2 \exp(-i\beta X)$ with $\beta\in\mathbb{R}$, $\tilde{P}_1,\tilde{P}_2\in\mathcal{P}$. As we assumed that $U$ is not a Clifford operator, $\beta\neq k\pi/4$ for any $k\in\mathbb{Z}$. Finally, we can write $U=SB_{4n-1,4n}(\beta)$ for some $S\in\mathcal{P}_{4n}$. As $B_{4n-1,4n}(\beta)\in\Sigma_{G_{4n}}$, we conclude $S\in\mathcal{S}_{G_{4n}}$.
	
The case $U_{\alpha_2}=\mathds{1}$ can be shown similarly considering different operators for $W$.
\end{proof}

Thus, with the above lemmata we have shown, that any PLU symmetry of $\ket{G_{4n}}$, which acts trivially on one party, is a product of elements of the stabilizer $\mathcal{S}_{G_{4n}}$ and leaf symmetries. Let us use these insights to show Theorem \ref{thm_plu_plc}.

\begin{proof}[Proof of Theorem \ref{thm_plu_plc}]
	Suppose there exists a PLU operation $U$ such that $\ket{H_{4n}}=U\ket{G_{4n}}$ is a graph state. We want to show that $\ket{G_{4n}}$ and $\ket{H_{4n}}$ are PLC equivalent.

	First, observe that $U^\dagger \Sigma_{H_{4n}}U=\Sigma_{G_{4n}}$ and therefore $U^\dagger \mathcal{S}_{H_{4n}}U\subset \Sigma_{G_{4n}}$. We show that not all elements of $\Sigma_{G_{4n}}$ can be a preimage of elements of $\mathcal{S}_{H_{4n}}$. The canonical generators of $\mathcal{S}_{G_{4n}}$ all have support on at most three parties. Therefore, it follows from the results of Ref.~\cite{Br06} that one cannot extract a $4$-partite GHZ state from $\ket{G_{4n}}$ (see Section~\ref{sec_previous_results}) using PLU operations. As $\ket{H_{4n}}$ is PLU equivalent to $\ket{G_{4n}}$, one also cannot extract a GHZ state from $\ket{H_{4n}}$ using PLU. Thus, $\mathcal{S}_{H_{4n}}$ admits a set of generators $\{h_i\}_{i\in[4n]}$ such that no generator $h_i$ has support on more than $3$ parties. Consider the preimages $\{U^\dagger h_i U\}_{i\in [4n]}$, which are PLU symmetries of the state $\ket{G_{4n}}$. It follows form Lemma \ref{lem_cliff_sym} and Lemma \ref{lem_non_cliff_sym} that $U^\dagger h_i U=S_i$ or $U^\dagger h_i U=S_iB_{1,2}(\beta_i^1 )$  or $U^\dagger h_i U=S_iB_{4n-1,4n}(\beta_i^2)$ where $S_i\in\mathcal{S}_{G_{4n}}$ and $\beta_i^1,\beta_i^2\in\mathbb{R}$ with $\beta_i^1\neq k\pi/4$, $\beta_i^2\neq k\pi/4$ for any $k\in\mathbb{Z}$ for all $i\in[4n]$.	
	
	If $\beta_i^1,\beta_i^2=0$ for all $i\in[4n]$, then $\{U^\dagger h_i U\}$ are Pauli operators. They are independent as the operators $\{h_i\}$ are independent. Moreover, the operators $\{U^\dagger h_i U\}$ have the same support and commutation relations on parties as the operators $\{h_i\}$ (as these properties are preserved by $U$). Therefore, it follows from Theorem \ref{thm_PLC_equiv} that the transformation from $\ket{G_{4n}}$ to $\ket{G_{4n}'}$ can be implemented by a PLC operator.
	
	Next, we consider the case where there exists an $i\in[4n]$ such that $\beta_i^1\neq 0$ and  $\beta_j^2= 0$ for all $j\in[4n]$. Let us show that this implies that for all $j$ such that $\beta_j^1\neq 0$ it holds that $\beta_j^1=\beta_i^1+l_j^1\pi/2$ for $l_j^1\in \mathbb{Z}$. To this end, observe that $h_j^2=\mathds{1}$ for all $j\in[4n]$ implies that $(U^\dagger h_i U)^2=\mathds{1}$ for all $j\in[4n]$. Thus, for any $j$ such that $\beta_j^1\neq 0$ we have
	\begin{equation}
	(U^\dagger h_i U)|_{1,2}=P_1^ie^{i\beta_j^1 X}\otimes P_2^ie^{-i\beta_j^1 Z},\label{eq_form_anticom}
	\end{equation}
	where $P_1^i,P_2^i\in\mathcal{P}$ are such that $P_1^iX=-XP_1^i$ and $P_2^iZ=-ZP_2^i$. Let us combine this structure with the fact that $(h_jh_k)_\alpha=\pm (h_kh_j)_\alpha$ for all $\alpha\in \{{\alpha_1},{\alpha_2},{\alpha_3},{\alpha_4}\}$ and $j,k\in [4n]$ implies that also $(U^\dagger h_jUU^\dagger h_k U)_\alpha=\pm (U^\dagger h_kUU^\dagger h_jU)_\alpha$. We obtain that if $j,k\in[4n]$ with $j\neq k$ are such that $\beta_j^1\neq 0$ and $\beta_k^1\neq 0$, then the latter condition only holds if
	\begin{align}
	e^{i(\beta_k^1-\beta_j^1) X}&=\pm e^{i(-\beta_k^1+\beta_j^1) X},\label{eq_loc10}
	\end{align}
	where we also used that $(U^\dagger h_jU)_q,(U^\dagger h_kU)_q\in\mathcal{P}$ for any $q\in {\alpha_1}\cup {\alpha_2}$ with $q\neq 1,2$ (due to Lemma~\ref{lem_non_cliff_sym}). Equation \eqref{eq_loc10} holds iff $\beta_k^1-\beta_j^1=l_j^1\pi/2$ for some $l_j^1\in\mathbb{Z}$. This is precisely the claim from above. It follows form this result that for any $j$ for which $l_j^1\neq 0$ we can write $U^\dagger h_j U=S_jB_{1,2}(\beta_j^1)=\tilde{S}_jB_{1,2}(\beta_j^1)$, where $\tilde{S}_j\in\mathcal{S}_{G_{4n}}$ is a different element of the stabilizer. To keep notations simple let us also define $\tilde{S}_j=S_j$ for all $j$ for which $l_j^1=0$. We show now that the operators $\{\tilde{S}_j\}$ are independent. Indeed, suppose they are not, i.e., suppose that there exists an index set $I\subset[4n]$ such that $\prod_{j\in I}S_j=\mathds{1}$. As operators $\{U^\dagger h_i U\}$ are independent (as the operators $\{h_i\}$ are independent), we have that  $\prod_{j\in I}U^\dagger h_j U\neq \mathds{1}$. One verifies that the operator $\prod_{j\in I}U^\dagger h_j U$ can only act nontrivially on qubits $1$ and $2$. To understand better how $\prod_{j\in I}U^\dagger h_j U$ acts on those two qubits consider the following. For any $j$ such that $\beta_j^1=0$ the fact that $(U^\dagger h_jUU^\dagger h_k U)_\alpha=\pm (U^\dagger h_kUU^\dagger h_jU)_\alpha$ holds for all $k$ and $\alpha\in \{\alpha_1,\alpha_2,\alpha_3,\alpha_4\}$ implies that $(\tilde{S}_j)_1\in\{\mathds{1},X\}$ and $(\tilde{S}_j)_2\in\{\mathds{1},Z\}$. Combining this with Equation \eqref{eq_form_anticom} we conclude $(\prod_{j\in I}U^\dagger h_i U)_{1,2} = \exp(i\beta_i^1 X)\otimes  \exp(-i\beta_i^1Z)$ which is a contradiction to the fact that $(\prod_{j\in I}U^\dagger h_i U)^2=\mathds{1}$. Thus, we have shown that the operators $\{\tilde{S}_j\}$ are independent, i.e., $\{\tilde{S}_j\}$ generate $\mathcal{S}_{G_{4n}}$. Moreover, observe that the operators $\tilde{S}_j$ have the same support as the corresponding ones in $h_j$. As $(U^\dagger h_jUU^\dagger h_k U)_\alpha=\pm (U^\dagger h_kUU^\dagger h_jU)_\alpha$ holds for all $j,k$ and $\alpha\in\{\alpha_1,\alpha_2,\alpha_3,\alpha_4\}$, it follows from the above considerations that the leaf part of the symmetries $\{U^\dagger h_i U\}$ does not affect commutation relations. We conclude with Theorem \ref{thm_PLC_equiv} that there exists a PLC transformation $C$ such that $C\tilde S_j C^\dagger =h_j$. Thus, $\ket{G_{4n}}$ and $\ket{H_{4n}}$ are PLC equivalent.
	
	The case where there exists $i$ such that $\beta_i^2\neq 0$ and $\beta_j^1=0$ for all $j$ follows from analogous arguments. Finally, consider the case where there exist $i, j\in [4n]$ such that $\beta_i^1\neq 0$ and $\beta_j^2\neq 0$. Due to Lemma \ref{lem_non_cliff_sym} none of the symmetries $\{U^\dagger h_i U\}$ is such that $\beta_i^1\neq 0$ and $\beta_i^2\neq 0$. Thus, we can again apply the same arguments as above to show the claim.
\end{proof}

\section{PLC transformations of qudit stabilizer states and commutation matrices\label{sec_proof_plcequiv_qudits}}

In this appendix we prove that also for qudit stabilizer states of prime dimensions the commutation matrix formalism characterizes PLC equivalence between stabilizer states. To this end we prove Theorem~\ref{thm_1_qudits}, which is a generalization of Theorem~\ref{thm_PLC_equiv_com} to qudit stabilizer states of prime dimensions. Let us first restate the Theorem.
\begin{manualtheorem}{15}
	Let $\ket{\psi}$ and $\ket{\phi}$ be $n$-qudit, $M$-partite stabilizer states. Then $\ket{\psi}$ is PLC equivalent to $\ket{\phi}$ if and only if they admit the same tuple of commutation matrices.
\end{manualtheorem}
The proof of Theorem~\ref{thm_1_qudits} follows similar lines as the proof of Theorem~$2$ in Ref.~\cite{Br06}. There, the proof is based on the following Lemma, which can be straightforwardly generalized to prime dimensions. For completeness we also include the proof here.
\begin{lem}\label{lem_condition_clifford}
	Let $V,W\subset\mathbb{Z}_d^{2n}$ be subspaces of the symplectic vector space $(\mathbb{Z}_d^{2n},\omega )$, and let $\text{dim}(V)=\text{dim}(W)=p$. Then the following statements are equivalent:
	\begin{itemize}
		\item For every party $\alpha\in P(M,n)$ there exists an isometry $u_\alpha:\mathbb{Z}_d^{2|\alpha|}\rightarrow \mathbb{Z}_d^{2|\alpha| }$ such that
		\begin{equation}
		W=\left(\bigoplus_{\alpha\in P(M,n)}u_\alpha \right)V
		\end{equation}
		\item There exists an invertible linear map $T:V\rightarrow W$ such that
		\begin{itemize}
			\item[(i)] $\omega(T(\vec{f})_\alpha,T(\vec{g})_\alpha)=\omega(\vec{f}_\alpha,\vec{g}_\alpha)$ for all $\vec{f},\vec{g}\in V$ and $\alpha\in P(M,n)$,
			\item[(ii)] $T(V_{\hat{\alpha}})=W_{\hat{\alpha}}$ for all $\alpha\in P(M,n)$.
		\end{itemize}
	\end{itemize}
\end{lem}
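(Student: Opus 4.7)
The plan is to prove the two implications separately. The forward direction (the first statement implies the second) is essentially a routine verification: given local isometries $u_\alpha$, define $T$ to be the restriction to $V$ of $\bigoplus_\alpha u_\alpha$. Linearity and invertibility follow from those of each $u_\alpha$; condition (i) is immediate because each $u_\alpha$ is an isometry of $\omega$ restricted to party $\alpha$; and condition (ii) holds because $(\bigoplus_\beta u_\beta)(\vec{f})$ vanishes on party $\alpha$ if and only if $\vec{f}$ does, since $u_\alpha$ acts only on the $\alpha$-component.

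The reverse direction is the substantive part. Fix a party $\alpha \in P(M,n)$. The first step is to extract from $T$ a well-defined linear map $t_\alpha : V_\alpha^{\mathrm{res}} \to W_\alpha^{\mathrm{res}}$ between the restrictions $V_\alpha^{\mathrm{res}} = \{\vec{f}_\alpha : \vec{f} \in V\}$ and $W_\alpha^{\mathrm{res}} = \{\vec{g}_\alpha : \vec{g} \in W\}$, by setting $t_\alpha(\vec{f}_\alpha) = T(\vec{f})_\alpha$. Well-definedness is exactly where condition (ii) enters: if $\vec{f}, \vec{f}' \in V$ satisfy $\vec{f}_\alpha = \vec{f}'_\alpha$, then $\vec{f} - \vec{f}' \in V_{\hat{\alpha}}$, and (ii) forces $T(\vec{f} - \vec{f}') \in W_{\hat{\alpha}}$, so $T(\vec{f})_\alpha = T(\vec{f}')_\alpha$. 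Applying the same argument to $T^{-1}$ (which satisfies the analogous conditions because $T$ is a bijection between $V$ and $W$), the map $t_\alpha$ is a linear bijection onto $W_\alpha^{\mathrm{res}}$. Condition (i) then says exactly that $t_\alpha$ preserves the symplectic form $\omega$ of $\mathbb{Z}_d^{2|\alpha|}$ between these subspaces.

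The second step is to extend $t_\alpha$ to an isometry $u_\alpha$ of the full symplectic space $(\mathbb{Z}_d^{2|\alpha|}, \omega)$. This is precisely the content of Witt's Lemma (Lemma~\ref{lem_witt_lemma}), which holds for $\mathbb{Z}_d$ with $d$ prime. Performing this extension independently for each $\alpha \in P(M,n)$ yields the required tuple of local isometries. For any $\vec{f} \in V$ we then have
\begin{equation}
\Bigl(\bigoplus_{\alpha \in P(M,n)} u_\alpha\Bigr)(\vec{f}) \;=\; \bigoplus_{\alpha} t_\alpha(\vec{f}_\alpha) \;=\; \bigoplus_{\alpha} T(\vec{f})_\alpha \;=\; T(\vec{f}) \;\in\; W,
\end{equation}
so $\bigl(\bigoplus_\alpha u_\alpha\bigr) V \subseteq W$, and equality follows from $\dim V = \dim W = p$ together with the injectivity of $\bigoplus_\alpha u_\alpha$.

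The main obstacle I anticipate is the well-definedness of $t_\alpha$: conditions (i) and (ii) appear asymmetric, and it is condition (ii) that carries the geometric content that local isometries cannot change the party support of a vector. Once $t_\alpha$ has been correctly defined as a partial symplectic isomorphism on $\mathbb{Z}_d^{2|\alpha|}$, the extension to a global isometry is a direct invocation of Witt's Lemma and requires no further work.
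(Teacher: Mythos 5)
Your proposal is correct and follows essentially the same route as the paper: restrict $T$ party-wise to a map between the sets of $\alpha$-restrictions, use condition (ii) (applied to both $T$ and $T^{-1}$) to get a well-defined linear bijection, use condition (i) to see it preserves $\omega$, and then extend to a full isometry via Witt's Lemma. Your write-up is in fact slightly more complete than the paper's, since you explicitly address well-definedness of $t_\alpha$ and verify at the end that $\bigl(\bigoplus_\alpha u_\alpha\bigr)V=W$ by a dimension count, steps the paper leaves implicit.
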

\begin{proof}
	The proof is analogous to the one presented in Ref.~\cite{Br06}. The only if part is trivial as $T=\oplus_{\alpha\in P(M,n)}u_\alpha$. Consider the if part of the statement. Let $\{\vec{f}_j\}_{j\in [p]}$ be a basis of $V$ and let $\vec{f}_j':= T(\vec{f}_j)$. As $T$ is an invertible map, $\{\vec{f}_j'\}_{j\in [p]}$ is a basis of $W$. For every party $\alpha\in P(M,n)$ let us define the linear map $v_\alpha: V_\alpha\rightarrow W_\alpha$ by $v_\alpha ((\vec{f}_j)_\alpha)=(\vec{f}_j')_\alpha$. It is linear as $T$ is linear. We now show that $v_\alpha$ is an isometry.
	
	First note that condition $(i)$ implies that $v_\alpha$ preserves the symplectic form $\omega$. Let us show that $v_\alpha$ is invertible due to condition $(ii)$. To that end we consider the equation $v_\alpha[\mathbf{x}]=0$ and show that this implies $\mathbf{x}=0$. Writing $\mathbf{x}=\sum_{j\in [p]}x_j(\vec{f}_j)_\alpha$ one obtains $v_\alpha[\sum_{j\in [p]}x_j(\vec{f}_j)_\alpha]=\sum_{j\in [p]}x_jv_\alpha[(\vec{f}_j)_\alpha]=\sum_{j\in [p]}x_j(\vec{f}'_j)_\alpha=0$. The definition of the colocal subspace implies that $\sum_{j\in [p]}x_j(\vec{f}'_j) \in W_{\hat{\alpha}}$. Due to condition (ii) this holds only if there exists $\sum_{j\in [p]}x_j(\vec{f}_j)\in V_{\hat{\alpha}}$ for which $T(\sum_{j\in [p]}x_j(\vec{f}_j))=\sum_{j\in [p]}x_j(\vec{f}'_j)$. Employing again the definition of the colocal subspace one realizes that $\sum_{j\in [p]}x_j(\vec{f}_j)\in V_{\hat{\alpha}}$ only if $\sum_{j\in [p]}x_j(\vec{f}_j)_\alpha =\mathbf{x}=0$, and thus, $v_\alpha $ is invertible. Hence, $v_\alpha: V_\alpha\rightarrow W_\alpha$ is an isometry and by Witt's Lemma (Lemma~\ref{lem_witt_lemma}) $v_\alpha$ extends to an isometry $u_\alpha:\mathbb{Z}_d^{2|\alpha|}\rightarrow \mathbb{Z}_d^{2|\alpha|}$. This concludes the proof.
	
\end{proof}

Following the same arguments as Ref.~\cite{Br06}, we show, in order to prove Theorem~\ref{thm_1_qudits}, that for maximally isotropic subspaces condition $(ii)$ is redundant. Let us first show the statement for stabilizer states with full local ranks using the same argument as Ref.~\cite{Br06}. Recall that a stabilizer state $\ket{\psi}$ is said to have full local ranks if $\rho_\alpha$ has full rank for every party $\alpha$ or equivalently if $\text{dim}(V_\alpha^\mathcal{S})=0$.

If $\text{dim}(V_\alpha^\mathcal{S})=d_\alpha> 0$, then there exists a Clifford operator $C\in\mathcal{C}_{|\alpha|}$ acting on party $\alpha $ such that $C\otimes\mathds{1}\ket{\psi}=\ket{\phi}\otimes\ket{0}^{\otimes d_\alpha}$. This follows directly from Witt's Lemma (Lemma~\ref{lem_witt_lemma}). Indeed, let $\{e_i\}$ be the standard basis of $\mathbb{Z}_d^{2|\alpha|}$ and let ${\vec{f}_i}$ be a basis for $V_\alpha^\mathcal{S}$. A bijection between $\{e_i\}_{i=2,\ldots, 2|\alpha|}$ and ${\vec{f}_i}$ preserves the symplectic form $\omega:\mathbb{Z}_d^{2|\alpha|}\times \mathbb{Z}_d^{2|\alpha|}\rightarrow \mathbb{Z}_d$. Thus, it extends to a isometry on $\mathbb{Z}_d^{2|\alpha|}$ by Witt's Lemma (Lemma~\ref{lem_witt_lemma}). The Clifford operator $C$ which corresponds to this isometry then gives $C\otimes\mathds{1}\ket{\psi}=\ket{\phi}\otimes\ket{0}^{\otimes d_\alpha}$.

\begin{lem}\label{lem_plc_equiv_qudit_full}
	Let $\ket{\phi},\ket{\psi}\in \text{Stab}(P(M,n))$ have full local ranks. Then, $\ket{\psi}$ and $\ket{\phi}$ is PLC equivalent to $\ket{\phi}$ iff they admit the same tuple of commutation matrices.
\end{lem}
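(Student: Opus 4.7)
The plan is to reduce the statement to Lemma~\ref{lem_condition_clifford}. In the symplectic-vector-space picture, a PLC transformation taking $\ket\psi$ to $\ket\phi$ corresponds to an operator $\bigoplus_{\alpha} u_\alpha$ with each $u_\alpha$ an isometry of $(\mathbb{Z}_d^{2|\alpha|},\omega)$ mapping $V^{\mathcal{S}_\psi}$ onto $V^{\mathcal{S}_\phi}$, and by Lemma~\ref{lem_condition_clifford} existence of such a family is equivalent to the existence of a bijective linear map $T\colon V^{\mathcal{S}_\psi}\to V^{\mathcal{S}_\phi}$ satisfying conditions (i) and (ii). Now condition (i) alone is exactly the statement that $\ket\psi$ and $\ket\phi$ admit the same tuple of commutation matrices: mapping a basis $\{\vec f_i\}$ of $V^{\mathcal{S}_\psi}$ to a basis $\{\vec g_i\}$ of $V^{\mathcal{S}_\phi}$ produces matching commutation matrices on every party precisely when $\omega((\vec f_i)_\alpha,(\vec f_j)_\alpha)=\omega((\vec g_i)_\alpha,(\vec g_j)_\alpha)$. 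Hence it suffices to prove that, under the full-local-rank hypothesis, condition (i) already implies condition (ii); the converse direction of the lemma is immediate, since the restriction of $\bigoplus_\alpha u_\alpha$ to $V^{\mathcal{S}_\psi}$ is a bijection obeying (i) and reveals the shared tuple on any basis.

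The key input is the following duality, valid for any maximally isotropic $V\subseteq \mathbb{Z}_d^{2n}$ and any party $\alpha$. Let $\pi_\alpha(V)=\{\vec f_\alpha : \vec f\in V\}\subseteq \mathbb{Z}_d^{2|\alpha|}$ denote the restriction to $\alpha$, and identify the local subspace $V_\alpha$ with its natural image in $\mathbb{Z}_d^{2|\alpha|}$. Then $\pi_\alpha(V)^{\perp}=V_\alpha$ inside $(\mathbb{Z}_d^{2|\alpha|},\omega)$. The inclusion $V_\alpha \subseteq \pi_\alpha(V)^\perp$ is immediate from isotropy of $V$; for the reverse, any $\vec h\in \pi_\alpha(V)^\perp$, extended by zeros on the other parties, is symplectically orthogonal to all of $V$, hence lies in $V^\perp = V$ by maximality, and thus in $V_\alpha$. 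In particular, when $V_\alpha=0$ (full local rank), non-degeneracy of $\omega$ on $\mathbb{Z}_d^{2|\alpha|}$ forces $\pi_\alpha(V)=\mathbb{Z}_d^{2|\alpha|}$.

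Assume now that $T$ satisfies (i), fix $\alpha$, and take $\vec f\in V^{\mathcal{S}_\psi}_{\hat\alpha}$, i.e., $\vec f_\alpha=0$. For every $\vec g\in V^{\mathcal{S}_\psi}$, (i) yields $\omega(T(\vec f)_\alpha, T(\vec g)_\alpha)=\omega(\vec f_\alpha,\vec g_\alpha)=0$. Since $T$ is bijective, $\{T(\vec g)_\alpha : \vec g\in V^{\mathcal{S}_\psi}\}=\pi_\alpha(V^{\mathcal{S}_\phi})=\mathbb{Z}_d^{2|\alpha|}$ by the duality, so non-degeneracy forces $T(\vec f)_\alpha=0$, i.e., $T(\vec f)\in V^{\mathcal{S}_\phi}_{\hat\alpha}$. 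Applying the same argument to $T^{-1}$, which also inherits (i), yields the reverse inclusion and thus (ii). The only conceptual step is identifying the duality $\pi_\alpha(V)^\perp=V_\alpha$; everything else is mechanical, so this is where I expect the main subtlety to lie, namely in using maximal isotropy rather than merely isotropy of $V^{\mathcal{S}}$.
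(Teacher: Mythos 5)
Your proposal is correct and follows essentially the same route as the paper: reduce to Lemma~\ref{lem_condition_clifford}, note that sharing a tuple of commutation matrices is exactly the existence of a bijective $T$ satisfying (i), and then derive (ii) from (i) using maximal isotropy together with the full-local-rank hypothesis (applied to both $T$ and $T^{-1}$). Your duality $\pi_\alpha(V)^\perp=V_\alpha$ is just a local-space repackaging of the paper's inline step that $T(\vec f)_\alpha\oplus 0\in W^\perp=W$ forces $T(\vec f)_\alpha\in W_\alpha=\{0\}$, so the two arguments coincide in substance.
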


\begin{proof}
	The only if part of the statement follows trivially from the fact that PLC does not alter the commutation matrices. Let us show the if part. Suppose that $\ket{\psi}$ and $\ket{\phi}$ admit the same set of commutation matrices. Using $V=V^{\mathcal{S}_\psi}$ and $W=V^{\mathcal{S}_\phi}$, it is straightforward to verify that then there exists a map $T:V\rightarrow W$ such that condition $(i)$ of Lemma~\ref{lem_condition_clifford} holds. Analogously to Ref.~\cite{Br06} let us show that this implies that condition $(ii)$ also holds as $V,W$ are maximally isotropic with full local ranks. Choose some party $\alpha\in P(M,n)$. Let $f\in V_{\hat{\alpha}}$. By the definition of the colocal subspace $f_\alpha=0$. Then, due to property $(i)$, it follows that $0=\omega(f_\alpha,[T^{-1}(g)]_\alpha)=\omega(T(f)_\alpha,g_\alpha)$ for all $g\in W$. Hence, $T(f)_\alpha\oplus 0\in W^\perp$ and since $W$ is maximally isotropic $T(f)_\alpha\oplus 0\in W$. As $\ket{\phi}$ has full local ranks, i.e., the local subspace $W_\alpha$ only contains the zero vector, we conclude that $T(f)_\alpha=0$ and therefore $T(V_{\hat{\alpha}})\subseteq W_{\hat{\alpha}}$. A similar argument with $T^{-1}$ shows that $T(W_{\hat{\alpha}})\subseteq V_{\hat{\alpha}}$ and therefore $T(V_{\hat{\alpha}})= W_{\hat{\alpha}}$.
\end{proof}

\begin{proof}[Proof of Theorem~\ref{thm_1_qudits}]
	The only if part of the statement follows trivially from the fact that PLC does not alter the commutation matrices. Let us show the if part. Suppose that $\ket{\psi}$ and $\ket{\phi}$ admit the same set of commutation matrices. Let us find PLC transformations $P_1$ and $P_2$ such that $P_1\ket{\psi}=\ket{\psi'}\otimes \ket{0}^{\otimes k}$ and $P_2\ket{\phi}=\ket{\phi'}\otimes\ket{0}^{\otimes k}$ and $\ket{\psi'}$ and $\ket{\phi'}$ are stabilizer states with full local ranks. Note that we can extract the same number of zeros in each party from $\ket{\psi}$ and $\ket{\phi}$ as they admit the same tuple of commutation matrices. We know that all nontrivial commutation relations of elements of the stabilizer of $\ket{\psi'}\otimes \ket{0}^{\otimes k}$ come from elements of the stabilizer of $\ket{\psi'}$ and analogously for $\ket{\phi'}\otimes\ket{0}^{\otimes k}$. As $\ket{\psi'}\otimes\ket{0}^{\otimes k}$ and $\ket{\phi'}\otimes\ket{0}^{\otimes k}$ admit the same tuple of commutation matrices (as PLC transformations do not alter commutation relations), we conclude that also $\ket{\psi'}$ and $\ket{\phi'}$ admit the same tuple of commutation matrices. Thus, according to Lemma \ref{lem_plc_equiv_qudit_full} they are PLC equivalent. This concludes the proof of the if part.
\end{proof}
From the proof it also follows that two states $\ket{\psi}$ and $\ket{\varphi}$ are PLC equivalent if and only if the states $\ket{\psi}\otimes \ket{0}^{\otimes k}$ and $\ket{\varphi}\otimes \ket{0}^{\otimes k}$ are PLC equivalent (where the $\ket{0}$ states are distributed in the same way).

\section{$m$-partite qubit stabilizer states and the uniqueness of the decomposition}\label{sec_uniqueness}

For qudit systems of prime dimension with $d>2$ we showed in Section~\ref{sec_qudit_systems} that the decomposition of stabilizer states into indecomposable states is unique. In this section we present a possible way to establish the uniqueness of decompositions into indecomposables in the case of qubits under the condition that for commutation matrices isomorphy and congruence transformations are equally powerful. While this condition is not true for arbitrary tuples of alternating matrices of length larger or equal to four, as we demonstrate below, for certain special tuples of commutation matrices, namely those corresponding to states without additional PLC symmetries, isomorphy and congruence are indeed equally powerful. Hence, decompositions into these stabilizer states are unique. For states admitting additional PLC symmetries the problem remains open.

Before discussing our approach to establish the uniqueness of the decompositions for qubit stabilizer states, a remark is in order.

First, let us discuss as to why the results of Scharlau regarding the decompositions of tuples of matrices~\cite{Sc76} cannot be extended to an arbitrary length of tuples. A crucial ingredient of the arguments in Ref.~\cite{Sc76} is to show that every pair of alternating matrices is congruent to a neutral one (see Equation~\eqref{eq_neutral}). To our knowledge, it is not known whether tuples of length $3$ still have this property. In particular, it is easy to see that for tuples of length $\ge 4$ the statement is no longer valid. Indeed, one can verify that the commutation matrices corresponding to the $5$ qubit ring graph distributed among $5$ parties with one qubit per party are not congruent to a neutral tuple. Let us note here, that Ref.~\cite{Sc76} uses the neutrality to show that congruence transformations ($Q\ldots Q^T$, $Q$ invertible) between pairs of alternating matrices are possible if and only if there exists an isomorphism ($Q\ldots P$, $Q$ and $P$ invertible) between the respective tuples. However, Ref.~\cite{Sc76} does not establish an equivalence between neutrality and the claim that isomorphy and congruence of tuples of alternating matrices. Thus, this does not prevent us from trying to prove the latter claim without relying on neutrality. 

Moreover, for general tuples of alternating matrices of length $\ge 4$ isomorphy and congruence transformations are not necessarily equally powerful, as the following example demonstrates. Consider the tuple $(A_1,A_2,A_3,A_4)$, where
\begin{align}
A_1&=\begin{pmatrix}0&1&0&0&0&0&0&0\\1&0&0&0&0&0&0&0\\0&0&0&0&0&0&0&0\\0&0&0&0&1&0&0&0\\0&0&0&1&0&1&0&0\\0&0&0&0&1&0&0&0\\0&0&0&0&0&0&0&0\\0&0&0&0&0&0&0&0\end{pmatrix}\ 
A_2=\begin{pmatrix}0&1&0&0&0&0&0&0\\1&0&1&0&0&0&0&0\\0&1&0&0&0&0&0&0\\0&0&0&0&0&0&0&0\\0&0&0&0&0&1&0&0\\0&0&0&0&1&0&1&0\\0&0&0&0&0&1&0&0\\0&0&0&0&0&0&0&0\end{pmatrix}\ \notag
\end{align}
\begin{align}
A_3&=\begin{pmatrix}0&0&0&0&0&0&0&0\\0&0&1&0&0&0&0&0\\0&1&0&1&0&0&0&0\\0&0&1&0&0&0&0&0\\0&0&0&0&0&0&0&0\\0&0&0&0&0&0&1&0\\0&0&0&0&0&1&0&1\\0&0&0&0&0&0&1&0\end{pmatrix}\ 
A_4=\begin{pmatrix}0&0&0&0&0&0&1&0\\0&0&0&0&0&0&0&0\\0&0&0&0&1&0&0&0\\0&0&0&0&0&0&0&0\\0&0&1&0&0&0&0&0\\0&0&0&0&0&0&0&0\\1&0&0&0&0&0&0&0\\0&0&0&0&0&0&0&0\end{pmatrix}.
\end{align}
Let us define $B_i = PA_i$ for all $i\in[4]$, where
\begin{equation}
P=\begin{pmatrix}1&0&0&0&0&0&0&0\\0&1&0&0&0&1&0&0\\0&0&1&0&0&0&0&0\\0&0&0&1&0&0&0&1\\1&0&0&0&1&0&0&0\\0&0&0&0&0&1&0&0\\0&0&1&0&0&0&1&0\\0&0&0&0&0&0&0&1\end{pmatrix},
\end{equation}
so that the $B_i$'s are alternating. One can then verify, e.g., by using the techniques described in Section~\ref{sec_egs4} (see Equation~\eqref{eq_check_congruence}), that there does not exist an invertible matrix $Q$ such that $QA_iQ^T=B_i$ for all $i\in[4]$.

We emphasize, that this example does not imply that isomorphy and congruence transformations are not equally powerful for commutation matrices as the tuple above does not satisfy the rank condition in Equation~\eqref{eq_rank_condition}, which is a necessary condition for commutation matrices. Note also that it is not clear whether the above example can be used to find an example of a tuple of alternating matrices which does not decompose uniquely into blocks.

Let us now outline a possible path to prove that decompositions into indecomposable qubit stabilizer states are unique. Let us start by considering only those tuples of commutation matrices for which isomorphy and congruence transformations are equally powerful, i.e., which satisfy the following condition.
\begin{prem}\label{clm_cong_iso}
Let $(A_i)_i$ and $(B_i)_i$ be tuples of commutation matrices such that there exists an invertible matrix $Q$ with $QA_iQ^T=B_i$ for all $i\in[n]$ if and only if there exist invertible matrices $P,R$ such that $PA_iR=B_i$ for all $i\in[n]$.
\end{prem}
We will later prove that this is indeed the case for commutation matrices corresponding to stabilizer states without PLC symmetries (of order $2$ or $4$). It is unclear if the condition holds for the general case.

Now, let us show the relevance of Condition~\ref{clm_cong_iso} in the context of proving the uniqueness of the decomposition of tuples of commutation matrices. We argue in the following that Condition~\ref{clm_cong_iso} together with the Krull-Remak-Schmidt theorem (Theorem~\ref{thm_krm}) and an observation from Ref.~\cite{Se88} are sufficient to show that commutation matrices decompose uniquely into indecomposable blocks. First, let us present the observation from Ref.~\cite{Se88}, which is concerned with decompositions of tuples of alternating matrices over fields with characteristic two.
\begin{lem}[\cite{Se88}]\label{lem_decom_char2}
    Let $(A_i\in M_{k\times k}(\mathbb{Z}_2))_{i\in[n]}$, with $k\in\mathbb{N}$, be a tuple of alternating matrices. If $(A_i)_i$ is indecomposable under congruence, then either $(A_i)_i$ is indecomposable under isomorphy or $(A_i)_i$ is isomorphic to a tuple $(B_i\oplus B_i^T)_i$, where $(B_i)_i$ is indecomposable.
\end{lem}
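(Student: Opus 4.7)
The strategy is to use the Krull-Remak-Schmidt theorem (Theorem~\ref{thm_krm}) to decompose $(A_i)$ under isomorphism, then exploit the fact that over $\mathbb{Z}_2$ an alternating matrix $A$ satisfies $A = A^T$ to relate that decomposition to one under congruence. Throughout, let $(A_i)_{i\in[n]} \subset M_{k\times k}(\mathbb{Z}_2)$ be congruence-indecomposable. If $(A_i)$ is already indecomposable under isomorphism there is nothing to prove, so I assume it is isomorphism-decomposable.

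By Krull-Remak-Schmidt, there exists an essentially unique decomposition $(A_i) \cong \bigoplus_{j=1}^{s}(C_i^{(j)})$ into isomorphism-indecomposable blocks, with $s\ge 2$. The pivotal observation is that the relation $A_i = A_i^T$ in characteristic $2$ implies $(A_i) \cong \bigoplus_{j=1}^{s} ((C_i^{(j)})^T)$ as well, so by the uniqueness part of Theorem~\ref{thm_krm} the map $[C]\mapsto[C^T]$ on isomorphism classes of indecomposable blocks permutes the multiset $\{[C^{(j)}]\}_{j=1}^{s}$. This partitions the indices into two kinds: \emph{self-adjoint} blocks, satisfying $[C^{(j)}] = [(C^{(j)})^T]$, and \emph{transposed pairs} $\{[B],[B^T]\}$ with $[B]\neq[B^T]$ appearing with equal multiplicity in the decomposition.

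The next step is to translate each kind of component into a congruence block. For a transposed pair, the tuple $(B_i) \oplus (B_i^T)$ admits the canonical alternating realization $\bigl(\begin{smallmatrix} 0 & B_i \\ B_i^T & 0\end{smallmatrix}\bigr)$, which is isomorphic (via swapping columns) to $(B_i \oplus B_i^T)$ and which one can show is congruence-indecomposable precisely when $(B_i)$ is isomorphism-indecomposable. For a self-adjoint block $(C^{(j)})$, one uses the self-adjoint endomorphism ring (defined by Equation~\eqref{eq_endo}) to construct an alternating tuple congruent to a single block that realizes the isomorphism class of $(C^{(j)})$, so it also contributes a single congruence-indecomposable piece. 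Since by assumption $(A_i)$ is congruence-indecomposable, the global decomposition can contain at most one such piece. The self-adjoint case contributes only a single isomorphism block, contradicting $s\ge 2$; therefore the decomposition must consist of exactly one transposed pair, giving $(A_i) \cong (B_i \oplus B_i^T)$ with $(B_i)$ indecomposable, as desired.

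\textbf{Main obstacle.} The delicate step is constructing a congruence realization for each self-adjoint isomorphism class in characteristic~$2$ and, more importantly, ruling out ``cross-mixing'' between distinct self-adjoint isomorphism blocks. In characteristic not $2$, the symmetrization $P\mapsto \tfrac{1}{2}(P+P^T)$ provides an immediate passage between self-adjoint isomorphism classes and symmetric/alternating congruence classes, making it transparent that multiple self-adjoint blocks split off as congruence summands. Over $\mathbb{Z}_2$, symmetric and alternating diverge (alternating is strictly stronger), so the construction of a self-adjoint endomorphism witnessing a congruence decomposition requires a careful case analysis tracking whether the endomorphism ring of each block contains a nontrivial idempotent that respects the alternating structure (cf.\ Fitting's lemma, Lemma~\ref{lem_fitting}). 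This characteristic-$2$ subtlety is exactly what forces the lemma to tolerate the alternative $(B_i \oplus B_i^T)$ rather than concluding outright that congruence-indecomposability is equivalent to isomorphism-indecomposability.
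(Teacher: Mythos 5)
The paper offers no proof of this lemma: it is quoted as an ``observation from Ref.~\cite{Se88}'' and used as a black box, so there is nothing in-paper to compare against. Judged on its own, your proposal has a genuine gap in two places. First, after obtaining the Krull--Remak--Schmidt decomposition under isomorphy and correctly observing that transposition permutes the blocks, you never show that this decomposition can be realized \emph{orthogonally}, i.e., that $(A_i)_i$ is congruent --- not merely isomorphic --- to a direct sum of alternating tuples supported on the isotypic components. That orthogonal-splitting step (done in Ref.~\cite{Se88} via idempotents in the self-adjoint endomorphism ring, in the spirit of Lemma~\ref{lem_fitting}) is the real content of the reduction; without it, the sentence ``since $(A_i)$ is congruence-indecomposable, the global decomposition can contain at most one such piece'' does not follow, because a priori a congruence-indecomposable tuple could sit diagonally across several isomorphism blocks rather than splitting along them.

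Second, and more damaging, the assertion that a self-dual block ``contributes a single congruence-indecomposable piece'' is false over $\mathbb{Z}_2$, and your final step relies on it. Take $n=1$ and the hyperbolic plane $\bigl(\begin{smallmatrix}0&1\\1&0\end{smallmatrix}\bigr)$: it is congruence-indecomposable (Theorem~\ref{thm_dickson_gen}), yet under isomorphy it is equivalent to $(1)\oplus(1)$, two copies of the self-dual indecomposable $(1)$, whose isomorphism class contains no $1\times 1$ alternating matrix at all. So a self-dual isomorphism class need not carry an alternating form on a single copy, in which case the congruence-indecomposable piece containing it necessarily consists of two copies --- exactly the case $B\cong B^T$ that the alternative $(B_i\oplus B_i^T)_i$ in the statement is designed to cover. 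Your dichotomy excludes this case, and your conclusion (``exactly one transposed pair'', with $[B]\neq[B^T]$ by your definition of transposed pair) is strictly stronger than the lemma and is refuted by this example. A correct argument must analyze the self-dual isotypic component by classifying the induced form over the residue division ring of the local endomorphism ring and showing the indecomposables there have rank at most two; your ``main obstacle'' paragraph gestures at precisely this difficulty, but the body of the proof then asserts the wrong resolution of it.
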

That is, a tuple of alternating matrices which is indecomposable under congruence, can at most be decomposed into two indecomposable blocks under isomorphy. Those blocks are equal up to transposition. If Condition~\ref{clm_cong_iso} is true, then the block $(B_i)_i$ is not isomorphic to a tuple of alternating matrices. Indeed, suppose $(B_i)_i$ is isomorphic to $(\tilde{B}_i)_i$, where $\tilde{B}_i$ is alternating for all $i\in [n]$. Then, $(A_i)_i$ is isomorphic to the tuple of alternating matrices $(\tilde{B_i}\oplus \tilde{B}_i^T)_i$, and therefore by Condition~\ref{clm_cong_iso} congruent to it. This is a contradiction to the assumption that $(A_i)_i$ is indecomposable under congruence.

It is now straightforward to see that in combination with the above result and the Condition~\ref{clm_cong_iso} the Krull-Remak-Schmidt theorem implies that commutation matrices over the field $\mathbb{Z}_2$ decompose uniquely into indecomposable blocks under congruence. Indeed, suppose $D_1=(A_i^1\oplus\ldots \oplus A_i^n)_i$ and $D_2=(B_i^1\oplus\ldots \oplus B_i^m)_i$ are decompositions into indecomposable blocks under congruence of the same tuple of commutation matrices. At most any of these blocks decomposes into two blocks under isomorphy. After this decomposition, the Krull-Remak-Schmidt theorem states that the two decompositions have the same number of blocks and, up to reordering, the blocks are isomorphic to each other. The above considerations imply that it cannot be the case that one of the new blocks in $D_1$ is isomorphic to one in $(B_i^1\oplus\ldots \oplus B_i^m)_i$. We conclude that $n=m$ and pairs of blocks in $D_1$ and $D_2$ are isomorphic. As all of these blocks are tuples of commutation matrices, Condition~\ref{clm_cong_iso} implies that these blocks are congruent. We see that if Condition~\ref{clm_cong_iso} is true, then tuples of commutation matrices decompose uniquely into indecomposable blocks under congruence. 

Let us now comment on our attempts to prove that Condition~\ref{clm_cong_iso} holds for all commutation matrices. Observe that one direction of the claim is trivial, namely that congruence implies isomorphy. From now on we only consider the other direction, i.e., we want to show that if there exist invertible matrices $P,R$ such that $(PA_iR)_i$ is a tuple of commutation matrices, then there exists an invertible matrix $Q$ such that $(PA_iR)_i=(QA_iQ^T)_i$. By applying the congruence transformation $R^{-1}\ldots (R)^{-T}$ this claim reduces to the following one. If there exists an invertible matrix $W=R^{-1}P$ such that $(WA_i)_i$ is a tuple of commutation matrices, then there exists an invertible matrix $H$ such that $(WA_i)_i=(HA_iH^T)_i$.

Let us investigate which operators $W$ are such that $(WA_i)_i$ is again a tuple of commutation matrices, given that $(A_i)_i$ is a tuple of commutation matrices. First recall that commutation matrices are symmetric, implying that these operators have to satisfy $(WA_i)_i=(A_iW^T)_i$. Therefore, $W$ is a self-adjoint endomorphism of $(A_i)_i$. As $W$ is invertible and defined over the finite field $\mathbb{Z}_2$, there exists a power $k\in\mathbb{N}$ such that $W^k=\mathds{1}$. Let us combine this property with the fact that $W$ is a self-adjoint endomorphism. If $k=2l+1$ is odd, then one finds that $(W^{2l+2}A_i)_i=(W^{l+1}A_i(W^{l+1})^T)_i$, i.e., the transformation is possible via congruence. Thus the only case in which $(WA_i)_i$ might not be congruent to the tuple $(A_i)$  is if $k=2l$ is even. In that case we find $(P^{2l}A_i)_i=(P^{l}A_i(P^{l})^T)_i=(A_i)_i$, i.e., the stabilizer state corresponding to $(A_i)_i$ has a PLC symmetry. This can be easily understood from the fact that any congruence transformation corresponds to a PLC transformation and vice versa (see Theorem~\ref{thm_PLC_equiv_com}). We conclude that only commutation matrices associated with stabilizer states with non-trivial PLC symmetries can possibly transform differently under isomorphy. Hence, for stabilizer states without additional PLC symmetries the above arguments imply that they decompose uniquely into indecomposable ones. However, it is questionable how useful this insight is. It is straightforward to verify that every indecomposable stabilizer state up to $10$ qubits has additional symmetries (see Figure~\ref{fig:EGS8}). Note that LC symmetries of stabilizer states have been studied in \cite{Bo91,Bo93,Zh09,En20}.

Summarizing the above discussion, if Condition~\ref{clm_cong_iso} is fulfilled the decompositions into indecomposables is unique. Moreover, we have shown that Condition~\ref{clm_cong_iso} is indeed fulfilled unless there exist states with PLC symmetries for which not all isomorphy transformations can be represented by congruence transformations.

Finally, let us remark that in Ref.~\cite{En20} we analyzed the local symmetries of stabilizer states (where each party only holds one qubit). While working on the results presented here, we became aware of Ref.~\cite{Bo91} and \cite{Bo93}, which study equivalence of graphs under local complementation. In this context, Ref.~\cite{Bo91} also derives the local Clifford symmetries of graph states.

\end{document}